\pgfplotsset{compat=1.17}
\theoremstyle{definition}
\newtheorem{theorem}{Theorem}
\newtheorem{lemma}[theorem]{Lemma}
\newtheorem{corollary}[theorem]{Corollary}
\newtheorem{proposition}[theorem]{Proposition}
\theoremstyle{remark}
\newtheorem{example}{Example}
\newcommand{\preals}{\ensuremath{\mathbb{R}_{\geq 0}}\xspace}
\newcommand{\naturals}{\ensuremath{\mathbb{N}\xspace}}
\newcommand{\tvd}{\ensuremath{\text{TV}}\xspace}
\newcommand{\acc}{\ensuremath{\eta}\xspace}
\newcommand{\tmax}{\ensuremath{T_{\max}}\xspace}
\newcommand{\aefx}{\ensuremath{a\text{-EFX}}\xspace}
\newcommand{\praefx}{\ensuremath{a'\text{-EFX}}\xspace}
\newcommand{\aefo}{\ensuremath{a\text{-EF1}}\xspace}
\newcommand{\tila}{\ensuremath{\Tilde{a}}\xspace}
\providecommand{\ceil}[1]{\ensuremath{\left \lceil #1 \right \rceil }}
\providecommand{\floor}[1]{\ensuremath{\left \lfloor #1 \right \rfloor }}
\providecommand{\TV}[2]{\ensuremath{\tvd \left( #1, #2 \right) }}
\newcommand{\oset}[1]{\ensuremath{\leftindex_{1} {#1}}}
\newcommand{\dm}{\ensuremath{d_{\max}}}
\newcommand{\dmax}{\ensuremath{\frac{(4 + a - a^2)(1 - a)}{(2 + a)(5 - a)(1 + a)}}}
\newcommand{\dmaxformone}{\ensuremath{\frac{2}{3} \cdot \frac{1 - a}{2 + a}}}
\newcommand{\dmaxbin}{\ensuremath{\frac{2}{5}\cdot \frac{1 - a}{1 + a}}}
\newcommand{\dmaxthree}{\ensuremath{\frac{1 - a}{1+a}}}
\newcommand{\pUB}{\ensuremath{\frac{1+a}{1-a} \cdot \dm}}
\newcommand{\pAoneUB}{\ensuremath{\frac{4 + a - a^2}{(2 + a)(5 - a)}}}
\newcommand{\thres}{\ensuremath{\frac{(1 - a)^2}{(2 + a)(5 - a)}}}
\newcommand{\incr}{\ensuremath{\frac{(3 + a)(1 - a)}{(2 + a)(5 - a)(1 + a)}}}
\newcommand{\xset}[1]{\ensuremath{\leftindex_{\text{x}} {#1}}}
\title{Online EFX Allocations with Predictions\\ \text{}}
\author{Themistoklis Melissourgos\\
        \small{University of Essex, UK}\\
		\small{\texttt{themistoklis.melissourgos@essex.ac.uk}}
        \and
        Nicos Protopapas\\
        \small{Archimedes, Athena Research Center, Greece}\\
        \small{\texttt{n.protopapas@athenarc.gr}}
        }
\date{\vspace{0.5cm}}
\begin{document}
\maketitle

\begin{abstract}
We study an online fair division problem where a fixed number of goods arrive sequentially and must be allocated to a given set of agents. Once a good arrives, its true value for each agent is revealed, and it has to be immediately and irrevocably allocated to some agent. The ultimate goal is to ensure \emph{envy-freeness up to any good (EFX)} after all goods have been allocated. Unfortunately, as we show, approximate EFX allocations are unattainable in general, even under restrictive assumptions on the valuation functions.

To address this, we follow a recent and fruitful trend of \emph{augmenting algorithms with predictions}. Specifically, we assume access to a prediction vector estimating the agents’ true valuations -- e.g., generated by a machine learning model trained on past data. Predictions may be unreliable, and we measure their error using the total variation distance from the true valuations, that is, the percentage of predicted value-mass that disagrees with the true values.

Focusing on the natural class of additive valuations, we prove impossibility results even on approximate EFX allocations for algorithms that either ignore predictions or rely solely on them. We then turn to algorithms that use both the predictions and the true values and show strong lower bounds on the prediction accuracy that is required by any algorithm to compute an approximate EFX. These negative results persist even under identical valuations, contrary to the offline setting where exact EFX allocations always exist without the necessity of predictions. We then present an algorithm for two agents with identical valuations that uses effectively the predictions and the true values. The algorithm approximates EFX, with its guarantees improving as the accuracy of the predictions increases.
\end{abstract}

% \tableofcontents

\section{Introduction}

\emph{Envy-freeness (EF)} is one of the most natural and well-studied fair division notions. In the typical setting, there is a common set of resources that needs to be allocated among a given set of agents, each having a personal valuation function for the resources. In the case of divisible resources, the agents can be allocated an arbitrary portion of each resource, and it is known that EF allocations always exist even when restrictions are imposed on the solution~\cite{Stromquist80}. On the other hand, when only \emph{indivisible resources} are available, each of them has to be allocated intact to some agent, and it is easy to show that even approximate EF allocations are not guaranteed to exist for any positive approximation factor: consider the simple case where a single good has to be allocated between two agents that value it more than getting nothing. 

This non-existence has sparked a fruitful line of work that, over the last 15 years, has proposed several envy-freeness relaxations as close to exact envy-freeness as possible, aiming to prove they always exist. The most prominent of those relaxations is the notion of \emph{envy-freeness up to any good (EFX)}, which allows agents to envy others as long as the envy is eliminated with the virtual removal of a good with the lowest marginal value from the envied agent's bundle. A more relaxed notion is that of \emph{envy-freeness up to one good (EF1)}, where envy can be eliminated by the virtual removal of a good with the highest marginal value from the envied agent's bundle. The EF1 notion was formally defined by~\cite{budish2011combinatorial} and~\cite{lipton2004approximately} under a different name, and it 
 was proven in the same works that it always exists for increasingly broader classes of valuation functions. The stronger notion of EFX was defined independently by~\cite{caragiannis2019unreasonable} and~\cite{gourves2014near} and besides specific special cases, most notably for up to three agents~\cite{CGM24}, a proof of existence is elusive. In particular, even though it is widely conjectured that EFX allocations always exist for any number of agents in the broad class of monotone valuation functions, this has not been refuted so far, and it remains open whether EFX allocations exist even for four agents with additive valuations. Further results are discussed in the excellent survey of~\cite{amanatidis2023fair}.

To amend the efforts toward a proof of EFX existence, the notion of \emph{approximate envy-freeness up to any good (\aefx)} was proposed in~\cite{plaut2020almost}. An allocation is \aefx for some given $a \in [0,1]$ if, for every agent, her allocated bundle's valuation is no smaller than $a$ times that of any other agent's bundle when the least-valued good is removed from the latter bundle. It is known from~\cite{amanatidis2020multiple}  that $(\varphi - 1)$-EFX allocations always exist for any number $n$ of agents with additive valuations. This approximation factor was recently improved to $2/3$ for the case $n \leq 7$ in~\cite{AFS24}. 

Beyond classical offline settings, some real-life scenarios involve goods arriving over time and requiring immediate, irrevocable allocation to agents, with fairness guaranteed only after the final good is allocated. If all values were known in advance, such problems could be reduced to the offline case; but without this information, computing a fair allocation becomes significantly more challenging.

A typical example of the \emph{online} setting would be the following. A foodbank collects surplus or near-expiry food \emph{packets} daily from supermarkets and charities, and distributes them to \emph{endpoints} like community kitchens, schools, or shelters. Each packet contains a mix of products from a source, but its exact contents are unknown to the foodbank and the endpoints prior to its arrival. Daily, multiple packets arrive asynchronously at the foodbank, but for ease of presentation, let us consider a single packet arriving each day. The packets are also perishable, so they must be inspected, categorized, and \emph{assigned to an endpoint immediately}, i.e., the same day. The common practice currently among foodbanks is that allocation decisions consider mostly the urgency, relevance, proximity and capacity of endpoints. This fast and complex decision-making can cause significant envy between endpoints, especially as foodbanks typically do not share data, requiring endpoints to trust the process blindly~\cite{AKKERMAN2023108926}.

To address this, we propose a transparent online fair division model where all packet contents and assignments are made visible to all endpoints.\footnote{This can be done anonymously to respect privacy.} The central question now becomes: \emph{Can fairness be ensured in such a transparent system?} We aim to design algorithms that, over a specified time period, produce \aefx allocations for the highest possible value of $a \in [0,1]$. In our model, endpoints are agents with personal valuations. At the start of the month, the foodbank gives all endpoints an estimate of the total contents expected over 30 days. As each packet arrives and is inspected, each endpoint reports its \emph{relative value} as a fraction of the expected month's total, and the foodbank allocates it based on these values.

This simplified, transparent setup removes much of the aforementioned decision complexity, as the only input required from endpoints now is the current relative valuation. However, a key challenge remains: even if the total set of products is known, their distribution across packets is not. We show that this uncertainty prevents any algorithm from guaranteeing \aefx allocations for any positive $a$. To help the algorithm of the foodbank achieve approximate EFX allocations, we allow \emph{predictions} in the model, following the recently introduced paradigm of \emph{algorithmic design with predictions}~\cite{LykourisV21}. At the start of the month, the foodbank provides estimated contents for each future packet (e.g., derived via machine learning). Endpoints respond with their estimated relative value for the packets, i.e., a vector of predicted values. Then, as each packet arrives, the foodbank reveals its true contents to the endpoints and asks for their true relative value (which might differ from the respective prediction). Finally, the foodbank runs an \aefx algorithm which prescribes who will receive the current packet. The foodbank has a contract with its endpoints, guaranteeing that on day 30, the allocation of packets that have been arriving since day 1 will be an \aefx allocation for some promised value of $a \in [0,1]$. 
% For certain families of instances, we show what the approximation limitations of algorithms without predictions are, and design algorithms that leverage predictions to guarantee \aefx allocations for any $a \in [0,1]$.

We focus on additive valuation functions, one for each agent $i \in N$, over a finite set of goods (one for each time-step). The metric we use to capture the distance between the prediction vector $p_i$ and the vector of true values $v_i$ is the \emph{total variation distance (\tvd distance)}. In order to have a meaningful, uniform magnitude of distance, we consider normalized valuations, where the empty set has value 0 and the set of all goods has value 1 to all agents. Therefore, $p_i$ and $v_i$ can be thought of as probability distributions, and their \tvd distance (also called \emph{error}) $d_i$ captures the total unsuccessfully predicted value as a fraction of the entire value that will arrive over time. The quantity $\eta_i = 1- d_i$ is the prediction accuracy of agent $i$, and we consider also the worst guarantees over agents, that is, $D := \max_{i \in N} d_i$, and $\eta := 1 - D$. The goods arrive over time, and their true value appears to the agents only upon their arrival. An algorithm takes as input the predictions $(p_i)_{i \in N}$ and the accuracy levels $(\eta_i)_{i \in N}$ and at each time-step $t = 1,\dots,T$, the true valuation $v_{i}(g_t)$ of agent $i$ for good $g_t$ is revealed to the algorithm; then the algorithm has to irrevocably allocate the good to an agent. The objective is, for a given level of accuracy $\eta$, to design an algorithm that computes an \aefx with the maximum possible $a \in [0,1]$.

\subsection{Our results}

We pose the question: \emph{How does the quality (approximation factor $a$) of \aefx allocations depend on the prediction accuracy of the agents?} We consider additive valuations and provide bounds on the prediction accuracy (or equivalently, on the allowed prediction error) as a function of $a \in [0,1]$.
% , the target approximation factor of an \aefx allocation. 
In \cref{sec:alg_without_predictions}, we explore the limitations of algorithms that do not have per-item predictions (but know the value of the whole set of goods for each agent, i.e., their valuations are normalized). Our main results show that for two agents with identical valuations, one can achieve $(\varphi - 1)$-EFX using a simple threshold-based algorithm, while no \aefx algorithm without predictions exists for any $a \in (\varphi - 1, 1]$. When the valuations are not restricted to be identical, the latter impossibility result holds for any $a \in (0, 1]$. Therefore, to achieve any improvement on $a$ we turn our focus to algorithms with predictions in \cref{sec:alg_with_predictions}. 

We start in \cref{sec:alg_with_predictions_entirely} by exploring the limitations of the other extreme, i.e., algorithms that rely entirely on predictions, disregarding the true values. We show that for $n \geq 2$ agents with accuracy at least $1 - \frac{\tila - a}{(2n-2+\tila)(1+a)}$ for some given $a, \tila \in \left[ 0, 1 \right]$ with $a \leq \tila$, if an algorithm can compute an $\tila$-EFX allocation on the predicted values, then a slight modification of it can compute an \aefx allocation on the true values. We show that when $\tila = 1$, this accuracy is also necessary (the bound is tight) among algorithms oblivious to the true values. Next, in \cref{sec:alg_with_predictions_and_true}, we study the limitations of algorithms that use both predictions and true valuations, and show lower bounds on the level of accuracy as a function of the desired $a$. In particular, we show that for 2 agents, accuracy of $1 - \frac{1-a}{\min\{ 6a, 4 \}}$ is needed by any \aefx algorithm for $a \in \left( \frac{1}{2}, 1 \right]$. This bound slightly improves when the agents have identical valuations, where the necessary accuracy of an \aefx algorithm becomes $1 - \frac{1-a}{\min\{2a(2+a), 4\}}$ for any $a \in (\varphi - 1, 1]$. We show similarly strong bounds for $n \geq 3$ agents with identical valuations, even for $a \in (0, 1]$. All our impossibility results are proven using $k$-value predictions (i.e., vectors with only $k \in \naturals$ distinct values) for small $k$, making them particularly strong. 

Finally, we attempt to bridge the gap between the lower $\left( 1 - \frac{1-a}{\min\{2a(2+a), 4\}} \right)$ and the upper bound $\left( 1 - \frac{1 - a}{3(1+a)} \right)$ on the accuracy for 2 agents with identical valuations. We manage to do so with \cref{alg:id_2_ag} which guarantees an \aefx allocation for any $a \in (\varphi - 1, 1]$ while using predictions of accuracy $1 - \dmax$. Furthermore, we use this algorithm to derive an improved upper bound of $1 - \frac{2}{5} \frac{1-a}{1+a}$ when predictions are 2-value functions, and complement it with a lower bound of $1 - \frac{1-a}{2}$.

\cref{fig:2_agents_identical} illustrates our most important results, namely the bounds derived on the maximum prediction error $D$ as a function of $a \in [0, 1]$ for the case of 2 agents with additive, identical valuations. The red solid plot visualizes the error sufficient for \cref{alg:id_2_ag} (our main positive result, \cref{thm:with_predictions_2_ag_ternary_a-EFX_positive}) to derive an \aefx  allocation for $a \in (\varphi-1, 1]$ (which becomes 1 for $a \in [0,\varphi-1]$ due to \cref{thm:no_predictions_phi-EFX_upper_bound}). The blue dash-dot plot shows the (tight) bound on the error of algorithms without predictions (\cref{cor:with_predictions_n_ag_a-EFX_positive}), and the green dashed plot shows the highest error that any algorithm with predictions can have, due to \cref{thm:with_predictions_comb-beyond-phi-EFX_inapprox_bound}.

 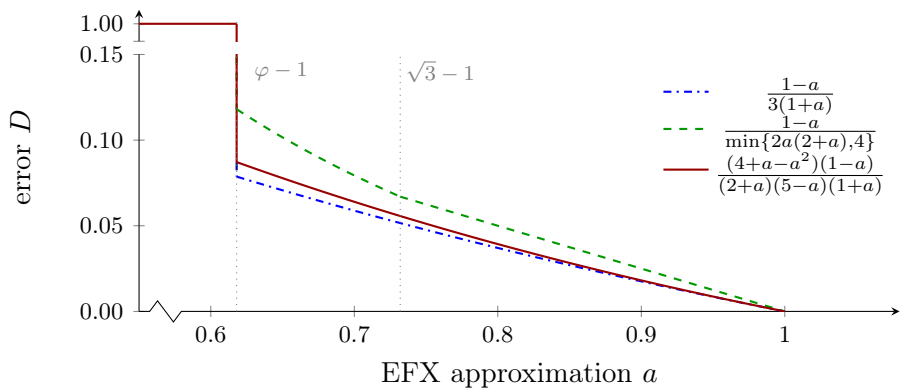
\begin{figure}[H]
     \centering
     \begin{tikzpicture}
\begin{groupplot}[
  group style={
    group name=broken,
    group size=1 by 2,
    vertical sep=5pt
  },
  tick label style={font=\footnotesize},
  width=0.75*\textwidth,
  xmin=0.55, xmax=1.08,
  xtick={0.6, 0.7, 0.8, 0.9, 1},
  xticklabels={0.6, 0.7, 0.8, 0.9, 1},
  axis x discontinuity=crunch,
  domain=0.6:1.1,
  samples=200,
  scaled y ticks=false,
  scaled x ticks=false,
  tick label style={
    /pgf/number format/fixed,
    /pgf/number format/precision=2,
    /pgf/number format/fixed zerofill
  }
]

% === TOP PLOT: constant value ===
\nextgroupplot[
  height=2cm,
  ymin=0.9, ymax=1.08,
  ytick={0.9,1},
  yticklabels={,1.00},
  xticklabels={},
  y axis line style={-}, 
  axis lines=left,
  axis x line=none,
]

% Constant plot on [0, phi - 1]
%\addplot[green!60!black, dashed, thick, domain=0:0.618] {1};
\addplot[red!60!black, thick, domain=0:0.618] {1};

% Vertical line at phi - 1 (top half)
%\addplot[green!60!black, thick, dashed] coordinates {(0.618, 1.0) (0.618, 0.118)};
\addplot[red!60!black, thick] coordinates {(0.618, 1.0) (0.618, 0.118)};

% Vertical line at phi - 1 (top half)
\addplot[dotted, gray] coordinates {(0.618, 1.01) (0.618, 0.98)};

% === BOTTOM PLOT: your original ===
\nextgroupplot[
  height=5cm,
  ymin=0, ymax=0.15,
  axis lines=left,
  y axis line style={-}, 
  % remove arrow from y-axis
  ytick={0, 0.05, 0.1, 0.15},
  xlabel={EFX approximation $a$},
  ylabel style={at={(axis description cs:-0.13,0.6)}},
  ylabel={error $D$},
  legend style={
    at={(1,0.95)},
    anchor=north east,
    draw=none,
    fill=none,
    font=\small
  }
]

% === MANUAL LEGEND ENTRIES FIRST ===
\addlegendimage{line legend, blue, thick, dash dot}
\addlegendentry{$\frac{1 - a}{3(1 + a)}$}

\addlegendimage{line legend, green!60!black, thick, dashed}
\addlegendentry{$\frac{1 - a}{\min\{2a(2 + a), 4\}}$}

\addlegendimage{line legend, red!60!black, thick}
\addlegendentry{$\frac{(4 + a - a^2)(1-a)}{(2 + a)(5-a)(1+a)}$}

% === ACTUAL PLOTS ===

%vertical Green
\addplot[green!60!black, thick, dashed] coordinates {(0.618, 1.0) (0.618, 0.118)};

%vertical bottom red
\addplot[red!60!black, thick] coordinates {(0.618, 1) (0.618, 0.087)};

%vertical bottom red
\addplot[blue!60!black, dash dot, thick] coordinates {(0.618, 0.087) (0.618, 0.078)};

\addplot[blue, thick, dash dot, domain=0.618:1] 
  {(1 - x)/(3 * (1 + x))};

\addplot[green!60!black, thick, dashed, domain=0.618:0.732] 
  {(1 - x)/(2 * x * (2 + x))};
\addplot[green!60!black, dashed, thick, domain=0.732:1] 
  {(1 - x)/4};

\addplot[red!60!black, thick, domain=0.618:1]
  {((4+x-x*x)*(1-x))/((2+x)*(5-x)*(1+x))};

% === VERTICAL LINES ===
\addplot[dotted, gray] coordinates {(0.618, 0) (0.618, 0.15)};
\node[gray] at (axis cs:0.649,0.14) {\scriptsize$\varphi - 1$};

\addplot[dotted, gray] coordinates {(0.732, 0) (0.732, 0.15)};
\node[gray] at (axis cs:0.76,0.14) {\scriptsize$\sqrt{3} - 1$};

\end{groupplot}
\end{tikzpicture}    
     \caption{Prediction error $D=1-\eta$ as a function of $a \in [0, 1]$ for 2 agents with identical valuations. Dash-dot plot: \cref{cor:with_predictions_n_ag_a-EFX_positive}, dashed plot: \cref{thm:with_predictions_comb-beyond-phi-EFX_inapprox_bound}, solid plot: \cref{thm:no_predictions_phi-EFX_upper_bound} and \cref{thm:with_predictions_2_ag_ternary_a-EFX_positive}.}
     \label{fig:2_agents_identical}
 \end{figure}

\subsection{Further related work}

Early work on online fair division was initiated by~\cite{AleksandrovAGW15}, who analyzed two simple randomized mechanisms with respect to ex-ante and ex-post envy-freeness. One of the mechanisms was shown to satisfy ex-ante envy-freeness, but under general additive utilities, neither could guarantee any bounded form of ex-post envy-freeness. A broader survey of early work in this area, with a focus on randomized mechanisms and the above fairness notions, can be found in~\cite{AleksandrovW20}. Later on, \cite{benade2024fair} studied online allocation with a focus on minimizing envy over time while ensuring Pareto efficiency. They analyze trade-offs under varying levels of adversarial input—from worst-case to stochastic settings—and show that while no algorithm can simultaneously guarantee strong fairness and efficiency in the worst case, both objectives can be approximately achieved when item values follow known distributions.

A line of work starting by \cite{he2019achieving} (this particular work also allows limited re-allocations) relaxed the strong informational barrier on online models, by allowing algorithms to peak to the future. Peaking into the future, for full or partial information, has been used in further works~\cite{amanatidis2025online,neoh2025online} including the recent framework of \emph{temporal fair division}~\cite{cookson2025temporal,elkind2024temporal}. Under the last paradigm, the algorithm has full information of future arrivals, but the fairness guarantees need to hold up to each time step. Such requirements however are far reaching for EFX allocation, as it is already shown by~\cite{elkind2024temporal}.  

Other notable works on online fair division focus on different fairness guarantees, such as maximin share, EF1, or Nash welfare \cite{zhou2023multi,Song2025,wang2025online}. A distinct line of work considers the setting where agents, rather than goods, arrive online~\cite{kulkarni2025online,kash2014no}. Finally, we note that several works have considered online allocation of divisible resources~\cite{Gkatzelis_Psomas_Tan_2021,BanerjeeGHJM023,banerjee2022online}. %However, divisible resources have been proven to be much more attainable...
%Of special interest to this work is~\cite{Gkatzelis_Psomas_Tan_2021}...

A middle ground between full information for the future and no information at all is provided by the recently proposed framework of algorithmic design with predictive advice~\cite{LykourisV21,MitzenmacherV22}. Under this paradigm, the input of the algorithm comes also with some possibly unreliable prediction (e.g., about future inputs when we speak about online algorithms). An updating list of relevant papers is hosted in~\cite{lindermayr2024alps}. The idea has been used to augment fair division, though in markedly different settings. The work of \cite{banerjee2022online} uses prediction in an online fair division problem, where the goal is to allocate divisible resources in order to maximize the Nash welfare. \cite{BanerjeeGHJM023} use predictions on an online public goods setting, where goods appear online and a fraction of some budget is allocated to them. Finally, the work of~\cite{BSC23} explores the allocation of perishable goods with the help of predictions.

\section{Preliminaries}\label{sec:prelim}

For a positive integer $k$, we denote by $[k]$ the set $\{ 1, 2, \dots, k \}$. Given a non-empty bundle $A$ and a valuation function $f$, we denote $\xset{A}^f := A \setminus \{ g \}$, where $g \in \arg \max_{g' \in A} f(A \setminus \{g'\})$, while if $A = \emptyset$, then $\xset{A}^f := \emptyset$. Similarly, we denote $\oset{A}^f := A \setminus \{ g \}$, where $g \in \arg \min_{g' \in A} f(A \setminus \{g'\})$, and if $A = \emptyset$, then $\oset{A}^f := \emptyset$. For notation simplicity, instead of $f(\xset{A}^f)$ we write $f(\xset{A})$, and we omit the superscript when we have identical valuation functions, i.e., instead of $\xset{A}^f$ we write $\xset{A}$. The same notational simplification also applies to $\oset{A}^f$. We also simplify the notation when we refer to the valuation of a single good, and write $f(g)$ instead of $f(\{ g \})$. Function $f$ will be called \emph{$k$-value} if its codomain has cardinality at most $k$. In some of our results, we will use the golden ratio which we denote by $\varphi := \frac{1 + \sqrt{5}}{2} \approx 1.618$.

\paragraph{The model} 
We consider a set $N = [n]$ of $n \geq 2$ agents, and a set $M = \{ g_1, g_2, \dots, g_T \}$ of $T \geq 1$ goods. As it is common in fair division of indivisible goods, we will consider \emph{monotone} valuation functions. A valuation function $f : 2^M \to \preals$ is monotone if for any two sets $A, B \subseteq M$, we have $f(A \cup B) \geq f(A)$. We will especially focus on a natural subclass of monotone valuation functions, namely that of \emph{additive} valuations. A valuation function $f: M \to \preals$ is called additive if for any $A \subseteq M$, it holds that $f(A) = \sum_{g \in A} f(g)$. Importantly, we consider \emph{normalized} valuation functions, that is, $f(\emptyset) = 0$ and $f(M) = 1$. The normalization condition is crucial for our setting with predictions, a fact that will become apparent shortly. 

The setting is online, and involves discrete time-steps with a finite (true) horizon $T \geq 1$. Each agent $i$ first receives a \emph{prediction}, that is, a vector $p_i = (p_i(g_1), p_i(g_2), \dots, p_i(g_{T'}))$, where $T' \geq 1$ is the predicted horizon, and $p_i : M \to \preals$ is an additive, normalized valuation function, i.e., $p_{i}(g_t) \geq 0$ for all $t \in [T']$, and $\sum_{t \in [T']} p_{i}(g_{t}) = 1$. Then, at each time $t = 1, 2, \dots, T$, a single good $g_t$ arrives and has to be allocated \emph{irrevocably} to some agent. Each agent $i \in [n]$ at time $t$ evaluates good $g_t$ according to his additive, normalized \emph{true} valuation function $v_i : M \to \preals$, in other words, a vector $v_i = (v_{i}(g_1), v_{i}(g_2), \dots, v_{i}(g_T))$, where $v_{i}(g_t) \geq 0$ for all $t \in [T]$, and $\sum_{t \in [T]} v_{i}(g_{t}) = 1$. In the special case of \emph{identical valuations} every agent has the same prediction $p = (p(g_t))_{t \in [T']}$, and the same true valuation $v = (v(g_t))_{t \in [T]}$.

As a metric to quantify the distance between the predicted values and the true values of the goods, we use the \emph{Total Variation distance (TV distance)}.\footnote{Due to the normalization condition, for any $i \in [n]$, vectors $p_i$ and $v_i$ can be treated as probability mass functions, and as such, this distance metric can be applied on them, preserving all its properties.} In particular, this distance is described by the function 
\begin{align*}
    \TV{p_i}{v_i} = \left\| p_i -  v_i  \right\|_{\tvd} := \frac{1}{2} \sum_{t \in [\tmax]} \left| p_{i}(g_t) - v_{i}(g_t) \right| ,
\end{align*} 
where $\tmax := \max\{ T, T' \}$, and if $T' < T$, then $p_{i}(g_t) = 0$ for $t \in \{ T'+1, T'+2, \dots, T \}$, while if $T' > 0$, then $v_{i}(g_t) = 0$ for $t \in \{ T+1, T+2, \dots, T' \}$.\footnote{Notice that if $T' < T$, then the agent sees goods that were not predicted to exist. In that case, she might have extra \tvd distance from the last $T-T'$ time-steps, calculated by adding $T-T'$ ``dummy'' time-steps in the prediction and setting those prediction values to $0$. Similarly, if $T' > T$, then the agent expects goods that never appear. In that case, to calculate the \tvd distance, she adds $T'-T$ ``dummy'' time-steps and sets their true values to $0$.}
For ease of presentation, sometimes we refer to this distance as \emph{error}, and we denote it by $d_i := \TV{p_i}{v_i}$, and $D := \max_{i \in [n]} d_i$. Each agent $i$, has \emph{accuracy} $\acc_{i} := 1 - d_i$, which measures the fraction of the total value that is guaranteed to be predicted correctly by the agent. Therefore, if the accuracy is $100\%$, then the error is $0\%$, and the agent is capable of perfect predictions. In the other extreme where the accuracy is $0\%$, the error is $100\%$, and essentially this is equivalent to having no access to predictions.

\paragraph{Envy-freeness and some relaxed notions}
An \emph{allocation} $A = (A_1, A_2, \dots, A_n)$ of a set $M$ of goods to $n$ agents is a partition of $M$, where \emph{bundle} $A_i$ is allocated to agent $i \in [n]$. We say that agent $i$ \emph{envies} agent $j$ under her valuation $f_i$ if $f_{i}(A_i) < f_{i}(A_j)$, she \emph{envies up to any good (EFX-envies)} agent $j$ if $f_{i}(A_i) < f_{i}(\xset{A}_j)$, and she \emph{envies up to one good (EF1-envies)} agent $j$ if $f_{i}(A_i) < f_{i}(\oset{A}_j)$. An allocation is \emph{envy-free (EF)} if there is no envious agent, it is \emph{envy-free up to any good (EFX)} if there is no EFX-envious agent, and it is \emph{envy-free up to one good (EF1)} if there is no EF1-envious agent. The quantities $\max \{ f_{i}(A_j) - f_{i}(A_i) , 0 \}$, $\max \{ f_{i}(\xset{A}_j) - f_{i}(A_i) , 0 \}$, and $\max \{ f_{i}(\oset{A}_j) - f_{i}(A_i) , 0 \}$, are the \emph{envy}, \emph{EFX-envy}, and \emph{EF1-envy} of agent $i$ towards agent $j$, respectively. 

Apart from the exact versions of these fairness notions, we will study the approximate version of EFX and EF1. Under allocation $A$ and valuation $f_{i}$, agent $i$ \emph{$a$-envies up to any good (\aefx-envies)} agent $j$ if $f_{i}(A_i) < a \cdot f_{i}(\xset{A}_j)$ for some $a \in [0,1]$. Similarly, agent $i$ \emph{$a$-envies up to one good (\aefo-envies)} agent $j$ if $f_{i}(A_i) < a \cdot f_{i}(\oset{A}_j)$ for some $a \in [0,1]$. An allocation is \emph{$a$-approximately envy-free up to any good (\aefx)} or \emph{$a$-approximately envy-free up to one good (\aefo)} if there is no \aefx-envious agent or \aefo-envious agent, respectively. It is straightforward that a $1$-EFX or $1$-EF1 allocation is an (exact) EFX or EF1 allocation, respectively, and that any allocation is $0$-EFX and $0$-EF1. Throughout our results, it will be clear whether $f_{i}$ refers to the prediction $p_i$ or the true valuation $v_i$.

For some time-step $t$, we will also denote by $A^t := (A_1^t, A_2^t, \dots, A_n^t)$ the allocation (and the respective bundles to the agents) right after good $g_t$ gets allocated to some agent.

\section{Algorithms without Predictions}\label{sec:alg_without_predictions}

As a warm-up, we demonstrate what qualities of solutions are achievable without the use of predictions, and what qualities are impossible without predictions. 

We first consider the easier of the two relaxed envy-freeness notions, EF1, for general, identical valuations, without the use of predictions. It turns out that Theorem 3.7 of \cite{elkind2024temporal} intended for the temporal EF1 setting, can be applied in ours when we have $n \geq 2$ agents with monotone, identical valuations, without predictions, and even without normalization. This result shows that, for any true horizon $T$, the algorithm that allocates each arriving good to the agent with the lowest-valued bundle, provides at any given time $t \in [T]$ an exact EF1 allocation.

Recently, \cite{neoh2025online} provided an algorithm that computes an exact EF1 allocation for $2$ agents with normalized valuations that are not constrained to be identical, and they also showed that for $n \geq 3$ agents, no algorithm without predictions can achieve an \aefx for any $a \in (0, 1]$. In fact, concurrent with the drafting of our paper, the aforementioned manuscript appeared, and turned out to overlap with ours at the following results: \cref{thm:no_predictions_phi-EFX_inapprox_bound}, \cref{thm:no_predictions_phi-EFX_upper_bound}, \cref{thm:no_predictions_3_ag_0-EFX_inapprox}, and \cref{thm:no_predictions_non-id_2_ag_0-EFX_inapprox}. Since most of our proofs are different from theirs, we present them here for completeness.

Switching to the stronger relaxation of envy-freeness, namely EFX, even for two agents with additive, identical valuations satisfying the normalization condition, no algorithm can achieve an \aefx for $a \in (\varphi-1, 1]$ without access to predictions. 

\begin{theorem}\label{thm:no_predictions_phi-EFX_inapprox_bound}
    Suppose we have $2$ agents with additive, identical, normalized valuations, with no access to predictions. For any given $a \in (\varphi - 1, 1]$, there is no algorithm that guarantees an \aefx allocation, even if the time horizon is known.
\end{theorem}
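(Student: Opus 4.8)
The plan is to construct an adversarial instance where the algorithm, having no per-good predictions, is forced into a bad decision early on, and then the remaining goods are chosen (by the adversary) to punish whichever choice was made. Since the valuations are identical and normalized, the online algorithm at each step only sees the true value of the current good and must commit it to one of the two agents. The adversary reveals goods one at a time, and crucially gets to decide the future values \emph{after} seeing the algorithm's allocation decisions, subject only to the global normalization $\sum_t v(g_t) = 1$.

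First I would set up a small family of instances parameterized so that the first good $g_1$ has some value $x$, and then branch on what the algorithm does. The key tension in \aefx for two agents with identical valuations is the following: if one agent ends up with a bundle $A_1$ and the other with $A_2$, then to avoid \aefx-envy we need $v(A_1) \geq a\cdot v(\xset{A}_2)$ and $v(A_2) \geq a \cdot v(\xset{A}_1)$. The worst case for EFX arises when a bundle contains a single large good: removing the least valuable good from a singleton empties it, so a singleton of large value is ``harmless'' to its owner as the envied party but dangerous if the owner is the envier. The plan is to force the algorithm to create a singleton bundle of moderately large value and then flood the other agent with a second large good, so that the singleton-owner \aefx-envies the other. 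Concretely, I expect the extremal instance to use only two or three distinct values (a small-$k$ construction, consistent with the paper's emphasis), with the threshold emerging from the quadratic that defines $\varphi - 1$.

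The concrete approach: reveal $g_1$ of value $x$. Without loss of generality, the algorithm gives it to agent $1$. Then reveal $g_2$ of value $y$. If the algorithm gives $g_2$ to agent $2$, I would close out the instance with a third good so that the two singleton-ish bundles create \aefx-envy; if it gives $g_2$ to agent $1$ (pooling), I would reveal a final good of value $1-x-y$ to agent $2$ and check the envy of agent $1$. In each branch one computes the resulting approximation ratio attainable by the \emph{adversary}, i.e. the largest $a$ for which the allocation can still be \aefx, and then optimizes the adversary's choices of $x,y$ to drive this ratio down. The golden-ratio bound should appear because balancing the two branches (the ``give to $2$'' branch versus the ``pool on $1$'' branch) yields an equation of the form $a^2 + a - 1 = 0$ or equivalently $\frac{1-x}{x} = a$ type relations, whose relevant root is $\varphi - 1$. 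I would verify that for any $a > \varphi - 1$ both branches fail, so no online choice survives.

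The main obstacle I anticipate is the case analysis and the bookkeeping of which good counts as the ``least valuable'' one removed in the $\xset{\cdot}$ operator for each bundle under each branch: since $\xset{A}_j$ removes the good whose removal leaves the \emph{largest} remaining value, getting the direction of this operator right (it is the EFX, not EF1, removal) is where sign errors creep in. I would need to be careful that the adversary's final good is allocated adversarially too, or argue that whichever agent receives it, \aefx-envy persists. A secondary subtlety is confirming that the construction respects normalization exactly (the values sum to $1$) and uses only nonnegative values, which constrains the feasible range of $x$ and $y$; I expect the binding constraints to pin down the extremal instance and reproduce the $\varphi - 1$ threshold cleanly. The matching achievability for $a \le \varphi - 1$ is presumably handled by the threshold-based algorithm mentioned in the results overview, so here I only need the impossibility direction.
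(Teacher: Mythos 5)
There is a genuine gap: your construction uses a constant number of goods (a first good of adversary-chosen value $x$, a second of value $y$, and a one- or two-good close-out), and no such instance can push the bound down to $\varphi-1$. The problem is structural. In your pooling branch with a single closing good $1-x-y$, agent 1 holds $\{x,y\}$ and agent 2 a singleton, so agent 1 can have \emph{no} EFX-envy at all ($\xset{A_2}=\emptyset$); the only binding constraint is agent 2's, namely $1-x-y \geq a\max(x,y)$, and an algorithm that pools only while $x+y \leq \varphi-1$ satisfies it with ratio at least $\frac{2-\varphi}{\varphi-1} = \varphi-1$. In the branches where bundles contain two or three comparably sized goods, the $\xset{\cdot}$ operator removes a \emph{large} chunk, which is generous to the algorithm: e.g., take $x=y=\frac{2\varphi-3}{2}\approx 0.118$ pooled on agent 1 and two closing goods of value $2-\varphi\approx 0.382$ each; the algorithm plays one-each and ends with $\{x,y,2-\varphi\}$ versus $\{2-\varphi\}$, giving agent 2 the ratio $\frac{2-\varphi}{2\varphi-3+(2-\varphi)-\min(x,y)} \approx 0.764 > \varphi-1$, so the adversary fails for all $a\in(\varphi-1,0.76)$. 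Making $\min(x,y)$ tiny to fix this breaks your split branch instead (the singletons become so lopsided that placing the remaining mass next to the small one yields near-exact EFX). Optimizing over $x,y$ in your scheme therefore does not produce the quadratic $a^2+a-1=0$ at the claimed threshold; indeed, the paper's \cref{lem:3_goods} shows that with three goods and known horizon an \emph{exact} EFX is achievable online, confirming that constant-size instances are too weak.

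The missing idea is the paper's many-tiny-goods phase: the adversary feeds only goods of value $\varepsilon=\Theta(a-\varphi+1)$, so the algorithm — not the adversary — chooses, in $\varepsilon$-increments, the accumulated value $x$ of agent 1 at the first moment a good crosses to agent 2, and crucially agent 1's bundle then has minimum good $\varepsilon$, making $v(\xset{A_1})\approx v(A_1)$. The adversary punishes the crossing time: if the crossing happens early ($x \leq 2\varphi-3$), one big good of value $1-x-\varepsilon$ defeats both placements; if the algorithm pools past the threshold ($x \in (2\varphi-3, 2\varphi-3+\varepsilon]$ with agent 2 still empty), two goods of value $\frac{1-x}{2}$ defeat all placements, with both branches giving ratios tending to $\varphi-1$ as $\varepsilon\to 0$. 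Note this means $T = \Theta\bigl(1/(a-\varphi+1)\bigr)$ grows unboundedly as $a \downarrow \varphi-1$; your intuition that few \emph{distinct values} suffice is right (the paper uses two or three), but you conflated that with few \emph{goods}, and it is exactly the multiplicity of $\varepsilon$-goods that your proposal lacks and cannot do without.
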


\begin{proof}
    For the sake of contradiction, suppose there is an algorithm that guarantees an \aefx for some $a \in (\varphi - 1, 1]$. Consider a rational value $\lambda \in [0,  a - \varphi + 1)$. Let the adversary be giving to the algorithm goods of value $\varepsilon := \frac{\lambda}{4}$. Let the number of rounds be $T := \ceil{\frac{2\varphi - 3}{\varepsilon}} + 3$, known by the algorithm. W.l.o.g., the first good is allocated to agent 1. Now consider the time $t \geq 2$ when a good is allocated to agent 2. There are two cases:
    
    (i) If after that allocation, $v(A_1^t) =:x \leq 2 \varphi - 3$, then the adversary gives a good of value $1 - x - \varepsilon$ to the algorithm (and notice that $v(A_2^t) = \varepsilon$). If the latter good is allocated to agent 1, then this is an \praefx allocation with $a' \leq \frac{\varepsilon}{1 - 2 \varepsilon} = \frac{\lambda}{4 - 2 \lambda} < a$, by definition of $\lambda$. This is a contradiction. If the latter good is allocated to agent 2, then this is an \praefx allocation with $a' \leq \frac{x}{1 - x - \varepsilon} \leq \frac{2 \varphi - 3}{4 - 2 \varphi - \varepsilon} < \frac{2 \varphi - 3}{4 - 2 \varphi - (2 - \varphi)} = \frac{2 \varphi - 3}{2 - \varphi} = \varphi - 1 < a$, where the third inequality comes from the fact that $\varepsilon < \lambda \leq 2 - \varphi$. So in this case the algorithm cannot provide an \aefx allocation for $a > \varphi$.

    (ii) Otherwise, there exists a time $t'$ where $v(A_1^{t'}) =: x \in \left( 2 \varphi - 3, 2 \varphi - 3 + \varepsilon \right]$ and $v(A_2^{t'}) = 0$. Then, the adversary gives consecutively two goods, each of value $\frac{1-x}{2}$. If both are allocated to agent 1, then this is obviously a $0$-EFX, a contradiction. If both are allocated to agent 2, then agent 2 does not envy agent 1, but agent 1 has EFX-envy towards agent 2, and this \praefx allocation has $a' \leq \frac{x}{(1-x)/2} \leq \frac{2(2 \varphi - 3 + \varepsilon)}{4 - 2 \varphi - \varepsilon} \leq \frac{2(2 \varphi - 3) + \lambda/2}{4 - 2 \varphi - \lambda/4} < \varphi - 1 + \lambda < a$, where the second to last inequality comes from the fact that $\lambda < a - \varphi + 1 \leq 2 - \varphi$. So this is not an \aefx allocation. Finally, if one good is allocated to agent 1 and the other to agent 2, then agent 1 does not envy agent 2, but agent 2 has EFX-envy towards agent 1, and this \praefx allocation has $a' \leq \frac{(1-x)/2}{x + (1-x)/2 - \varepsilon} = \frac{1 - x}{1 + x - 2 \varepsilon} \leq \frac{4 - 2 \varphi}{2 \varphi - 2 - 2 \varepsilon} = \frac{4 - 2 \varphi}{2 \varphi - 2 - \lambda/2} < \varphi - 1 + \lambda < a$, where again the second to last inequality comes from the fact that $\lambda < 2 - \varphi$. So the algorithm in this case cannot produce an \aefx allocation. Therefore, for any fixed $a \in (\varphi - 1, 1]$, an \aefx allocation is impossible to guarantee.

    From the case analysis above, we see that the maximum number of rounds required is for case (ii), and it is $\ceil{\frac{2\varphi - 3 + \varepsilon}{\varepsilon}} + 2$. For all the cases that the algorithm might fall into, the adversary provides as many extra goods as needed to complete the known $T = \ceil{\frac{2\varphi - 3 + \varepsilon}{\varepsilon}} + 2$ rounds, where all these goods' values are set to $0$ and do not affect the bounds on the approximation ratios. 
\end{proof}

Moreover, there is a simple greedy algorithm that matches this bound on $a$ without even the need to access predictions.

\begin{theorem}\label{thm:no_predictions_phi-EFX_upper_bound}
    Suppose we have $2$ agents with additive, identical, normalized valuations, and without predictions. Consider the following algorithm: allocate each arriving good to agent 1, as long as the good will not make her exceed value $\varphi - 1$, otherwise allocate it to agent 2. This algorithm guarantees an allocation which is $(\varphi - 1)$-EFX.
\end{theorem}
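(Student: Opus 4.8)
The plan is to set $a := \varphi - 1$ and lean on its defining identity $a^2 + a = 1$, i.e. $a^2 = 1 - a$, from which $a^3 = a(1-a) = a - a^2 = 2a - 1$. Writing $x := v(A_1)$ and $y := v(A_2) = 1 - x$ for the final bundle values, the algorithm guarantees $x \le a$ by construction, and hence $y = 1 - x \ge 1 - a = a^2$. I would first dispense with agent $2$'s side of the EFX condition, which is the easy direction: since removing a good only lowers value, $v(\xset{A}_1) \le v(A_1) = x \le a$, so $a \cdot v(\xset{A}_1) \le a x \le a^2 = 1 - a \le 1 - x = v(A_2)$, where the second and the last inequalities both just use $x \le a$. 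Thus agent $2$ never $a$-EFX-envies agent $1$.

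The crux is agent $1$'s side, and the key structural observation is that every good placed in $A_2$ must be ``large''. Indeed, when a good $g$ is assigned to agent $2$ at some step $t$, this happens precisely because $v(A_1^{t-1}) + v(g) > a$; since agent $1$'s bundle value only increases over time we have $v(A_1^{t-1}) \le x$, and therefore $v(g) > a - x$ for every $g \in A_2$. I would then split on $|A_2|$. If $A_2$ is empty or a singleton, then $\xset{A}_2 = \emptyset$, so $v(\xset{A}_2) = 0$ and agent $1$'s requirement $v(A_1) \ge a \cdot v(\xset{A}_2)$ holds trivially.

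The remaining case $|A_2| \ge 2$ is where the golden-ratio threshold pays off, and it combines two consequences of the bound $v(g) > a - x$. Since $x \le a$ gives $a - x \ge 0$, summing over the (at least two) goods of $A_2$ yields $1 - x = v(A_2) > 2(a - x)$, hence $x > 2a - 1 = a - a^2$. On the other hand, letting $g^*$ be the minimum-value good of $A_2$ (the one removed to form $\xset{A}_2$), the same bound $v(g^*) > a - x$ gives
\[
a \cdot v(\xset{A}_2) = a\bigl(v(A_2) - v(g^*)\bigr) < a\bigl((1-x) - (a - x)\bigr) = a(1 - a) = a - a^2 .
\]
Chaining the two estimates, $a \cdot v(\xset{A}_2) < a - a^2 < x = v(A_1)$, so agent $1$ does not $a$-EFX-envy agent $2$ either, completing the argument.

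I expect the main obstacle to be precisely this agent $1$ direction, and specifically spotting the right invariant: a priori agent $1$ can be much poorer than agent $2$, so one fears unbounded EFX-envy. What rescues the bound is the tension that the poorer agent $1$ is (smaller $x$), the larger each rejected good in $A_2$ must be (every one exceeds $a - x$), which simultaneously limits how many goods $A_2$ can hold and forces the removed good $g^*$ to absorb enough of $A_2$'s mass. The value $a = \varphi - 1$ is exactly what makes the two resulting inequalities meet at the common threshold $a - a^2 = 2a - 1$. The only other point to handle carefully is the singleton case $|A_2| = 1$, where the summation bound is vacuous but $\xset{A}_2 = \emptyset$ settles it directly.
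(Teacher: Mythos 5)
Your proposal is correct and takes essentially the same route as the paper: the same key structural observation (every good placed in $A_2$ has value exceeding $\varphi - 1 - x$, since it was rejected by the threshold), the same split on $|A_2| \le 1$ versus $|A_2| \ge 2$, and the same golden-ratio algebra yielding $x > 2a - 1$ and $v(\xset{A}_2) < 1 - a$. The only difference is bookkeeping: you handle the two envy directions separately (disposing of agent 2 via $a \cdot v(\xset{A}_1) \le a^2 \le 1 - x$), whereas the paper cases on $x \ge y$ versus $x < y$, but both reduce to identical inequalities.
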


\begin{proof}
    When the algorithm terminates, in the resulting allocation $A$, agent 1 will have value $x := v(A_1) \leq \varphi - 1$, by the algorithm's definition. Agent 2 will have value $y := v(A_2) = 1-x \geq 2 - \varphi$. If $x \geq y$, then this \aefx allocation has $a \geq \frac{y}{x} \geq \frac{2-\varphi}{\varphi - 1} = \varphi - 1$. If $x < y$, then all goods of agent 2 have value strictly greater than $\varphi -1 - x$, otherwise, at least one of them would have been given to agent 1 (since together with the new good agent 1 would have value at most $x + (\varphi - 1 - x) = \varphi - 1$). If agent 2 has a single good, then agent 1 does not EFX-envy agent 2, and agent 2 does not envy agent 1, since $y > x$, so this is an exact EFX. If agent 2 has at least two goods, then each has value greater than $\varphi - 1 - x$ as argued earlier, so $1-x = y > 2(\varphi - 1 - x)$ which implies that $x > 2 \varphi - 3$. Then, in this \aefx allocation, $y > x$ so agent 2 does not envy agent 1, and also $v(\xset{A}_2) < y - (\varphi - 1 - x) = 1 - x - (\varphi - 1 - x)  = 2 - \varphi$. So, we have $a \geq \frac{2 \varphi - 3}{2 - \varphi} = \varphi - 1$. 
\end{proof}

By increasing the number of agents to three or more, the approximation of EFX becomes impossible without the use of predictions, even when their valuations are identical.

\begin{theorem}\label{thm:no_predictions_3_ag_0-EFX_inapprox}
    Suppose we have $n \geq 3$ agents with additive, identical, normalized valuations, and without predictions. For any given $a \in (0, 1]$, there is no algorithm that guarantees an \aefx allocation, even if the time horizon is known.
\end{theorem}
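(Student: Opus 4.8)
The plan is to construct, for any fixed $a\in(0,1]$, an adaptive adversary that forces a violation of \aefx. I fix the horizon in advance at $T=n+2$ (so the construction works even against an algorithm that knows $T$) and a sufficiently small $\delta>0$ (e.g.\ $\delta<\frac{a}{4+2a(n+1)}$). Since valuations are identical I may relabel agents freely after each step, and since an algorithm ``guarantees'' \aefx only if \emph{every} run is \aefx, I let the adaptive adversary observe the algorithm's placements. The whole argument rests on one observation: a bundle holding at least two goods of positive value has $v(\xset{A}_i)>0$, so any agent of value $0$ \aefx-envies it for every $a\in(0,1]$; more quantitatively, an agent of value $\delta$ \aefx-envies any bundle whose $\xset$-value exceeds $\delta/a$, which by the choice of $\delta$ holds as soon as that $\xset$-value is at least $\tfrac12(1-(n+1)\delta)$.

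First I would feed goods of value $\delta$ one at a time. The algorithm can keep every bundle to at most one good for at most $n$ goods, so no later than good $n+1$ some agent receives a second good; let $m\le n+1$ be the first time such a \emph{double} appears. At that moment one agent holds two $\delta$-goods (hence has $\xset$-value exactly $\delta$), $m-2$ agents hold a single $\delta$-good, the remaining $e:=n-m+1$ agents are empty, and the total allocated value is the tiny $m\delta$, leaving almost all the mass for the final good(s).

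I then finish according to $e$. If $e\ge 2$, I release a single good of value $1-m\delta$; it can occupy at most one empty agent, so an empty agent survives and \aefx-envies the double-agent, whose $\xset$-value remains at least $\delta>0$ (valid for any $a$). If $e=0$ (all $n$ agents already hold a good, i.e.\ $m=n+1$), I again release one good of value $1-m\delta$; its receiver necessarily already held a good, so its $\xset$-value is at least $1-m\delta$, while at least one single-$\delta$ agent survives and, by the choice of $\delta$, \aefx-envies it. The delicate case is $e=1$, where a single large good could be parked, as an un-enviable singleton, on the lone empty agent, thereby rescuing the algorithm. I defeat this by splitting the remaining mass into two goods $f_1=f_2=\tfrac12(1-m\delta)$: with only one empty agent available to host a singleton, at least one of $f_1,f_2$ must join an already-occupied bundle or land on an agent that already received the other big good, so some bundle ends with $\xset$-value at least $\tfrac12(1-m\delta)$; meanwhile a poor agent survives untouched -- a single-$\delta$ agent when $n\ge4$, and the double-agent (of value $2\delta$) when $n=3$ -- and \aefx-envies it. A short enumeration over the placements of $f_1,f_2$ confirms a violation in every branch.

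To respect the fixed horizon I pad the unused rounds with value-$0$ goods, exactly as at the end of the proof of \cref{thm:no_predictions_phi-EFX_inapprox_bound}: adding a $0$-good leaves every bundle's value unchanged and, being a minimum-value good, its presence can only raise an $\xset$-value, never lower it, so the violation identified above is preserved whatever the algorithm does with the padding. The main obstacle is precisely the case $e=1$: the only escape available to the algorithm is to isolate a large good as a lone, un-enviable singleton on the single empty agent, so the crux of the proof is to show that two large goods can never all be isolated this way, which is what the two-good split and its brief case analysis secure.
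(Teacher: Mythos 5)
Your proof is correct, but it takes a genuinely different route from the paper's. The paper forces the case split immediately: it sends exactly two $\varepsilon$-goods and branches on whether they land on the same agent. If they do, a single good of value $1-2\varepsilon$ plus zero-padding leaves an empty agent who $0$-EFX-envies the double; if they split, the adversary releases $n-1$ \emph{equal} large goods of value $\frac{1-2\varepsilon}{n-1}$ and a pigeonhole argument locates simultaneously a bundle with $\xset$-value $\frac{1-2\varepsilon}{n-1}$ and an agent of value at most $\varepsilon$, yielding $a' \leq \frac{\varepsilon(n-1)}{1-2\varepsilon} < a$, all within horizon $T = n+1$. You instead let the algorithm choose when the first collision among up to $n+1$ tiny $\delta$-goods occurs, and then close with \emph{at most two} large goods, parameterizing by the number $e$ of empty agents; the only delicate configuration, $e=1$, is defeated by splitting the residual mass into two halves so that the lone empty agent cannot absorb both as un-enviable singletons, at the cost of horizon $T=n+2$. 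Both are adaptive adversaries producing (at most) $3$-value instances with comparable quantitative thresholds ($\delta < \frac{a}{4+2a(n+1)}$ for you versus $\varepsilon = \frac{a}{3(n-1)}$ in the paper); the paper's version is shorter because the branch point comes after two goods and the $(n-1)$-way pigeonhole handles the split case in one stroke, whereas yours trades the pigeonhole over many large goods for a finer but self-contained endgame analysis on $e$. One small imprecision: your identification of the surviving poor agent in the $e=1$ case ("a single-$\delta$ agent when $n \geq 4$") fails in the branch where $n=4$ and the two halves land on the only two single-$\delta$ agents -- there the survivor is the empty agent (value $0$, giving an outright $0$-EFX violation) or the double-agent -- but your deferred "enumeration over the placements of $f_1, f_2$" does produce a violation in every branch, and your padding argument (a $0$-good never decreases an $\xset$-value and never changes a bundle's total) correctly preserves it, so the proof stands as written.
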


\begin{proof}
    Consider an algorithm that knows the true horizon $T = n+1$ and does not use predictions. Let the first two goods that the adversary gives to the algorithm have value $\varepsilon := \frac{a}{3 (n-1)}$. There are two cases: 
    
    (i) If the algorithm allocates the two goods to the same agent, say $i$, then the adversary provides a third good of value $1 - 2 \varepsilon$, and another $n-2$ goods of value 0.\footnote{The latter $n-2$ goods of value $0$ are given so that the adversary fulfils the promise that $T=n+1$, and do not affect the bounds on the EFX approximation ratio.} Regardless of where the latter $n-1$ goods are allocated, there is an agent with value $0$ who \aefx-envies agent $i$ for any $a > 0$, so in this case, the algorithm does not guarantee an \aefx allocation. 
    
    (ii) If the algorithm allocates the first two goods to two agents, say $i, j$, then the adversary provides $n-1$ goods of value $\frac{1-2\varepsilon}{n-1}$. Then, by the pigeonhole principle, at least one of the agents will have at least two goods (one of which has value $\frac{1-2\varepsilon}{n-1}$), and at least one of them will have value at most $\varepsilon$. Therefore, one of the latter agents is \praefx free for $a' \leq \frac{\varepsilon}{(1-2\varepsilon)/(n-1)} < \frac{\varepsilon}{1/(2 (n-1))} = 2 \varepsilon (n-1) < a$, where the second inequality comes from the fact that $1 - 2 \varepsilon = 1 - \frac{2a}{3 (n-1)} > 1/2$ for any $a \leq 1$ and $n \geq 3$. Therefore, in this case too, the algorithm cannot give an \aefx allocation.
\end{proof}

When we alleviate the ``identical valuations'' restriction, the approximation of EFX without predictions becomes impossible, even for two agents.

\begin{theorem}\label{thm:no_predictions_non-id_2_ag_0-EFX_inapprox}
    Suppose we have 2 agents with additive, normalized valuations, without predictions. For any given $a \in (0, 1]$, there is no algorithm that guarantees an \aefx allocation, even if the time horizon is known.
\end{theorem}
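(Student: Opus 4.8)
The plan is to construct an \emph{adaptive} adversary that reveals the goods one at a time, choosing each good's pair of values (one per agent) as a function of the algorithm's previous irrevocable assignments, and then to argue by a case analysis on those assignments that the adversary can always steer the run into a final allocation that is \praefx only for some $a' < a$. Adaptivity is not a convenience but a necessity: for two agents an (exact) EFX allocation always exists offline, so every \emph{fixed} instance is solvable, and hence no non-adaptive instance can defeat all algorithms. The adversary must therefore react to the online commitments and exploit their irrevocability.

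The core gadget uses the freedom — absent under identical valuations — to make one good worth very different amounts to the two agents. Reveal a good $g$ with $v_1(g)=1-\varepsilon$ and $v_2(g)=\varepsilon$ for small $\varepsilon$. If the algorithm assigns $g$ to the agent who barely values it (agent~2), then agent~1 becomes \emph{doomed} (all remaining goods are worth at most $\varepsilon$ to her in total), while agent~2 is left \emph{unsatisfied} (holding value only $\varepsilon$). The adversary then releases two further goods, each equal to $(\tfrac{\varepsilon}{2},\tfrac{1-\varepsilon}{2})$ — cheap for agent~1 but valuable for agent~2 — creating a bind. If agent~2 receives at least one of them, she then holds at least two goods, and agent~1, whose value is at most $\varepsilon$, EFX-envies a residual still containing $g$ (worth $1-\varepsilon$ to her). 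If instead agent~1 receives both, then agent~2, still worth about $\varepsilon$, EFX-envies agent~1's two goods, each worth about $\tfrac{1-\varepsilon}{2}$ to her. In every branch the envy ratio is $O(\varepsilon)<a$.

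First I would formalise this three-good trap and check all of its branches, and then handle the complementary ``smart'' case where the algorithm assigns $g$ to its high-valuer. This is where the adaptivity does the real work: the adversary must continue the decision tree so that every subsequent sequence of commitments is again routed into a trap of the above type, and in particular must preclude the \emph{balanced} configurations in which each agent ends up holding a good of equal value in the relevant valuation. It is exactly these balanced splits that cap the identical-valuation impossibility at $\varphi-1$ in \cref{thm:no_predictions_phi-EFX_inapprox_bound}; breaking past them requires using the two distinct valuations to ensure that, after the algorithm's irrevocable choices, the always-existing offline EFX allocation is no longer reachable.

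I expect the main obstacle to be precisely this universality over the algorithm's choices combined with the full range $a\in(0,1]$. Pushing the envy ratio below an arbitrarily small $a$ forces the final configuration to be strongly unbalanced — one agent holding at least two goods that are worth almost everything to the other agent, who is kept almost value-less — and the delicate step is to show that the adversary can always force such an imbalance online, despite the algorithm's freedom to re-balance, using only the reactive power of fixing each new good's pair of values after observing where the previous good was placed.
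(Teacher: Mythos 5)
Your three-good gadget is fine on the branch where the algorithm misallocates $g$, but the ``complementary smart case'' you defer to adaptivity is not a case the adversary can ever recover from---it is a winning strategy for the algorithm, and this breaks the proposal. Once the algorithm assigns $g$ with $v_1(g)=1-\varepsilon$, $v_2(g)=\varepsilon$ to agent~1, normalization guarantees that all future goods are worth at most $\varepsilon$ in total to agent~1 and at most $1-\varepsilon$ in total to agent~2; the algorithm that then gives \emph{every} subsequent good to agent~2 ends with $v_1(A_1)\geq 1-\varepsilon\geq v_1(A_2)$ and $v_2(A_2)\geq 1-\varepsilon\geq v_2(A_1)$, i.e., a fully envy-free (hence $1$-EFX) allocation, no matter how the adversary chooses the remaining values. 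So there is no decision tree that ``routes this branch back into a trap'': the defect is the opening move itself, which burns essentially all of agent~1's value mass in a single good and surrenders all adversarial leverage over her the moment the algorithm plays correctly. An algorithm aware of this construction simply assigns any near-unit-value good to its high-valuer and dumps the rest, so your adversary cannot defeat all algorithms for any $a$.

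The paper's adversary avoids exactly this by doing the opposite: it streams goods of value $(\varepsilon,0)$ (or, symmetrically, $(0,\varepsilon)$) with $\varepsilon=a/4$---tiny for the valuing agent and \emph{worthless} to the other---so that however the algorithm allocates them, each agent's own-value holding stays at most $\varepsilon$ while the adversary retains nearly the entire unit of mass for at least one agent. Only then does it release a single good that is huge for \emph{both} agents simultaneously (value $1$ for one agent and all of the other agent's remaining mass, e.g.\ $(1-t\varepsilon,1)$); whichever agent receives it, the other is left holding at most $\varepsilon$ against a residual bundle worth at least $1-2\varepsilon$, giving an \praefx ratio $a'\leq \frac{\varepsilon}{1-2\varepsilon}\leq \frac{a}{2}<a$, and the paper also pads the stream with $(0,0)$ goods to honor the promised horizon $T=\floor{4/a}+2$, a point your proposal does not address. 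If you want to salvage your write-up, you would have to replace the big opening good with a small-goods stream of this kind; your correct observation that adaptivity is necessary (every fixed two-agent instance admits an offline EFX allocation) survives, but the concrete gadget does not.
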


\begin{proof}
    For the sake of contradiction, suppose there is an algorithm that can provide an \aefx allocation for some $a > 0$. The algorithm also knows that $T = \floor{\frac{4}{a}}+2$. We will denote the value of a good $g$ as $(\ell, r)$, where $\ell := v_1 ( g )$ and $r := v_2 ( g )$. We fix an $\varepsilon := \frac{a
    }{4}$. Let the first good that the adversary gives to the algorithm be of value $(\varepsilon, 0)$. There are two cases:

    (i) If the algorithm allocates the first good to agent 2, then the adversary keeps giving to the algorithm goods of value $(\varepsilon, 0)$ until the algorithm allocates a good to agent 1 (which might never happen, which means that after $t=T$, agent 1 has no goods, and this is a $0$-EFX allocation). Let time $t \geq 2$ be the round in which either a good is allocated to agent 1 for the first time or when $t \varepsilon \in  (1 - \varepsilon, 1]$. Observe that if the latter condition is the case, then no more goods of value $(\varepsilon, 0)$ can be produced by the adversary, otherwise, according to agent 1, the total value for all the goods produced would exceed the value of $1$, thus defy the normalization. We know that $v_1 (A_1^t) \leq \varepsilon$ and $v_1 (A_2^t) \geq (t-1) \varepsilon$. Then, the adversary produces a good of value $(1 - t \varepsilon, 1)$. If the algorithm allocates the latter good to agent 2, then agent 2 does not EFX-envy agent 1, but agent 1 has EFX-envy towards agent 2, and the \praefx allocation has $a' \leq \frac{v_1 (A_1^{t+1})}{v_1 (A_2^{t+1}) - \varepsilon} \leq \frac{\varepsilon}{(1 - t \varepsilon) + (t-1) \varepsilon - \varepsilon} = \frac{\varepsilon}{1 - 2 \varepsilon} \leq \frac{a}{2} < a$, where the penultimate inequality comes from the fact that $1 - 2 \varepsilon = 1 - \frac{a}{2} \geq \frac{1}{2}$ for any $a \leq 1$. If the algorithm allocates the good to agent 1, then agent 2 has EFX-envy towards agent 1, and this can only be a $0$-EFX allocation since $v_2 (A_2^{t+1}) = 0$ and $v_{2}(\xset{A}_1^{t+1}) = 1$. So in this case, the algorithm cannot provide an \aefx allocation.

    (ii) If the algorithm allocates the first good to agent 1, then the adversary keeps giving to the algorithm goods of value $(0, \varepsilon)$ until the algorithm allocates a good to agent 2 (which might never happen). Similarly to case (i), let time $t \geq 2$ be the round in which either a good is allocated to agent 2 for the first time or when $(t-1) \varepsilon \in  (1 - \varepsilon, 1]$. Observe again that in the latter condition, for a reason symmetric to that of case (i), no more goods of value $(0, \varepsilon)$ can be produced by the adversary. We know that $v_2 (A_2^t) \leq \varepsilon$ and $v_2 (A_1^t) \geq (t-2) \varepsilon$. Then, the adversary produces a good of value $(1 - \varepsilon, 1 - (t-1) \varepsilon)$. If the algorithm allocates the latter good to agent 1, then agent 1 does not EFX-envy agent 2, but agent 2 has EFX-envy towards agent 1, and the \praefx allocation has $a' \leq \frac{v_2 (A_2^{t+1})}{v_2 (A_1^{t+1}) - 0} \leq \frac{\varepsilon}{(1 - (t-1) \varepsilon) + (t-2) \varepsilon} = \frac{\varepsilon}{1 - \varepsilon} \leq \frac{a}{3} < a$, where the penultimate inequality comes from the fact that $1 - \varepsilon = 1 - \frac{a}{4} \geq \frac{3}{4}$ for any $a \leq 1$. If the algorithm allocates the good to agent 2, then agent 1 has EFX-envy towards agent 2, and the \praefx allocation has $a' \leq \frac{\varepsilon}{1 - \varepsilon} < a$ as argued earlier. So in this case too, the algorithm cannot provide an \aefx allocation.

    From the case analysis above, we see that the maximum number of goods produced is $1 + t + 1 = \floor{\frac{1}{\varepsilon}}+2 = \floor{\frac{4}{a}}+2$, which is needed for case (ii). If the algorithm falls in any of the two cases and the aforementioned number of rounds has not been reached, the adversary provides the algorithm with an extra amount of goods until that number of rounds is achieved (in order to fulfill its promise that $T = \floor{\frac{4}{a}}+2$). All the extra goods have value $(0,0)$ and do not affect the bounds on the approximation ratios.
\end{proof}

\section{Algorithms with Predictions}\label{sec:alg_with_predictions}

The prediction model we consider can capture various real-life applications where goods whose value is not fully predictable arrive over time and have to be allocated to some agent once and for all. In particular, the error as we measure it can reflect the following scenarios, among others: 
\begin{enumerate}
    \item[(i)] The agent knows which good will arrive in each round, but incorrectly predicts its relative value, with the error being solely in the value prediction.
    \item[(ii)] The agent knows which good will arrive in each round, but upon arrival, may misjudge its quality.
    \item[(iii)] A combination of the above, where the agent not only mispredicts the value but also misjudges the good's quality upon arrival.
    \item[(iv)] The agent knows the set of goods to arrive, but incorrectly predicts their order of arrival.
    \item[(v)] The agent knows the set of goods to arrive, but makes a combination of errors: she might mispredict their arrival order, mispredict their relative value, misjudge their quality upon arrival, or all three.
    \item[(vi)] The agent is completely unaware of which goods will arrive, and additionally, mispredicts or misjudges according to some of the above scenarios.
\end{enumerate}
For example, an instance of scenario (i) with value-prediction error $30 \%$, and an instance of scenario (iii) with value-prediction error $10 \%$ and judgement error $20 \%$, can both be captured by our model with prediction error $30 \%$. This means that, if an algorithm guarantees an \aefx (respectively, \aefo) allocation for some $a \in (0, 1]$ when the error is $0.3$, then it can give a solution to both instances, while if it is impossible for such an algorithm to exist for one of these instances, then the same holds for the other instance too.

As we have seen so far, there are sharp dichotomies on the approximation factor of EF1 and EFX allocations when no predictions are involved. In particular, for identical valuations, $1$-EF1 exists for $n \geq 2$ agents; $(\varphi - 1)$-EFX exists for $n=2$ and no \aefx exists for $a \in (\varphi - 1, 1]$, while for $n \geq 3$ no \aefx exists for $a \in (0,1]$. When the valuation functions are not restricted to be identical, the impossibility results become significantly stronger: $1$-EF1 exists for $n=2$, but when $n \geq 3$, no \aefo exists for any $a \in (0, 1]$; for $n \geq 2$, and no \aefx exists for any $a \in (0, 1]$. This means that to improve the approximation quality of EF1 and EFX allocations, predictions have to be involved. It is important to note here that when predictions are involved, each agent's accuracy is known to the algorithm (can be part of the input).

%old title: Limits of algorithms that rely entirely on predictions
\subsection{Full reliance on predictions}\label{sec:alg_with_predictions_entirely}

We start by exploring the capabilities of algorithms that rely entirely on predictions.

\begin{theorem}\label{thm:with_predictions_n_ag_a_i-EFX_positive}
    Suppose we have $n \geq 2$ agents with additive, normalized valuations. Each agent $i \in [n]$ has accuracy $\eta_i \geq 1 - \frac{\tila_i-a_i}{(2n-2+\tila_i)(1+a_i)}$ for some given $a_i, \tila_i \in \left[ 0, 1 \right]$ with $a_i \leq \tila_i$, that is, for agent $i$ the error between the provided prediction $p_i$ and the true valuation $v_i$ is $d_i = 1 - \eta_i \leq \frac{\tila_i-a_i}{(2n-2+\tila_i)(1+a_i)}$. Given an allocation $A^o$ in which every agent $i$ has no $\tila_i$-EFX-envy towards any other agent according to $(p_{i}(g_t))_{t \in [T']}$, we can compute in polynomial time an allocation $B$ in which every agent $i$ has no $a_i$-EFX-envy towards any other agent according to her true valuation $(v_{i}(g_t))_{t \in [T]}$.
\end{theorem}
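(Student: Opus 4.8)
The plan is to keep the allocation essentially unchanged on the predicted instance and to \emph{transport} the $\tilde{a}_i$-EFX guarantee from the predictions to the true values, exploiting that a small total-variation error controls the prediction mistake on \emph{every} bundle simultaneously. The first thing I would record is the elementary fact that, because $p_i$ and $v_i$ are normalized, $\TV{p_i}{v_i}=\max_{S\subseteq M}\bigl(p_i(S)-v_i(S)\bigr)$, which yields the two-sided estimate $|p_i(S)-v_i(S)|\le d_i$ for every $S\subseteq M$. Two consequences drive the whole argument: first, $v_i(A_i^o)\ge p_i(A_i^o)-d_i$ for $i$'s own bundle; second, removing from an envied bundle its $v_i$-cheapest good and comparing against its $p_i$-cheapest good, $v_i(\xset{A_j^o})\le p_i(\xset{A_j^o})+d_i$ (the deleted good may differ under the two valuations, but deleting any single good leaves predicted value at most $p_i(\xset{A_j^o})$, and the slack $d_i$ absorbs the mismatch).

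With $B=A^o$, the core is then a one-line transfer. Combining the two estimates with the hypothesis $p_i(A_i^o)\ge \tilde{a}_i\,p_i(\xset{A_j^o})$ gives $v_i(A_i^o)-a_i\,v_i(\xset{A_j^o})\ge(\tilde{a}_i-a_i)\,p_i(\xset{A_j^o})-(1+a_i)d_i$. Hence $v_i(A_i^o)\ge a_i\,v_i(\xset{A_j^o})$ holds for every ordered pair whose envied bundle is ``heavy'', i.e.\ $p_i(\xset{A_j^o})\ge (1+a_i)d_i/(\tilde{a}_i-a_i)$, and it holds vacuously whenever $|A_j^o|\le 1$ since then $\xset{A_j^o}=\emptyset$. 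Under the accuracy hypothesis the threshold on the right is at most $1/(2n-2+\tilde{a}_i)$, so the only pairs that can break are those in which the envied bundle has at least two goods yet predicted EFX-value below $1/(2n-2+\tilde{a}_i)$; for such a pair the true EFX-envy is itself tiny (at most $p_i(\xset{A_j^o})+d_i$), so the failure is always caused by the \emph{envier's} bundle being poor rather than by a large envy.

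This is exactly what the modification must repair, and it is where the denominator $2n-2+\tilde{a}_i$ should come from. I would reallocate goods to the under-served agents so as to guarantee each agent $i$ a predicted own-value of at least $\tilde{a}_i/(2n-2+\tilde{a}_i)$ (or else leave it with no multi-good envy target), while keeping the reallocation $\tilde{a}_i$-EFX on the predictions; feeding $p_i(\xset{B_j})\le p_i(B_i)/\tilde{a}_i$ (from EFX on predictions) into the condition $p_i(B_i)\ge a_i\,p_i(\xset{B_j})+(1+a_i)d_i$ then re-runs the transfer. The budgeting step is the identity $\sum_k p_i(A_k^o)=1$: since agent $i$ must be protected against the other $n-1$ bundles, affording each of them a share of the above size is what forces the factor $2(n-1)+\tilde{a}_i$, and the accuracy bound is calibrated so that $\tilde{a}_i/(2n-2+\tilde{a}_i)$ clears the required quantity $\tilde{a}_i(1+a_i)d_i/(\tilde{a}_i-a_i)$.

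I expect the modification itself to be the main obstacle. The danger is that naively handing a heavy good to a poor agent manufactures a fresh, large EFX-envy elsewhere and destroys the $\tilde{a}_i$-EFX-on-predictions property that the transfer consumes; so the reallocation must lift every under-served bundle to its share \emph{simultaneously} and without degrading the approximate-EFX structure, stay polynomial, and hit $\tilde{a}_i/(2n-2+\tilde{a}_i)$ tightly enough to match the claimed accuracy. Handling the boundary cases — singleton and empty bundles, agents who value everything they can reach at nearly zero, and the interaction between different agents' shares when valuations differ — is where I expect the real bookkeeping to live; the transfer inequality and the total-variation bound are, by comparison, routine.
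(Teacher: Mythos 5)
Your total-variation machinery is exactly the paper's computation, but your argument has a genuine gap precisely at the point you flag as the main obstacle: the ``modification'' that is supposed to lift every under-served agent to predicted own-value $\tilde{a}_i/(2n-2+\tilde{a}_i)$ while keeping the allocation $\tilde{a}_i$-EFX on the predictions is never constructed, and, as you suspect, such a simultaneous repair is essentially as hard as the problem you started from. The idea you are missing is that the paper needs no repair step at all: it argues the share bound is \emph{automatic} in any allocation $A$ satisfying the hypothesis. For additive $p_i$ and a bundle $A_j$ with at least two goods, the $p_i$-cheapest good is worth at most the rest, so $p_i(A_j)\le 2\,p_i(\xset{A}_j)\le \frac{2}{\tilde{a}_i}\,p_i(A_i)$ by the $\tilde{a}_i$-EFX hypothesis; summing over $j\ne i$ and using normalization, $1=\sum_{j}p_i(A_j)\le p_i(A_i)\bigl(1+\tfrac{2(n-1)}{\tilde{a}_i}\bigr)$, i.e.\ $p_i(A_i)\ge \frac{\tilde{a}_i}{2n-2+\tilde{a}_i}$ (for $\tilde{a}_i=0$ the claim is vacuous since then $a_i=0$). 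Feeding this into your own below-threshold case closes it with $B$ essentially equal to $A$: the map $x\mapsto \frac{x-d_i}{x/\tilde{a}_i+d_i}$ is increasing, and the accuracy hypothesis is calibrated exactly so that its value at $x=\frac{\tilde{a}_i}{2n-2+\tilde{a}_i}$ is at least $a_i$. One honest caveat: the halving step is only valid for $|A_j|\ge 2$, while singleton bundles (which are EFX-unenviable, $\xset{A}_j=\emptyset$) still absorb predicted mass in the normalization sum; the paper applies the factor-$2$ bound uniformly and quietly elides this case, and a high-value singleton (e.g., $n=3$ with predicted values $0.02,\,0.04,\,0.04$ and a $0.9$-good alone in a bundle) defeats the share bound as written. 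So your instinct that the ``poor envier'' configuration is the dangerous one is not idle --- but your proposal neither constructs the repair nor observes the intended automatic bound, so it is incomplete either way.

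You also ignore the horizon mismatch, which is the only reason the paper modifies $A^o$ structurally: when $T>T'$, the goods $g_{T'+1},\dots,g_T$ exist only in the true instance and $B=A^o$ does not allocate them. The paper first runs envy-graph cycle elimination with respect to the predictions --- rotating bundles along an envy cycle keeps the multiset of bundles fixed and strictly increases each rotated agent's value, so the $\tilde{a}_i$-EFX-on-predictions property survives, and polynomially many eliminations yield an agent $k$ envied by nobody --- and then assigns all surplus goods to $k$. Unenviedness is load-bearing quantitatively: since surplus goods have predicted value $0$, one gets $v_i(\xset{B}_k)\le v_i(B_k)\le p_i(B_k)+d_i=p_i(A_k)+d_i\le p_i(A_i)+d_i\le \frac{p_i(A_i)}{\tilde{a}_i}+d_i$, matching the bound used for every other bundle, whereas dumping the surplus on an arbitrary agent $j$ only gives $v_i(\xset{B}_j)\le p_i(A_j)+d_i\le \frac{2}{\tilde{a}_i}p_i(A_i)+d_i$ (the good deleted by $\xset{\cdot}$ may be a worthless surplus good), and that extra factor $2$ breaks the stated accuracy threshold. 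The direction $T\le T'$ is harmless since undelivered goods have true value $0$. With the automatic share bound and this placement rule, your transfer inequality is the entire remaining proof; without them, your write-up stops exactly where the theorem's constant $2n-2+\tilde{a}_i$ has to be earned.
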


\begin{proof}
    Let the allocation $A^o = (A_1^o, A_2^o \dots, A_n^o)$ be the output of the statement's algorithm when its input is the prediction $(p_{i,t})_{i \in [n], t \in [T']}$, therefore in $A^o$ each agent $i \in [n]$ is not $\tila_i$-EFX-envious towards any other agent according to the prediction values. We will first turn $A^o$ into an allocation $A$ in which there is at least one agent $k \in [n]$ that is not envied by any other agent (if this is not already the case). We define the ``envy-graph'' of $A^o$; this is a directed graph whose nodes are the agents, and an edge $(i,j)$ exists if $i$ envies $j$. If the graph has a source, then this source is our required $k$ and we are done. If the graph has no source, it contains a cycle $i_1, i_2, \dots, i_r, i_1$ of size $r$ for some $r \in \{2, 3 \dots, n\}$. We can eliminate this cycle by rotating the bundles, i.e., $A_{i_s}^o \gets A_{i_{s+1}}^o$ for all $s \in [r]$, where $i_{r+1} := i_{1}$. Notice that by rotating the bundles, we have eliminated $r \geq 2$ edges of the graph and did not create any edges, since the bundles of $A^o$ remained the same even though they changed owners. By eliminating cycles repeatedly until there are no more left, we derive an allocation $A$ whose envy-graph is acyclic and therefore has a source, that is, the required agent $k$ who is not envied by any other agent. Finally, since the edges of the initial envy-graph were at most $n(n-1)$ and in each cycle elimination we strictly reduced the number of edges by 2, allocation $A$ will be found after at most $n(n-1)/2$ eliminations. Creating the envy-graph of $A^o$ and each of its updates takes time polynomial in $n$ and $T$, so computing $A$ can be done in polynomial time.
    
    Now let us focus on an arbitrary agent $i$ under allocation $A$. We denote $x_{i*} := p_{i}(A_i)$, and $x_{i,j} := p_{i}(\xset{A}_j)$ for $j \neq i$. Since agent $i$ is not $\tila_i$-EFX-envious towards any other agent, we have 
    \begin{align}\label{eq:a_i-EFX_positive}
        x_{i*} \geq \tila_i \cdot x_{ij}, \quad \text{for all $j \neq i$}. 
    \end{align}
    We also know that for any $i \neq j$, $x_{ij} \geq \frac{p_{i}(A_j)}{2}$: If $A_j = \emptyset$ then this is obviously true, and if $A_j \neq \emptyset$ then, by setting $g \in \arg \min_{g \in A_j} p_{i}(A_j)$, we have $p_{i}(g) \leq \min_{g' \in A_j \setminus \{g\}} \{ p_{i}(g') \} \leq x_{ij}$, so $p_{i}(A_j) = p_{i}(g) + x_{ij} \leq 2 \cdot x_{ij}$ for any $j \neq i$. Finally, by the normalization assumption, we have $1 = \sum_{j \in [n]} p_{i}(A_j) \leq x_{i*} + (n-1) \cdot \frac{2}{\tila_i} x_{i*} $, or equivalently, $x_{i*} \geq \frac{\tila_i}{2n-2+\tila_i}$.

    Now we modify $A$ to get allocation $B$ in the following way, depending on how the true time horizon $T$ compares to the predicted horizon $T'$.
     \begin{align*}
         B_i = \begin{cases}
			A_i \setminus \{ g_{T+1}, \dots, g_{T'} \}, & \text{if $T \leq T'$, for all $i \in [n]$}\\
            A_i, & \text{if $T > T'$, for all $i \neq k$}\\
            A_i \cup \{ g_{T'+1}, \dots, g_{T} \}, & \text{if $T > T'$, for $i = k$}.
		 \end{cases}
     \end{align*}
     In other words, if the number of goods that arrive is at most that of the prediction, then the goods that arrive are allocated according to the prediction. If more goods arrive than those predicted, then they are placed in the bundle of agent $k$ who was not envied by anyone in allocation $A$. 
    
    Consider the true values' distribution $(v_{i,t})_{i \in [n], t \in [T]}$, and similarly to the previous notation, let us denote $y_{i*} := v_{i}(B_i)$, and  $y_{i,j} := v_{i}(\xset{B}_j)$ for $j \neq i$.    
    For some fixed $a_i \in [0, 1]$ with $a_i \leq \tila_i$, recall that $(v_{i,t})_{i \in [n], t \in [T]}$ has \tvd distance $d_i \leq \frac{\tila_i-a_i}{(2n-2+\tila_i)(1+a_i)}$ from $(p_{i,t})_{i \in [n], t \in [T']}$. Notice that $v_{i}(B_i) \geq x_{i*} - d_i$, so $y_{i*} \geq x_{i*} - d_i$. Also, for every $j \in [n] \setminus \{ k \}$ we have $ y_{ij} \leq  x_{ij} + d_i \leq \frac{x_{i*}}{\tila_i} + d_i $, where the last inequality comes from \cref{eq:a_i-EFX_positive}. Finally, since agent $k$ is not envied by any other agent in allocation $A$, we have $x_{i*} \geq p_{i}(A_k)$. We then have
    $y_{ik} \leq p_{i}(A_k) + d_i \leq x_{i*} + d_i \leq \frac{x_{i*}}{\tila_i} + d_i$, where the last inequality is due to the fact that $\tila_i \leq 1$.

    Then in $B$, agent $i$ is not $a'_i$-EFX-envious towards any other agent for 
    \begin{align*}
        a'_i = \frac{y_{i*}}{\max_{i \in [n]} \{ y_{ij} \} } \geq \frac{x_{i*} - d_i}{x_{i*}/\tila_i + d_i} \geq \frac{\tila_i - (2n - 2 + \tila_i) d_i}{1 + (2n - 2 + \tila_i) d_i} \geq \frac{\tila_i - (\tila_i-a_i)/(1+a_i)}{1 + (\tila_i-a_i)/(1+a_i)} = a_i.
    \end{align*}
\end{proof}

\begin{corollary}\label{cor:with_predictions_n_ag_a-EFX_positive}
    Suppose we have $n \geq 2$ agents with additive, normalized valuations. The agents have accuracy $\eta \geq 1 - \frac{\tila - a}{(2n-2+\tila)(1+a)}$ for some given $a, \tila \in \left[ 0, 1 \right]$ with $a \leq \tila$, that is, for any agent $i \in [n]$, the error between the provided prediction $p_i$ and the true valuation $v_i$ is at most $D = 1 - \eta \leq \frac{\tila - a}{(2n-2+\tila)(1+a)}$. Given an $\tila$-EFX allocation $A^o$ according to $(p_{i}(g_t))_{i \in [n], t \in [T']}$, we can compute in polynomial time an allocation $B$ which is \aefx according to the true valuations $(v_{i}(g_t))_{i \in [n], t \in [T]}$.
\end{corollary}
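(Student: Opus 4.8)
The plan is to derive this statement as an immediate specialization of \cref{thm:with_predictions_n_ag_a_i-EFX_positive}, by instantiating the per-agent parameters uniformly across all agents. Concretely, I would set $a_i = a$ and $\tila_i = \tila$ for every $i \in [n]$. Under this substitution the hypothesis $a \leq \tila$ of the corollary yields $a_i \leq \tila_i$ for each $i$, which is exactly the per-agent ordering condition required by the theorem.

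The only thing that needs checking is that the uniform accuracy assumption of the corollary implies the per-agent accuracy assumption of the theorem. Since $D = \max_{i \in [n]} d_i$, every agent's individual error satisfies $d_i \leq D \leq \frac{\tila - a}{(2n-2+\tila)(1+a)}$. Under the substitution $a_i = a$, $\tila_i = \tila$, the right-hand side is precisely $\frac{\tila_i - a_i}{(2n-2+\tila_i)(1+a_i)}$, so $d_i \leq \frac{\tila_i - a_i}{(2n-2+\tila_i)(1+a_i)}$ holds for every $i \in [n]$. Thus each agent meets the per-agent accuracy bound demanded by \cref{thm:with_predictions_n_ag_a_i-EFX_positive}.

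It remains to match the input and output conditions by unpacking the definitions. An $\tila$-EFX allocation $A^o$ according to the predictions is, by definition of \aefx, exactly an allocation in which no agent $\tila$-EFX-envies any other agent according to $(p_i(g_t))_{t \in [T']}$; this is precisely the input hypothesis of the theorem once we take $\tila_i = \tila$ for all $i$. Applying \cref{thm:with_predictions_n_ag_a_i-EFX_positive} then produces, in polynomial time, an allocation $B$ in which no agent $a$-EFX-envies any other agent according to the true valuations $(v_i(g_t))_{t \in [T]}$, which is by definition an \aefx allocation. The polynomial running time and the computation of $B$ (via the envy-cycle elimination and the relocation of unpredicted goods to the unenvied agent $k$) are inherited verbatim from the theorem. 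There is no genuine obstacle here: the entire content lies in observing that the worst-case error bound $D$ controls every individual $d_i$, so the uniform-parameter instance satisfies all the per-agent hypotheses at once.
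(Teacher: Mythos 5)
Your proposal is correct and matches the paper's intent exactly: the paper states this as an immediate corollary of \cref{thm:with_predictions_n_ag_a_i-EFX_positive} (offering no separate proof), obtained precisely by the uniform instantiation $a_i = a$, $\tila_i = \tila$ together with the observation that $d_i \leq D$ for every agent. Your additional checks of the hypotheses and the definitional unpacking are sound and complete.
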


The proof of \cref{thm:with_predictions_n_ag_a_i-EFX_positive} reveals that the algorithm used to produce an $a$-EFX allocation (according to the true valuations) is an algorithm oblivious to the true valuations themselves. In particular, first it takes as input an $\tila$-EFX allocation based on the predicted values, then shifts bundles around such that there is an agent who is not envied by any other, and then takes that prescribed allocation and follows it blindly on the true values (were if unpredicted goods appear, they get allocated to the unenvied agent). Notice that, for several classes of valuations we have algorithms that guarantee an exact EFX allocation, that is $\tila = 1$. For example, for identically-ordered additive valuations \cite{plaut2020almost}, when there are at most two monotone (not necessarily additive) valuations the agents can choose from~\cite{Mahara24}, for $n=3$ with additive valuations~\cite{CGM24}, for monotone valuations obeying a graph structure~\cite{Christodoulou2023}, and many more (see \cite{amanatidis2023fair}). Outside of such cases, $\tila = \varphi - 1$ can always be achieved for additive valuations~\cite{amanatidis2020multiple}, and even $\tila = 2/3$ for $n \leq 7$~\cite{AFS24} or when agents agree on their $n$ favourite goods~\cite{markakis23improved}. 

We can actually show that when it is possible to compute an exact EFX allocation with respect to the predicted values, the accuracy bound of \cref{cor:with_predictions_n_ag_a-EFX_positive} is tight among algorithms that rely entirely on predictions. In other words, if it is possible to compute an $1$-EFX allocation for the predictions, for the algorithms that follow any $1$-EFX derived by the predictions and ignore the true values, the level of accuracy stated in the aforementioned result for $\tila=1$, i.e. $1 - \frac{1-a}{(2n-1)(1+a)}$, is necessary to achieve an $a$-EFX allocation. We show that this is true even for identical valuations.

\begin{proposition}\label{prop:no_true_vals_inapprox}
    Suppose we have $n \geq 2$ agents with additive, identical, normalized valuations. The agents have accuracy $\eta < 1 - \frac{1 - a}{(2n-1)(1+a)}$ for some given $a \in \left[ 0, 1 \right]$, that is, the error between the provided prediction $(p(g_t))_{t \in [T']}$ and the true values $(v(g_t))_{t \in [T]}$ is $D = 1 - \eta > \frac{1 - a}{(2n-1)(1+a)}$. Then, any algorithm that computes an exact EFX allocation on the predictions and follows it by ignoring the true values, is not able to compute an \aefx allocation according to the true valuation, even when $T' = T$.
\end{proposition}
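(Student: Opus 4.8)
The plan is to exhibit, for the tight threshold $\frac{1-a}{(2n-1)(1+a)}$, a prediction profile together with an adversarial choice of true values that defeats any prediction-only algorithm. I would follow the extremal configuration that makes the analysis of \cref{cor:with_predictions_n_ag_a-EFX_positive} tight: there $\tila = 1$ forces $x_{i*} = \frac{1}{2n-1}$ together with a tight EFX constraint. Concretely, take $T' = T = 2n-1$ and let the common prediction be uniform, $p(g_t) = \frac{1}{2n-1}$ for every $t$. Since valuations are identical and all predicted values coincide, the algorithm must commit to some exact EFX allocation of these $2n-1$ equal goods among the $n$ agents on the basis of $p$ alone.

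The key structural step is to characterize all exact EFX allocations of this uniform instance. For equal goods of value $c = \frac{1}{2n-1}$, an agent holding $s$ goods has bundle value $sc$ and value $(s-1)c$ after removal of any (least-valued) good; hence no agent EFX-envies another exactly when all bundle sizes differ by at most one, i.e. $\max_i s_i - \min_i s_i \le 1$. Because $\sum_i s_i = 2n-1$ over $n$ agents forces every $s_i \in \{1,2\}$ with exactly one index of size $1$, every exact EFX allocation on $p$ gives a single good to exactly one agent and a pair of goods to each of the remaining $n-1$ agents. I expect this characterization to be the main technical step, since it must rule out all degenerate splits (empty bundles, or bundles of size $\ge 3$) and must hold for \emph{every} allocation the algorithm could output, not just a convenient one.

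With the structure in hand, the impossibility follows by an adversarial, allocation-dependent choice of true values; this is legitimate precisely because the algorithm ignores the true values, so the full good-to-agent map is fixed by $p$ before any good arrives, and the adversary (knowing the algorithm) can design $v$ to target it. Suppose the output assigns the single good $g_\ell$ to agent $i_0$ and a pair $\{g_m, g_{m'}\}$ to some agent $j_0$ (such a $j_0$ exists as $n \ge 2$). Fix $\delta$ with $\frac{1-a}{(2n-1)(1+a)} < \delta \le \min\left\{ D, \frac{1}{2n-1} \right\}$; this interval is nonempty for $a \in (0,1]$, since $D$ exceeds the threshold and the threshold is below $\frac{1}{2n-1}$ exactly when $a > 0$. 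Define $v$ by moving mass $\delta$ off $g_\ell$ onto $g_m$, leaving all other goods at $\frac{1}{2n-1}$, so $v(g_\ell) = \frac{1}{2n-1} - \delta$, $v(g_m) = \frac{1}{2n-1} + \delta$, and $\TV{p}{v} = \delta \le D$.

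Finally I would verify that this allocation is not \aefx under $v$. Agent $i_0$ has true value $v(B_{i_0}) = \frac{1}{2n-1} - \delta$, while removing the smaller good $g_{m'}$ from $j_0$'s pair leaves $v(\xset{B}_{j_0}) = \frac{1}{2n-1} + \delta$. The condition $v(B_{i_0}) \ge a\, v(\xset{B}_{j_0})$ reduces, after clearing denominators, to $\delta \le \frac{1-a}{(2n-1)(1+a)}$, which is violated by the choice of $\delta$; hence $i_0$ \aefx-envies $j_0$ and the output is not \aefx. Since the argument applies to whichever singleton/pair the algorithm produces, no prediction-only algorithm can guarantee an \aefx allocation at this accuracy, matching \cref{cor:with_predictions_n_ag_a-EFX_positive} and establishing tightness. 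The case $a = 0$ is degenerate, as every allocation is trivially $0$-EFX.
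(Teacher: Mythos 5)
Your proposal is correct and takes essentially the same route as the paper's proof: the uniform prediction $p(g_t)=\frac{1}{2n-1}$ with $T'=T=2n-1$, the observation that every exact EFX allocation on it gives one agent a singleton and each remaining agent a pair, and an adversarial shift of mass from the singleton agent's good onto a paired good, which violates the $a$-EFX inequality precisely when the shifted mass exceeds $\frac{1-a}{(2n-1)(1+a)}$. Your two refinements --- actually proving the characterization of all exact EFX allocations rather than asserting it, and capping the shift at $\delta \le \min\left\{ D, \frac{1}{2n-1} \right\}$ so true values stay nonnegative when $D$ is large --- make the write-up slightly more careful than the paper's but do not change the argument.
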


\begin{proof}
    Consider the following instance with identical valuations, for which we can compute an exact EFX allocation on the predictions, by using LPT (\cref{alg:LPT}). Let $T' = 2n-1$, and where $p(g_t) = \frac{1}{2n-1}$ for all $t \in [2n-1]$. Then, the only $1$-EFX allocation is one where all agents receive 2 goods, except for one, w.l.o.g. agent 1, who receives 1 good. Consider an adversary producing $T = 2n-1$ true values $v(g_1) = \frac{1}{2n-1} - D < \frac{2a}{(2n-1)(1+a)}$, $v(g_2) = \frac{1}{2n-1} + D > \frac{2}{(2n-1)(1+a)}$, $v(g_t) = \frac{1}{2n-1}$ for $t \in \{ 3, 4, \dots, 2n-1 \}$. The algorithm ignores the true values and assigns the goods according to the aforementioned $1$-EFX allocation based only on the predictions. Now $v(A_1) = v(g_1) < \frac{1}{2n-1} - \frac{1-a}{(2n-1)(1+a)} = \frac{2a}{(2n-1)(1+a)}$, and $v(\xset{A}_2) = v(g_2) > \frac{1}{2n-1} + \frac{1-a}{(2n-1)(1+a)} = \frac{2}{(2n-1)(1+a)}$, therefore this is an \praefx allocation for $a' \leq \frac{v(A_1)}{v(\xset{A}_2)} < a$. So the algorithm fails to produce an \aefx allocation.
% 
    % Now consider the case where $\tila < 1$, where the predictions are $(p_{i}(g_t))_{i \in [n], t \in [T]}$. Assume for the sake of contradiction that for all $i,j \in [n]$, $p_{i}(A_i) \neq \tila \cdot p_{i}(\xset{A}_j)$. Then, if for all $i,j$ we had $p_{i}(A_i) > \tila \cdot p_{i}(\xset{A}_j)$, this would contradict the fact that $\tila$ is the highest approximation factor of approximate EFX we can compute on the $p_i$'s; otherwise, there exist $i', j' \in [n]$ such that $p_{i'}(A_{i'}) < \tila \cdot p_{i}(\xset{A}_{j'})$, contradicting the fact that $A$ is an $\tila$-EFX allocation. Therefore, there exist agents $i, j \in [n]$ such that $p_{i}(A_i) = \tila \cdot p_{i}(\xset{A}_j)$. Consider goods $g_{i*} \in A_i$ and $g_{j*} \in \xset{A}_j$. Now the adversary provides $T$ many true values (same time horizon as predictions) with $v_{k}(g_t) = p_{k}(g_t)$ for all $k \in [n] \setminus \{ i \}$, $t \in [T]$,  $v_{i}(g_t) = p_{i}(g_t)$ for all $t \in [T] \setminus \{ i*, j* \}$, and $v_{i}(g_{i*}) = p_{i}(g_{i*}) - D$, $v_{i}(g_{j*}) = p_{i}(g_{j*}) + D$. Then the algorithm ignores the true values and assigns the goods as in allocation $A$, which implies $v_{i}(A_i) = p_{i}(A_i) - p_{i}(g_{i*}) + (p_{i}(g_{i*}) - D) = p_{i}(A_i) - D$
\end{proof}

Before proceeding to show an example of \cref{cor:with_predictions_n_ag_a-EFX_positive}, we present a result -- considered to be folklore by now -- for efficiently computing exact EFX allocations for $n \geq 2$ agents with additive, identical valuations. \cref{alg:LPT} is a well-known algorithm, called \emph{longest-processing-time-first (LPT)}, that comes from the literature of job-scheduling over identical machines. There, the problem is to schedule jobs (here a job is represented by a good $g_t, t \in [T]$), each coming with a known processing time (here, $f(g_t)$), into machines (here, bundles $A_i$ of agents $i \in [n]$) and the goal is to minimize the maximum processing time of a machine (here, $f(A_i)$) over all machines. Even though it has a good approximation guarantee, the LPT algorithm does not find the optimum for the job-scheduling problem. However, it is easy to show that in the context of fair division of indivisible goods under an additive valuation, it provides an exact EFX allocation.\footnote{We note that a generalized version of this algorithm has been used in \cite{GoldbergHH23} to find EFX allocations for a superclass of additive valuations. For the sake of completeness, we present here a self-sustained proof for additive valuations.} The LPT algorithm is a central component of \cref{alg:id_2_ag}.

\begin{algorithm}[tbh]
\caption{(LPT) Compute an (offline) exact EFX allocation for $n \geq 2$ agents with additive, identical valuations}\label{alg:LPT}  
	\begin{algorithmic}[1]
		\REQUIRE{A valuation function $f$ over set $M = \{g_1, g_2, \dots, g_T\}$ for $n \geq 2$ agents.}
		\ENSURE{An EFX allocation $A$.}
		
		\medskip
		
		\STATE{$(A_1, A_2, \dots, A_n) \gets (\emptyset, \emptyset, \dots, \emptyset)$}
		
        \medskip
		
		\STATE{Rename goods of $M$ according to non-ascending order of value $f$, creating $M' := \{ g'_1, g'_2, \dots, g'_T \}$}\label{alg:LPT-sort-line}
        \FOR{$t \in [T]$}
            \STATE{Find an agent $i^*$ with $f(A_{i^*}) = \min_{i \in [n]} \{ f(A_i) \}$}\label{alg:LPT-min_line}
            \STATE{$A_{i^*} \gets A_{i^*} \cup \{ g'_t \}$}\label{alg:LPT-alloc_line}
        \ENDFOR
        \RETURN $A$
    \end{algorithmic}
\end{algorithm}

\begin{proposition}\label{prop:LPT}
    For $n \geq 2$ agents with the same additive valuation function $f : [T] \to \preals$ given explicitly in the input, the LPT algorithm (\cref{alg:LPT}) outputs an exact EFX allocation in time $O(T \log (nT))$. 
\end{proposition}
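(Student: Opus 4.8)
The plan is to prove two things separately: that the allocation returned by \cref{alg:LPT} is exactly EFX, and that the whole procedure runs in $O(T\log(nT))$ time. The correctness argument rests on a single structural observation about the greedy assignment combined with the pre-sorting on \cref{alg:LPT-sort-line}, so I would isolate that observation first.

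Fix the output allocation $A=(A_1,\dots,A_n)$ and take an arbitrary ordered pair of agents $(i,j)$; I want to show $f(A_i)\geq f(\xset{A}_j)$, i.e.\ that $i$ does not EFX-envy $j$. If $A_j=\emptyset$ then $f(\xset{A}_j)=0\leq f(A_i)$, so I may assume $A_j\neq\emptyset$ and let $g^\ast$ be the good assigned to $j$ last (the one with the largest position in the sorted list $M'$ among the goods of $A_j$). Because \cref{alg:LPT-sort-line} processes goods in non-ascending order of $f$, every good placed in $A_j$ before $g^\ast$ has value at least $f(g^\ast)$; hence $g^\ast$ is a minimum-value good of $A_j$, so removing it is exactly the EFX removal, giving $f(\xset{A}_j)=f(A_j)-f(g^\ast)$.

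The crux is to relate $f(\xset{A}_j)$ to $f(A_i)$ through the moment $g^\ast$ was assigned. Let $A^{-}_i,A^{-}_j$ denote the bundles of $i,j$ just before that assignment. The selection rule on \cref{alg:LPT-min_line} chose $j$ precisely because it minimized the current bundle value, so $f(A^{-}_j)\leq f(A^{-}_i)$. On the one hand $A^{-}_j=A_j\setminus\{g^\ast\}$, so $f(A^{-}_j)=f(A_j)-f(g^\ast)=f(\xset{A}_j)$; on the other hand bundles only grow over time, so $f(A^{-}_i)\leq f(A_i)$. Chaining these gives $f(\xset{A}_j)=f(A^{-}_j)\leq f(A^{-}_i)\leq f(A_i)$, which is exactly the required EFX inequality. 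Since $(i,j)$ was arbitrary, $A$ is EFX.

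For the running time, sorting on \cref{alg:LPT-sort-line} costs $O(T\log T)$. I would maintain the agents in a min-heap keyed by current bundle value; each of the $T$ iterations performs one extract-min to locate $i^\ast$ (\cref{alg:LPT-min_line}) and one key-increase/re-insertion after the update on \cref{alg:LPT-alloc_line}, each costing $O(\log n)$, for $O(T\log n)$ in total. Summing, the overall cost is $O(T\log T+T\log n)=O(T\log(nT))$. I do not anticipate a genuine obstacle here; the only point that needs care is the orientation of the sort (non-ascending), since that is precisely what guarantees that the last-assigned good of each bundle is its minimum-value good, making the single greedy-snapshot inequality line up with the EFX removal rather than merely the weaker EF1 removal.
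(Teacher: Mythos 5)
Your proof is correct, and it rests on the same two load-bearing facts as the paper's: (a) at the moment an agent receives a good, her bundle has minimum value among all bundles (\cref{alg:LPT-min_line}), and (b) the non-ascending sort on \cref{alg:LPT-sort-line} makes each newly assigned good a minimum-value good of its bundle, so for additive $f$ the EFX removal coincides with undoing the most recent assignment. The packaging, however, differs: the paper argues by induction on the round, maintaining the invariant that the partial allocation is exact EFX after \emph{every} time-step $t$, whereas you give a direct, non-inductive argument about the final allocation only --- for each ordered pair $(i,j)$ you take the last good $g^\ast$ placed in $A_j$ and chain $f(\xset{A}_j) = f\bigl(A_j \setminus \{g^\ast\}\bigr) = f(A_j^-) \leq f(A_i^-) \leq f(A_i)$ through a single snapshot at the time $g^\ast$ was assigned, using monotone growth of bundles for the last step. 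Your version is leaner in that it dispenses with the induction hypothesis entirely (the paper needs it to handle pairs not involving the current recipient $i^*$, while your per-pair snapshot handles every pair uniformly); the paper's formulation buys the marginally stronger conclusion that the allocation is EFX after every prefix of rounds, which matches the online spirit of the surrounding results, though only the final-allocation statement is claimed in \cref{prop:LPT}. Your handling of the edge cases ($A_j = \emptyset$, and ties in the minimum value, where any minimum-value removal gives the same $f(\xset{A}_j)$) is sound, and your runtime accounting --- $O(T \log T)$ for the sort plus $O(\log n)$ per iteration via a min-heap, totalling $O(T \log (nT))$ --- is identical to the paper's.
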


\begin{proof}
    Notice that the algorithm works in the offline setting, since it needs to have as input the values $f(g_1), f(g_2), \dots, f(g_T)$. By induction on $t$, we will prove that after creating the set $M'$ of the goods according to a non-ascending order of value, after each round $t \in [T]$, the partial allocation is an exact EFX allocation. In round $t=1$, an arbitrary agent has been allocated the most valuable good, and this is obviously an exact EFX allocation. Now suppose that we have an exact EFX allocation at time $t^* - 1$ for some $t^* \in \{2, 3, \dots, T-1 \}$; we will show that the algorithm gives an exact EFX allocation at time $t^*$. Let the for-loop of the algorithm be at round $t^*$. Note that agent $i^*$ (who receives a good in this round) cannot increase her EFX-envy towards other agents. So it suffices to show that the rest of the agents do not EFX-envy $i^*$. At line \ref{alg:LPT-min_line}, no agent envied $i^*$. Also, by definition of $M'$, we have $f(g'_{t^*}) \leq f(g'_{t})$ for all $t \in [t^*]$, so at line \ref{alg:LPT-alloc_line} agent $i^*$ gets allocated the good of minimum value among the ones already allocated to the agents. Therefore, for every agent $i \in [n]$ we have $f(A_i) \geq f(\xset{A}_{i^*})$, so no one EFX-envies $i^*$. Notice that \ref{alg:LPT-sort-line} takes $O(T \log T)$ time to sort $T$ values, and line \ref{alg:LPT-min_line} takes $O(\log n)$ time (e.g., by using a min-heap), resulting in the for-loop having time $O(T \log n)$.
\end{proof}

To showcase the contribution of predictions to the quality of EFX allocations, let us see the following instantiation of \cref{cor:with_predictions_n_ag_a-EFX_positive}. We have the simple case of two agents with additive, identical valuations. As \cref{thm:no_predictions_phi-EFX_inapprox_bound} shows, an algorithm without access to predictions cannot guarantee an $a$-EFX for any $a \geq 0.619$. In contrast, when we are allowed prediction of accuracy $94.5 \%$, \cref{alg:LPT} can compute a $0.718$-EFX allocation in polynomial time.

\begin{example}\label{ex:LPT}
    For two agents with additive, identical, normalized valuations, LPT (\cref{alg:LPT}) guarantees an exact EFX (that is, an $1$-EFX) on the predicted values in polynomial time. If, additionally, the agents have prediction accuracy $\eta \geq 1 - \frac{1-(\varphi - 0.9)}{(2 \cdot 2 - 2 + 1)(1 + \varphi - 0.9)} \simeq 0.945$, then LPT's allocation is $(\varphi - 0.9) \simeq 0.718$-EFX on the true values.
\end{example}

\subsection{Use of predictions and true values}\label{sec:alg_with_predictions_and_true}

The limitations derived so far, show that, unless both predictions and the observed (true) values are taken into consideration by an algorithm, the accuracy required to output an \aefx allocation for a given $a \in [0,1]$, is relatively high, even for identical valuations (see \cref{prop:no_true_vals_inapprox}). To bridge the gap between the lower and the upper bound on the accuracy, we turn our focus on algorithms that use the information of the predictions to decide where to allocate each arriving good.

\subsubsection{Impossibility results}

We first show lower bounds on the necessary level of accuracy, which essentially tell us that no algorithms to compute \aefx exist when their accuracy is below a threshold. Recall from \cref{thm:no_predictions_non-id_2_ag_0-EFX_inapprox} that, without predictions, no algorithm can guarantee an \aefx allocation for any $a \in (0,1]$, even for two agents. The following lower bound shows that it remains impossible to compute an \aefx for any $a \in \left( \frac{1}{2}, 1 \right]$ even for algorithms enhanced with predictions of accuracy less than a particular threshold. For the case of $a \in \left(0, \frac{1}{2} \right)$ we know from \cref{thm:no_predictions_non-id_2_ag_0-EFX_inapprox} that predictions (i.e., positive accuracies) are necessary to get an allocation better than $0$-EFX, but it is an open question whether there is a positive lower bound on the accuracy.

\begin{theorem}\label{thm:with_predictions_non-id_2_ag_a-EFX_inapprox}
    Suppose we have $2$ agents with additive, normalized valuations, with a provided prediction of accuracy $\eta < 1 - \frac{1-a}{\min\{ 6a, 4 \}}$ for some given $a \in \left( \frac{1}{2}, 1 \right]$, that is, the error between the predictions and the true valuations is $1 - \eta > \frac{1-a}{\min\{ 6a, 4 \}}$. Then, there is no algorithm that guarantees an \aefx allocation, even when $T'=T=4$, and the predictions and the true valuations are $4$-value functions.
\end{theorem}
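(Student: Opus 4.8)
The plan is to argue by contradiction: assume an algorithm \alg guarantees an \aefx allocation for the given $a \in (1/2,1]$ whenever every agent's error is at most $1-\eta$, and then exhibit an adversary that defeats it using error only slightly above $\frac{1-a}{\min\{6a,4\}}$. The adversary is \emph{adaptive but online-committed}: it fixes the two predictions $p_1,p_2$ up front (as $4$-value functions on four goods), and then reveals the true value $v_i(g_t)$ of each arriving good to both agents \emph{before} \alg allocates $g_t$, choosing that value as a function of the past allocations $A^1,\dots,A^{t-1}$ only. The predictions will be chosen (nearly) symmetric between the agents and between goods, so that up to relabelling we may assume the first, ``pivotal'' good is placed with agent $1$, and so that \alg cannot reach any allocation that is safely \aefx on the predictions: every allocation \alg can produce leaves some agent at predicted EFX-ratio essentially equal to $1$, with the identity of that disadvantaged agent determined only by \alg's later, online choices. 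Crucially — and this is where non-identical valuations are used — the prediction makes a count-balanced $2$--$2$ split that looks fair to one agent a near-tight EFX situation for the other, so that merely balancing the goods does not rescue \alg.

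Having fixed this instance, I would branch on \alg's behaviour after the pivotal good. The remaining goods are revealed adaptively, and \alg's choices fall into two exhaustive cases. In the first case \alg concentrates the value so that the deprived agent ends with a single small good facing a two-good bundle; here her \emph{true} EFX-ratio is driven to the form $\tfrac{1/2 - 2D}{1/2} = 1 - 4D$ (her own bundle is worth about $\tfrac12 - 2D$ to her, while the envied bundle's EFX value is about $\tfrac12$), which is below $a$ exactly when $D > \frac{1-a}{4}$. In the second case \alg keeps a $2$--$2$ split and the adversary realises a ``robbery'': the deprived agent's bundle is worth about half to her while the other agent holds her most-valued good, and a short computation shows her true EFX-ratio drops below $a$ exactly when $D > \frac{1-a}{6a}$. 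Since \alg — not the adversary — selects which case occurs, the adversary must defeat \alg in \emph{both}, so the error it needs is the larger of the two thresholds, namely $\max\{\tfrac{1-a}{4},\tfrac{1-a}{6a}\} = \frac{1-a}{\min\{6a,4\}}$; for $a>2/3$ the first case binds and for $a\le 2/3$ the second, matching the stated bound.

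Two bookkeeping checks accompany each case. First, I would verify that the revealed true valuations keep every value nonnegative, sum to $1$ for each agent (normalization), and use at most four distinct values, so the instance is a legal $4$-value instance with $T'=T=4$. Second, and more importantly, I would verify that the total variation distance between $p_i$ and $v_i$ stays at most the allowed $1-\eta$ for \emph{both} agents simultaneously; the perturbations of the pivotal good and of a dummy good are exactly what consume this budget, and equating the budget to $D$ is what turns the two ``$<a$'' conditions into the thresholds above.

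The main obstacle is the online commitment of true values. Because the adversary must fix each good's value \emph{before} seeing where \alg sends it, it cannot wait until the final allocation is known and then perturb in the single most damaging direction; it must hedge, choosing a pivotal perturbation under which \emph{every} completion available to \alg is punished. This is precisely what prevents a stronger (offline-perturbation) bound such as $\frac{1-a}{2(1+a)}$ from going through, and it is what pins down the constants $6a$ and $4$: the delicate part is designing the prediction and the adaptive reveal so that a \emph{single} committed perturbation of size at most $D$ makes both the ``deprive agent~$1$'' and the ``deprive agent~$2$'' continuations violate \aefx, while respecting normalization and the per-agent budget. Establishing that the two cases jointly exhaust \alg's strategies, and that the hedged perturbation simultaneously breaks both, is the crux of the argument.
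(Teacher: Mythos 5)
Your high-level framework is the right one and matches the paper's: an adversary that commits to (nearly) swap-symmetric $4$-value predictions over four goods, reveals each true value before the good is allocated, adapts only on past placements, and plays ``concentrated'' continuations against ``balanced'' ones so that the final threshold is $\max\left\{\frac{1-a}{4}, \frac{1-a}{6a}\right\} = \frac{1-a}{\min\{6a,4\}}$. But the proposal stops exactly where the proof begins, and you say so yourself: the ``crux'' --- one committed perturbation that simultaneously breaks every continuation --- is left unestablished. In the paper this is done explicitly: $p_1 = \left(2\varepsilon, 2\lambda, \frac{1}{2}-2\varepsilon-\lambda, \frac{1}{2}-\lambda\right)$ and $p_2$ with the first two coordinates swapped; the true values of $g_1,g_2$ equal the predictions, so the entire error budget sits in $g_3,g_4$ (not, as you suggest, in ``the pivotal good and a dummy good''); and after seeing where $g_1,g_2$ went, the adversary chooses $v_i(g_3),v_i(g_4)$ so that \emph{all} placements of $g_3,g_4$ fail. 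Making this work forces two coupled constraints on a single instance: a lower bound $\varepsilon > \left(\frac{1}{2}-\lambda\right)\frac{1-a}{1+a}$ coming from the balanced continuations and an upper bound $\varepsilon < \frac{a}{2(2-a)} - \lambda\cdot\frac{2+a}{2-a}$ coming from the concentrated ones, and the stated threshold emerges from optimizing $\lambda$ subject to both ($\lambda=\varepsilon$ when $a>\frac{2}{3}$, and $\varepsilon+\lambda=\frac{1}{6}$ when $a\le\frac{2}{3}$). Your bookkeeping --- ``the algorithm picks the case, so the adversary needs the max of two independent per-case thresholds'' --- is a post-hoc reading of the final bound: the two ``cases'' live in the same committed instance, their constraints pull $\varepsilon$ in opposite directions, and neither threshold is realizable in isolation. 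Exhibiting a feasible $(\varepsilon,\lambda)$ region and verifying every continuation against it is the entire content of the theorem.

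Beyond the deferred crux, several concrete claims would fail in a write-up. Your case-1 arithmetic ($\frac{1/2-2D}{1/2}=1-4D$, threshold $\frac{1-a}{4}$) does match one of the paper's continuations when $\lambda=\varepsilon=D$, but your case 2 as described is impossible: under normalization, if the deprived agent's bundle is ``worth about half to her,'' the other bundle is worth about half as well and its EFX-value is strictly less, so her ratio exceeds $1$ and no \aefx violation occurs for any $a\le 1$; the actual tight configuration for $\frac{1-a}{6a}$ has the deprived agent's bundle worth $\frac{1}{3}$ against an envied EFX-value of $\frac{1}{3a}$, and the ``short computation'' you invoke is never done. Your two-case taxonomy is also not exhaustive: with four goods the adversary must branch on all four placements of $g_1,g_2$, each with its own continuations, and your ``up to relabelling, the pivotal good goes to agent 1'' shortcut is unavailable because the predictions are only symmetric up to swapping $g_1,g_2$ between the agents --- the paper handles the second split (its case (iv)) not by symmetry but via the extra inequality $\varepsilon\ge\lambda$, which lets it be dominated by case (iii). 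Finally, the normalization and per-agent TV checks are promised rather than performed. As it stands, this is a plan consistent with the theorem, not a proof of it.
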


\begin{proof}
    Consider an arbitrary $a \in \left( \frac{1}{2}, 1 \right]$ and an \aefx algorithm with prediction accuracy $\eta < 1 - \frac{1-a}{\min\{ 6a, 4 \}}$. The adversary fixes a rational number
    $\lambda \in \begin{cases}
			\left[0, \frac{2a-1}{6a} \right), & \text{if $a \in \left(\frac{1}{2}, \frac{2}{3} \right]$}\\
            \left(\frac{1-a}{4}, \frac{a}{8} \right), & \text{if $a \in \left(\frac{2}{3}, 1 \right]$}
		 \end{cases}$,
    and sets $\varepsilon = \begin{cases}
			\frac{1}{6} - \lambda, & \text{if $a \in \left(\frac{1}{2}, \frac{2}{3} \right]$}\\
            \lambda, & \text{if $a \in \left(\frac{2}{3}, 1 \right]$}
		 \end{cases}$.
    Notice that for both $a \in \left(\frac{1}{2}, \frac{2}{3} \right]$ and $a \in \left(\frac{2}{3}, 1 \right]$, we have
    \begin{align}\label{eq:bounds_eps}
        \varepsilon \in \left( \left( \frac{1}{2} - \lambda \right) \cdot \frac{1-a}{1+a}, \quad \frac{a}{2(2-a)} - \lambda \cdot \frac{2+a}{2-a} \right) 
    \end{align}
    due to the domain of $\lambda$.
    The adversary also provides the following prediction to the algorithm for the two agents: $p_1(g_1) = 2 \varepsilon$, $p_1(g_2) = 2 \lambda$, $p_1(g_3) = \frac{1}{2} - 2 \varepsilon - \lambda$, $p_1(g_4) = \frac{1}{2} - \lambda$, and $p_2(g_1) = 2 \lambda$, $p_2(g_2) = 2 \varepsilon$, $p_2(g_3) = \frac{1}{2} - 2 \varepsilon - \lambda$, $p_2(g_4) = \frac{1}{2} - \lambda$. The adversary then reveals the true values $v_1(g_1) = 2 \varepsilon$, $v_1(g_2) = 2 \lambda$, and $v_2(g_1) = 2 \lambda$, $v_2(g_2) = 2 \varepsilon$. There are four cases:

    (i) If $g_1$ and $g_2$ get allocated to agent 1, then the adversary reveals true values $v_1(g_3) = \frac{1}{2} - 3 \varepsilon - \lambda$, $v_1(g_4) = \frac{1}{2} + \varepsilon - \lambda$, and $v_2(g_3) = \frac{1}{2} - \varepsilon - \lambda$, $v_2(g_4) = \frac{1}{2} - \varepsilon - \lambda$. Then, if both $g_3, g_4$ get allocated to agent 1, then agent 2 is EFX-envious towards agent 1, and this is a $0$-EFX allocation. If both goods get allocated to agent 2, then agent 1 is EFX-envious to agent 2 and this is an \praefx allocation with $a' = \frac{2 \varepsilon + 2 \lambda}{1/2 + \varepsilon - \lambda} < a$, where the inequality comes from the upper bound of $\varepsilon$ in \cref{eq:bounds_eps}. If $g_3$ gets allocated to agent 1 and $g_4$ to agent 2, then this is an \praefx allocation with $a' \leq \frac{1/2 - \varepsilon - \lambda}{1/2 + \varepsilon - \lambda} < a$, where the inequality holds for all the domain of $a$ due to the lower bound of $\varepsilon$ in \cref{eq:bounds_eps}.

    (ii) If $g_1$ and $g_2$ get allocated to agent 2, then the adversary reveals true values $v_1(g_3) = \frac{1}{2} - \varepsilon - \lambda$, $v_1(g_4) = \frac{1}{2} - \varepsilon - \lambda$, and $v_2(g_3) = \frac{1}{2} - 3 \varepsilon - \lambda$, $v_2(g_4) = \frac{1}{2} + \varepsilon - \lambda$. The analysis is symmetric to that of case (i) and shows that all allocations induce an \praefx allocation for $a' < a$.

    (iii) If $g_1$ gets allocated to agent 1 and $g_2$ gets allocated to agent 2, then the adversary reveals true values $v_1(g_3) = \frac{1}{2} - 3 \varepsilon - \lambda$, $v_1(g_4) = \frac{1}{2} + \varepsilon - \lambda$, and $v_2(g_3) = \frac{1}{2} - 3 \varepsilon - \lambda$, $v_2(g_4) = \frac{1}{2} + \varepsilon - \lambda$. 
    If both $g_3$ and $g_4$ get allocated to the same agent, w.l.o.g. agent 2, then agent 1 is EFX-envious towards agent 2, and $v_1(\{g_2, g_3, g_4\}) - \min \{  v_1(g_2), v_1(g_3), v_1(g_4) \} \leq v_1(\{g_2, g_4\}) = \frac{1}{2} + \varepsilon + \lambda$. So this is an \praefx allocation with $a' \leq \frac{2 \varepsilon}{1/2 + \varepsilon + \lambda} < a$, which holds due to the fact that $\varepsilon < \frac{a}{2(2-a)} - \lambda \cdot \frac{2+a}{2-a} \leq \left( \frac{1}{2} + \lambda \right) \frac{a}{2-a}$ for $\lambda \geq 0$, where the first inequality comes from \cref{eq:bounds_eps}. If the two last goods get allocated to distinct agents, w.l.o.g., $g_3$ gets allocated to agent 1 and $g_4$ to agent 2, then agent 1 is EFX-envious towards agent 2, and $v_1(\{g_2, g_4\}) - \min \{  v_1(g_2), v_1(g_4) \} \leq v_1(g_4) = \frac{1}{2} + \varepsilon - \lambda$. This is an \praefx allocation with $a' \leq \frac{1/2 - \varepsilon - \lambda}{1/2 + \varepsilon - \lambda} < a$, where the inequality holds for all the domain of $a$ due to the lower bound of $\varepsilon$ in \cref{eq:bounds_eps}.

    (iv) If $g_1$ gets allocated to agent 2 and $g_2$ gets allocated to agent 1, then the adversary reveals the same true values as in case (iii). Here we simply use the fact that $\varepsilon \geq \lambda$: this is obvious when $a \in \left( \frac{2}{3}, 1 \right]$, and for the case where $a \in \left( \frac{1}{2}, \frac{2}{3} \right]$, we have $\lambda \leq \frac{2a - 1}{6a} \leq \frac{1}{12}$, which implies, $\varepsilon = \frac{1}{6} - \lambda \geq \frac{1}{12} \geq \lambda$. Therefore, agents 1 and 2 now have bundles that are worth to them at most as much as they were worth in case (iii), and so if in case (iii) there was no \aefx allocation, the same holds here.

    Finally, notice that the error between the prediction and the true values in all the above cases is $\varepsilon$: if $a \in \left( \frac{1}{2}, \frac{2}{3} \right]$, then $\varepsilon = \frac{1}{6} - \lambda$, and $\lambda$ can take any value strictly smaller than $\frac{2a-1}{6a}$, so $\varepsilon$ can take any value strictly greater than $\frac{1-a}{6a}$; if $a \in \left( \frac{2}{3}, 1 \right]$, then $\varepsilon = \lambda$, and $\lambda$ can take any value strictly greater than $\frac{1-a}{4}$.
\end{proof}

Figure~\ref{fig:2_agents_general} illustrates the case of two agents with additive, normalized, but not necessarily identical valuations. The dotted green line shows the upper bound on the prediction error (corresponding to the lower bound on the accuracy) proven in \cref{thm:with_predictions_non-id_2_ag_a-EFX_inapprox}, which holds for $a \in (1/2 , 1]$. For some $a \in (0, 1/2]$, it is still possible that there is an algorithm that guarantees an \aefx allocation when the error is bounded away from 1 by a constant -- we have no such algorithm, though. The dash-dot blue line shows the lower bound on the error (corresponding to the upper bound on the accuracy) provided by \cref{cor:with_predictions_n_ag_a-EFX_positive} for $\tila = 1$.\footnote{For the case of two agents with additive valuations, there is a simple offline algorithm (cut-and-choose) that guarantees an exact EFX: first, run LPT (\cref{alg:LPT}) on agent 1's prediction to create a pair of bundles which constitute an exact EFX according to $p_1$; then, let agent 2 choose which bundle she prefers according to $p_2$, and the other one is given to agent 1.}

 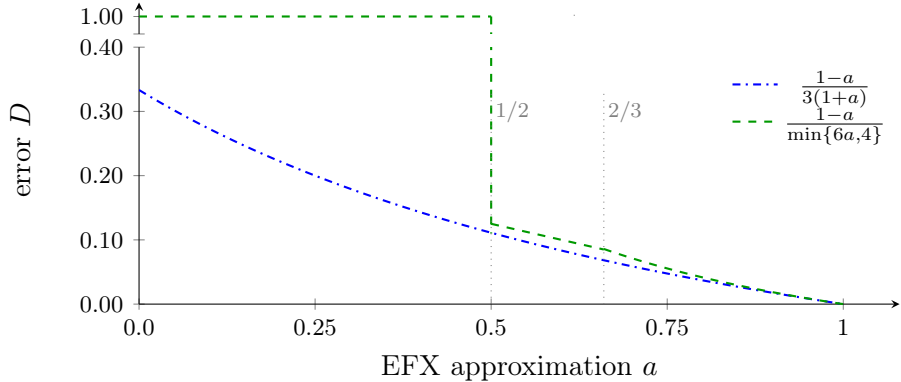
\begin{figure}[tbh]
     \centering
     \begin{tikzpicture}
\begin{groupplot}[
  group style={
    group name=broken,
    group size=1 by 2,
    vertical sep=5pt
  },
  tick label style={font=\footnotesize},
  width=0.75*\textwidth,
  xmin=0.0, xmax=1.08,
  xtick={0, 0.25, 0.5, 0.75, 1},
  xticklabels={0.0, 0.25, 0.5, 0.75, 1},
  domain=0.6:1.1,
  samples=200,
  scaled y ticks=false,
  scaled x ticks=false,
  tick label style={
    /pgf/number format/fixed,
    /pgf/number format/precision=2,
    /pgf/number format/fixed zerofill
  }
]

% === TOP PLOT: constant value ===
\nextgroupplot[
  height=2cm,
  ymin=0.9, ymax=1.08,
  ytick={0.9,1},
  yticklabels={,1.00},
  xticklabels={},
  y axis line style={-}, 
  axis lines=left,
  axis x line=none,
]

% Constant plot on [0, phi - 1]
\addplot[green!60!black, dashed, thick, domain=0:0.5] {1};
%\addplot[red!60!black, thick, domain=0:0.618] {1};

\addplot[blue, thick, dash dot, domain=0.0:1] 
  {(1 - x)/(3*(1+x))};

% Vertical line at phi - 1 (top half)
\addplot[green!60!black, thick, dashed] coordinates {(0.5, 1.0) (0.5, 0.16)};
%\addplot[red!60!black, thick] coordinates {(0.618, 1.0) (0.618, 0.118)};

% Vertical line at phi - 1 (top half)
\addplot[dotted, gray] coordinates {(0.618, 1.01) (0.618, 0.98)};

% === BOTTOM PLOT: your original ===
\nextgroupplot[
  height=5cm,
  ymin=0, ymax=0.4,
  axis lines=left,
  y axis line style={-}, 
  % remove arrow from y-axis
  ytick={0, 0.1, 0.2, 0.3, 0.4},
  xlabel={EFX approximation $a$},
  ylabel style={at={(axis description cs:-0.13,0.6)}},
  ylabel={error $D$},
  legend style={
    at={(1,0.95)},
    anchor=north east,
    draw=none,
    fill=none,
    font=\small
  }
]

% === MANUAL LEGEND ENTRIES FIRST ===
\addlegendimage{line legend, blue, thick, dash dot}
\addlegendentry{$\frac{1 - a}{3(1 + a)}$}

\addlegendimage{line legend, green!60!black, thick, dashed}
\addlegendentry{$\frac{1 - a}{\min\{6 a, 4\}}$}

%\addlegendimage{line legend, red!60!black, thick}
%\addlegendentry{$\frac{(4 + a - a^2)(1-a)}{(2 + a)(5-a)(1+a)}$}

% === ACTUAL PLOTS ===

%vertical Green
\addplot[green!60!black, thick, dashed] coordinates {(0.5, 1.0) (0.5, 0.11)};

%vertical bottom red
%\addplot[red!60!black, thick] coordinates {(0.618, 1) (0.618, 0.087)};

%vertical bottom red
%\addplot[blue!60!black, dash dot, thick] coordinates {(0.618, 0.087) (0.618, 0.078)};

\addplot[blue, thick, dash dot, domain=0.0:1] 
  {(1 - x)/(3*(1+x))};

\addplot[green!60!black, thick, dashed, domain=0.5:0.66] 
  {(1 - x)/(4)};
\addplot[green!60!black, dashed, thick, domain=0.66:1] 
  {(1 - x)/(6*x)};

%\addplot[red!60!black, thick, domain=0.618:1]
%  {((4+x-x*x)*(1-x))/((2+x)*(5-x)*(1+x))};

% === VERTICAL LINES ===
\addplot[dotted, gray] coordinates {(0.66, 0) (0.66, 0.33)};
\node[gray] at (axis cs:0.69,0.30) {\scriptsize$2/3$};

\addplot[dotted, gray] coordinates {(0.5, 0) (0.5, 0.33)};
\node[gray] at (axis cs:0.53,0.30) {\scriptsize$1/2$};

\end{groupplot}
\end{tikzpicture}    
     \caption{Prediction error $D=1-\eta$ as a function of $a \in [0, 1]$ for 2 agents with additive, normalized valuations. Dash-dot plot: \cref{cor:with_predictions_n_ag_a-EFX_positive}, dashed plot: \cref{thm:with_predictions_non-id_2_ag_a-EFX_inapprox}.}
     \label{fig:2_agents_general}
 \end{figure}

When the valuation functions are identical, the accuracy lower bounds become slightly better. Recall that in \cref{thm:no_predictions_phi-EFX_upper_bound} we showed a simple greedy algorithm that achieves $(\varphi - 1)$-EFX allocations without predictions, and from \cref{thm:no_predictions_phi-EFX_inapprox_bound} we deduced that to find a better approximation, we need access to predictions. In the following theorem, we show that even if an algorithm has access to predictions of up to a certain accuracy level, it still cannot guarantee an \aefx for any $a \in (\varphi - 1, 1]$.

\begin{theorem}\label{thm:with_predictions_comb-beyond-phi-EFX_inapprox_bound}
    Suppose we have $2$ agents with additive, identical, normalized valuations, with a provided prediction of accuracy $\eta < 1 - \frac{1-a}{\min\{2a(2+a), 4\}}$ for some given $a \in \left( \varphi - 1, 1 \right]$, that is, the error between the prediction and the true valuation is $D = 1 - \eta > \frac{1-a}{\min\{2a(2+a), 4\}}$. Then, there is no algorithm that guarantees an \aefx allocation, even when $T'=T=4$, and the predictions and the true valuations are $4$-value functions.
\end{theorem}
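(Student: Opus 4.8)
The plan is to run the same kind of adaptive-adversary argument as in \cref{thm:with_predictions_non-id_2_ag_a-EFX_inapprox}, but adapted to the fact that now the two agents share a single prediction $p$ and a single true valuation $v$. In that theorem the adversary exploited per-agent asymmetry --- predicting and revealing $g_1,g_2$ with swapped values across the agents --- which is no longer available. Instead, all of the hardness must be encoded in the structure of one prediction vector together with an adaptive, allocation-dependent revelation of the true values of the later goods. Exactly as there, I would split on the branch of $\min\{2a(2+a),4\}$, i.e.\ at the root $a=\sqrt{3}-1$ of $2a(2+a)=4$: for $a\in(\varphi-1,\sqrt{3}-1]$ I target the bound $\frac{1-a}{2a(2+a)}$, and for $a\in(\sqrt{3}-1,1]$ the bound $\frac{1-a}{4}$. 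In each regime I fix an error parameter $\varepsilon$ (and an auxiliary slack $\lambda$) so that the TV distance between $p$ and $v$ equals $\varepsilon$, chosen strictly above the stated threshold.

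For the construction I would use two ``small'' goods $g_1,g_2$ and two ``large'' goods $g_3,g_4$, with $p(g_1)=p(g_2)$ small and $p(g_3)=p(g_4)\approx\tfrac12$ summing to $1$ (so $p$ is a $2$- or $3$-value function, well within the $4$-value restriction). I reveal the true values of $g_1,g_2$ first, equal to their predictions so that they contribute no error, and watch where the algorithm places them. Because the valuations are identical and $p(g_1)=p(g_2)$, the four placement patterns of $g_1,g_2$ collapse by symmetry into two essential cases: the two small goods land with the same agent, or they are split. In each case I then spend the entire error budget $\varepsilon$ on a single symmetric perturbation of $v(g_3),v(g_4)$ away from their predictions, and branch on where the algorithm sends the large goods. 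In the extreme leaves one agent is left with a bundle that the other dominates even after deleting its cheapest good, giving a $0$-EFX outcome; in the remaining leaves I bound the realized ratio $a'=v(A_i)/v(\xset{A}_j)$ for the envying agent $i$ and show $a'<a$, using the upper and lower bounds on $\varepsilon$ forced by the chosen interval of $\lambda$ --- the same style of inequality chain as in \cref{thm:with_predictions_non-id_2_ag_a-EFX_inapprox}.

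The main obstacle is designing the single symmetric perturbation of $g_3,g_4$ that simultaneously defeats every sub-allocation of the large goods under both first-stage cases. In the non-identical proof the adversary could tune the revealed values of $g_3,g_4$ independently for each agent, giving it two free directions; the identical-valuation constraint collapses these to one common perturbation, so $\varepsilon$ and $\lambda$ must be calibrated more tightly. This is exactly what makes the attainable threshold rise from $\frac{1-a}{6a}$ to $\frac{1-a}{2a(2+a)}$ --- since $2a(2+a)<6a$ on $(\varphi-1,1)$, a strictly larger error is required to force failure when the agents are identical, and the weaker asymmetry available to the adversary is what turns the linear denominator $6a$ into the quadratic $2a(2+a)$. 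To finish, I would verify in each branch that the total TV distance equals $\varepsilon$, that $\varepsilon$ can be pushed arbitrarily close to (and above) $\frac{1-a}{\min\{2a(2+a),4\}}$ as $\lambda$ sweeps its interval, and that both $p$ and the revealed $v$ are $4$-value functions, so the impossibility holds in the restricted class as claimed.
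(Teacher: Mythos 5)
Your high-level skeleton does match the paper's proof: four goods with $T'=T=4$, a split at $a=\sqrt{3}-1$ between the $2a(2+a)$ and $4$ branches, true values of the two small goods revealed first at their predicted values, and an adaptive perturbation of the two large goods depending on whether $g_1,g_2$ landed together or were split. But the concrete construction you commit to --- a \emph{symmetric} prediction with $p(g_1)=p(g_2)$ and $p(g_3)=p(g_4)\approx\tfrac12$ --- provably cannot reach the claimed threshold, and this is structural, not a matter of calibrating $\varepsilon$ and $\lambda$. With $p(g_3)=p(g_4)$ and normalization pinning $v(g_3)+v(g_4)=1-2p(g_1)$ once the small goods are revealed, a TV budget of $\varepsilon$ lets you separate the true values of the large goods by at most $2\varepsilon$. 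In the split case ($g_1$ to agent 1, $g_2$ to agent 2, each worth $c:=p(g_1)$), reveal $v(g_3)=\tfrac12-c-\varepsilon$ and $v(g_4)=\tfrac12-c+\varepsilon$; the algorithm's escape is to match, producing bundles of value $\tfrac12-\varepsilon$ and $\tfrac12+\varepsilon$, where $\xset{\cdot}$ removes the $c$-good from the richer bundle, so the envious agent attains
\begin{align*}
a' \;=\; \frac{1/2-\varepsilon}{1/2-c+\varepsilon}.
\end{align*}
Forcing $a'<a$ requires $\varepsilon > \frac{1-a}{2(1+a)} + \frac{ac}{1+a} \geq \frac{1-a}{2(1+a)}$, and $\frac{1-a}{2(1+a)}$ is strictly larger than both $\frac{1-a}{2a(2+a)}$ (since $a^2+a-1>0$ for $a>\varphi-1$) and $\frac{1-a}{4}$ (since $2(1+a)<4$ for $a<1$). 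So your adversary needs strictly more error than the theorem allows; in effect your construction degenerates to the paper's 2-value bound (\cref{thm:with_predictions_2_ag_binary_a-EFX_inapprox_bound}), which only yields the weaker threshold $\frac{1-a}{2}$ and only for $a>\sqrt{3}-1$.

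The missing idea is asymmetry \emph{inside the single shared prediction vector}, which is precisely why the statement needs 4-value (not 2-value) predictions. The paper sets $p(g_3)=\tfrac12-2\varepsilon-\lambda$ and $p(g_4)=\tfrac12-\lambda$ (and, in the regime $a\in(\varphi-1,\sqrt{3}-1]$, also $p(g_1)=2\lambda\neq 2\varepsilon=p(g_2)$). Because the predicted large goods already differ by $2\varepsilon$, the adversary can in the split case push $v(g_3)$ down and $v(g_4)$ up by $\varepsilon$ each, achieving a $4\varepsilon$ gap ($\tfrac12-3\varepsilon-\lambda$ versus $\tfrac12+\varepsilon-\lambda$) at TV cost exactly $\varepsilon$, while the \emph{same} prediction also supports the equalizing revelation $v(g_3)=v(g_4)=\tfrac12-\varepsilon-\lambda$ for the same-agent case, again at cost exactly $\varepsilon$. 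This factor-of-two saving in the cost of creating a gap is exactly the difference between your attainable denominator $2(1+a)$ and the theorem's $2a(2+a)$ and $4$; the parameter $\lambda$ additionally keeps $g_1$ the minimum-value good so the $\xset{\cdot}$ computations in each leaf come out as needed. Two smaller corrections: the two-case collapse for the placement of $g_1,g_2$ follows from relabeling the identical agents, not from $p(g_1)=p(g_2)$ (in the paper's first regime $p(g_1)\neq p(g_2)$, yet the collapse holds); and your intuition that identical valuations \emph{weaken} the adversary is backwards here --- the identical-valuation bound $\frac{1-a}{\min\{2a(2+a),4\}}$ is \emph{smaller} than the non-identical $\frac{1-a}{\min\{6a,4\}}$ on the common range, i.e., less error suffices to defeat algorithms in this theorem, not more.
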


\begin{proof}
    For the sake of contradiction, suppose there is an algorithm that for some $a \in (\varphi - 1, 1]$ and accuracy $\eta < 1 - \frac{1-a}{\min\{2a(2+a), 4\}}$, guarantees an \aefx allocation. 
    
    If $a \in (\varphi - 1, \sqrt{3}-1]$, and therefore, $2a(2+a) \leq 4$, the adversary fixes a rational $r \in \left( 0, D - \frac{1-a}{2a(2+a)} \right]$, sets $\lambda = \max \left\{ \frac{a^2 + a - 1}{2a(2+a)} - r, 0 \right\}$, and an $\varepsilon \in \left( \frac{1-a}{2a(2+a)} + r \cdot \frac{1-a}{1+a}, \frac{1-a}{2a(2+a)} + r \right)$. It then gives the following prediction to the algorithm: $p(g_1) = 2 \lambda$, $p(g_2) = 2 \varepsilon$, $p(g_3) = \frac{1}{2} - 2 \varepsilon - \lambda$, and $p(g_4) = \frac{1}{2} - \lambda$, and observe that $\varepsilon \geq \lambda$. The adversary reveals the true values $v(g_1) =  2 \lambda$, $v(g_2) = 2 \varepsilon$. There are two cases:

    (i) If $g_1$ and $g_2$ get allocated to the same agent, w.l.o.g. agent 1, then the adversary reveals the true values $v(g_3) = v(g_4) = \frac{1}{2} - \varepsilon - \lambda$. If $g_3$ and $g_4$ get allocated to agent 1, this is a $0$-EFX allocation and so the algorithm fails to provide an \aefx allocation for the specified $a$. If $g_3$ and $g_4$ get allocated to agent 2, then agent 2 does not EFX-envy agent 1, since $1 - 2 (\varepsilon + \lambda) \geq 2 \varepsilon$ for the given domains of $\varepsilon$ and $a$, and this is an \praefx allocation with $a' = \frac{2 (\varepsilon + \lambda)}{1/2 - \varepsilon - \lambda} < a$, where the inequality comes from the fact that $\varepsilon < \frac{1-a}{2a(2+a)} + r$. If $g_3$ is allocated to agent 1 and $g_4$ is allocated to agent 2, then agent 1 does not envy agent 2, and notice that $g_1$ is the smallest-valued good that agent 1 has, since $2 \lambda \leq 2 \varepsilon \leq \frac{1}{2} - \varepsilon - \lambda$ for the given domains of $\varepsilon$ and $a$. This is an \praefx allocation with $a' = \frac{1/2 - \varepsilon - \lambda}{1/2 + \varepsilon - \lambda} < a$, where the inequality comes from the fact that $\varepsilon > \frac{1-a}{2a(2+a)} + r \cdot \frac{1-a}{1+a}$. If $g_3$ is allocated to agent 2 and $g_4$ is allocated to agent 1, a symmetric argument applies.

    (ii) If w.l.o.g. $g_1$ is allocated to agent 1 and $g_2$ is allocated to agent 2, then the adversary reveals the true values $v(g_3) = \frac{1}{2} - 3 \varepsilon - \lambda$ and $v(g_4) = \frac{1}{2} + \varepsilon - \lambda$. Notice that $g_1$ is still the smallest-valued good among all, since $2 \lambda \leq 2 \varepsilon \leq \frac{1}{2} - 3 \varepsilon - \lambda \leq \frac{1}{2} + \varepsilon - \lambda$ for the given domains of $\varepsilon$ and $a$. If both $g_3$ and $g_4$ get allocated to agent 1, then he does not EFX-envy agent 2, and this is an \praefx allocation for $a' = \frac{2 \varepsilon}{1 - 2 (\varepsilon + \lambda)} \leq \frac{2( \varepsilon + \lambda)}{1/2  - \varepsilon - \lambda} < a$, where the last inequality comes from one of the bounds in case (i). If both $g_3$ and $g_4$ get allocated to agent 2, then he does not EFX-envy agent 1, and this is an \praefx allocation for $a' = \frac{2 \lambda}{1 - 2 (\varepsilon + \lambda)} \leq \frac{2 \varepsilon}{1 - 2 (\varepsilon + \lambda)} < a$, where the last inequality comes from the previous bound. If $g_3$ gets allocated to agent 2 and $g_4$ gets allocated to agent 1, then agent 1 does not envy agent 2, and this is an \praefx for $a' = \frac{1/2 - \varepsilon - \lambda}{1/2 + \varepsilon - \lambda} < a$, where the inequality comes from one of the bounds in case (i). If $g_3$ goes to agent 1 and $g_4$ goes to agent 2, then agent 2 does not envy agent 1, and this is an \praefx for $a' = \frac{1/2 - 3 \varepsilon + \lambda}{1/2 + \varepsilon - \lambda} \leq \frac{1/2 - \varepsilon - \lambda}{1/2 + \varepsilon - \lambda} < a$, where the last inequality comes from the previous bound.

    From the above cases we can see that the algorithm fails to output an \aefx allocation. Now observe that the error between the prediction and the true values in the two cases is equal to $\varepsilon < \frac{1-a}{2a(2+a)} + r \leq D$, which contradicts our assumption that the algorithm can provide an \aefx allocation for any error of value (at most) $D$.

    If $a \in (\sqrt{3} - 1, 1]$, or equivalently, $2a(2+a) > 4$, the adversary fixes a rational $\varepsilon \in \left( \frac{1-a}{4}, \min \left\{ \frac{a}{4(2+a)}, D \right\} \right)$ and gives the following prediction to the algorithm: $p(g_1) = 2 \varepsilon$, $p(g_2) = 2 \varepsilon$, $p(g_3) = \frac{1}{2} - 3 \varepsilon$, and $p(g_4) = \frac{1}{2} - \varepsilon$. Then, we follow the same case analysis as above, where now we substitute $\lambda$ with $\varepsilon$. Therefore, the algorithm fails to output an \aefx allocation. Notice that, again, the error of this instance equals $\varepsilon$, which can take any value in $\left( \frac{1-a}{4}, \min \left\{ \frac{a}{4(2+a)}, D \right\} \right)$, a non-empty interval for $a \in (\sqrt{3}-1, 1]$. Therefore, the adversary can choose some $\varepsilon$ arbitrarily close to the lower limit of that interval, making the algorithm fail to provide an \aefx for error less than $D$, a contradiction.
\end{proof}

When we have $n \geq 3$ agents, this bound becomes significantly worse, especially for large values of $a$. Furthermore, we get non-zero lower bounds for small values of $a$, contrary to the case of two agents. We show this in two auxiliary lemmas, specifically \cref{lem:with_predictions_3_ag_ternary_a-EFX_inapprox} and \cref{lem:with_predictions_n_ag_ternary_a-EFX_inapprox}, where the former dominates for small values of $a$, and the latter for large values of $a$. \cref{thm:with_predictions_n_ag_ternary_a-EFX_inapprox_comb} combines them into a single statement.

\begin{lemma}\label{lem:with_predictions_3_ag_ternary_a-EFX_inapprox}
    Suppose we have $n \geq 3$ agents with additive, identical, normalized valuations, with a provided prediction of accuracy $\eta < 1 - \frac{1}{2(n-1 + 2a)}$ for some given $a \in \left( 0, 1 \right]$, that is, the error between the prediction and the true valuation is $1 - \eta > \frac{1}{2(n-1 + 2a)}$. Then, there is no algorithm that guarantees an \aefx allocation, even when $T'=T=n+1$, and the predictions and the true valuations are $3$-value functions.
\end{lemma}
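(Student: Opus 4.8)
The plan is to argue by contradiction via an \emph{adaptive adversary}, in the same spirit as the no-prediction impossibility of \cref{thm:no_predictions_3_ag_0-EFX_inapprox} but now constrained to deviate from an announced prediction by total variation at most the error budget, exactly as in the two-agent construction of \cref{thm:with_predictions_comb-beyond-phi-EFX_inapprox_bound}. So I would assume that some algorithm guarantees an \aefx allocation on every instance whose error is at most $D$, where $D > \frac{1}{2(n-1+2a)}$, and then exhibit a single $3$-value prediction on $n+1$ goods, together with an adaptively revealed $3$-value true valuation of error $\varepsilon \le D$ that can be pushed arbitrarily close to $\frac{1}{2(n-1+2a)}$ from above, on which the algorithm is forced to output an allocation that is only \praefx for some $a' < a$.

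For the prediction I would use three levels: one \emph{small} good, a block of $n-1$ equal \emph{medium} goods, and one \emph{large} good (so exactly $n+1$ goods and exactly three distinct values). These values are chosen so that the predicted profile admits a balanced allocation in which almost every agent holds a single good and — forced by the surplus good, since $n+1$ goods are split among $n$ agents — exactly one agent holds two. The adversary reveals the goods in an order that first fixes the true values equal to the predictions for the goods it does not intend to perturb, watching where the algorithm commits them, and defers the goods on which it will spend its budget (the small and the large, with the medium goods as ballast). On those deferred goods it sets true values differing from the predictions by a total variation of exactly $\varepsilon$, choosing the \emph{direction} of the shift adaptively according to which agent the algorithm has doubled up and which agent it has left ``light''; crucially the shifted values stay among the same three levels, so the realized valuation remains a $3$-value function and the normalization $\sum_t v(g_t)=1$ is preserved.

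The failure is then forced by pigeonhole: among the $n+1$ goods distributed to $n$ agents, some agent $j$ receives at least two goods, whose EFX-value $v(\xset{A}_j)$ is the larger of its realized goods, while some agent $i$ is left holding only a good the adversary has driven down (an agent left empty is immediately \aefx-envious of any doubled agent, so the relevant branches give every agent at least one good). I would run a short case analysis over the commitment patterns — whether the algorithm clusters or spreads the unperturbed goods, and which agent ends up doubled — and in each branch pick the direction of the $\varepsilon$-shift so that $v(A_i) < a\, v(\xset{A}_j)$, i.e. agent $i$ is \aefx-envious of $j$. Tracking these two quantities as functions of $\varepsilon$, the worst branch stops being \aefx precisely when $\varepsilon$ exceeds $\frac{1}{2(n-1+2a)}$, where the $n-1$ counts the bundles competing for the shifted mass, the factor $2$ reflects the good removed in the EFX comparison, and the dependence on $a$ reflects the approximation slack; since $\varepsilon$ may be taken in $\left(\frac{1}{2(n-1+2a)}, D\right]$, which is non-empty by hypothesis, this contradicts the assumed guarantee.

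The main obstacle, as in \cref{thm:with_predictions_comb-beyond-phi-EFX_inapprox_bound}, is that a \emph{single} prediction must simultaneously render every adaptive continuation admissible (within budget $\varepsilon$, still $3$-value and normalized) and force failure in every allocation branch. Unlike the no-prediction setting of \cref{thm:no_predictions_3_ag_0-EFX_inapprox}, the adversary can no longer answer one commitment with ``a large good and zeros'' and another with ``all-equal goods'', since those continuations are far apart in total variation and cannot both lie close to one prediction; the perturbation must be small and its direction the only adaptive freedom. Making this limited freedom suffice against all commitment patterns for general $n$ — where there are many more branches than in the two- or three-good analyses — is where the careful choice of the three levels and of the arrival order does the work, and is the step I expect to demand the most delicate bookkeeping.
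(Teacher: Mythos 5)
There is a genuine gap, and it sits exactly where you placed your structural bets. Your final paragraph asserts that the adversary ``can no longer answer one commitment with `a large good and zeros' and another with `all-equal goods', since those continuations are far apart in total variation and cannot both lie close to one prediction; the perturbation must be small and its direction the only adaptive freedom.'' The paper's proof does precisely what you ruled out: it chooses the prediction to be the TV-\emph{midpoint} of two far-apart continuations. Concretely, the prediction is $p(g_1)=p(g_2)=\varepsilon$, $p(g_t)=\frac{2n-3}{(n-1)(n-2)}\left(\frac12-\varepsilon\right)$ for $t\in\{3,\dots,n\}$, and $p(g_{n+1})=\frac{1}{n-1}\left(\frac12-\varepsilon\right)$; the two $\varepsilon$-goods are revealed truthfully, and the adversary branches on whether the algorithm doubles them on one agent or spreads them. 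The doubled branch is answered with inflated mediums ($\frac{1-2\varepsilon}{n-2}$ each) plus a \emph{zeroed} final good, which forces an empty agent facing an agent with two positive goods, i.e.\ $0$-EFX for every $a>0$; the spread branch is answered with $n-1$ equal goods of value $\frac{1-2\varepsilon}{n-1}$, where pigeonhole gives ratio at most $\frac{\varepsilon}{(1-2\varepsilon)/(n-1)}<a$. Both continuations sit at TV distance exactly $\frac{1}{n-1}\left(\frac12-\varepsilon\right)$ from the single prediction, and letting $\varepsilon\uparrow\frac{a}{n-1+2a}$ drives this common distance down to $\frac{1}{2(n-1+2a)}$. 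So the factor $2$ in the bound comes from the midpoint halving of the distance between the two continuations, not, as you claim, from ``the good removed in the EFX comparison.''

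Your substitute mechanism --- keep the true values equal to the prediction on most goods, defer the single small and single large good, and spend a budget of TV $\approx\frac{1}{2(n-1+2a)}$ with only the \emph{direction} of the shift adaptive --- would not close the argument. First, the algorithm observes each good's true value before allocating it, so on your two deferred goods it can hedge: once every agent is non-empty (which the algorithm can guarantee with $n+1$ goods and $n$ agents if only the last arrivals are perturbed), it attaches the small good to an already-loaded agent, and with only \emph{one} small good it can always avoid leaving any agent holding just that good --- this is why the paper needs \emph{two} $\varepsilon$-goods revealed \emph{first}, to create a commitment the algorithm cannot undo. Second, your small-perturbation constraint forecloses the $0$-EFX branch: zeroing a good costs TV equal to its full predicted value (the displaced mass counts again elsewhere), which is affordable in the paper only because $p(g_{n+1})$ is deliberately set equal to the error budget; without an empty-agent branch, small $a\in(0,1]$ cannot be handled, since no perturbation of size $\Theta(1/n)$ among comparably-sized goods creates a multiplicative EFX-envy ratio below a tiny $a$. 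The ``delicate bookkeeping'' you deferred is therefore not a finishing step but the proof itself, and your stated premises (single small good, perturbation confined to deferred goods, midpoint trick disallowed) point away from any workable version of it.
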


% \begin{lemma}\label{lem:with_predictions_3_ag_ternary_a-EFX_inapprox}
%     Suppose we have $n \geq 3$ agents with additive, identical, normalized valuations, with a provided prediction of accuracy $\eta < 1 - \frac{1}{2(n-1 + 2a)}$ for some given $a \in \left( 0, 1 \right]$, that is, the error between the prediction and the true values is $1 - \eta > \frac{1}{2(n-1 + 2a)}$. Then, there is no algorithm that guarantees an \aefx allocation, even when $T'=T=n+1$, and the predictions and the true valuations are $3$-value functions.
% \end{lemma}

\begin{proof}
    For the sake of contradiction, suppose there is an algorithm that can provide an \aefx allocation for some $a \in \left( 0, 1 \right]$. The algorithm knows that $T=n+1$ and has accuracy $\eta < 1 - \frac{1}{2(n-1 + 2a)}$. The adversary fixes a rational value $\varepsilon \in \left( 0, \frac{a}{n-1+2a} \right)$, and gives to the algorithm the prediction: $p(g_1) = p(g_2) = \varepsilon$, $p(g_t) = \frac{2n - 3}{(n-1)(n-2)} \left( \frac{1}{2} - \varepsilon \right)$ for $t \in \{ 3, 4, \dots, n \}$, and $p(g_{n+1}) = \frac{1}{n-1} \left( \frac{1}{2} - \varepsilon \right)$. The adversary reveals the true values of the first two goods to the algorithm, namely $v(g_1) = v(g_2) = \varepsilon$. There are two cases: 
    
    (i) If the algorithm allocates the two goods to the same agent, say $i$, then the adversary provides $n-2$ goods of true value $p(g_t) = \frac{1 - 2 \varepsilon}{n-2}$ for $t \in \{ 3, 4, \dots, n \}$, and a final good of true value 0.\footnote{The latter good of value $0$ is given so that the adversary fulfills the promise that $T=n+1$, and does not affect the bounds on the EFX approximation ratio.} Since the goods of positive value are $n$ and two of them are allocated to the same agent, regardless of where the goods $g_3 , \dots, g_{n+1}$ are allocated, there is an agent with value $0$. Recall now that agent $i$ has two goods, each of value $\varepsilon > 0$, so this is a $0$-EFX allocation, a contradiction to the assumption that $a > 0$. 
    
    (ii) If the algorithm allocates the first two goods to two agents, say $i, j$, then the adversary provides $n-1$ goods of true value $v(g_t) = \frac{1 - 2 \varepsilon}{n-1}$ for $t \in \{ 3, 4, \dots, n+1 \}$. Notice that $\frac{1 - 2 \varepsilon}{n-1} > \varepsilon$ for any $a \leq 1$ by definition of $\varepsilon$, therefore, any of $g_3 , \dots, g_{n+1}$ is more valuable than $g_1$ and $g_2$. By the pigeonhole principle, at least one of the agents will have at least two goods (one of which has value $\frac{1 - 2 \varepsilon}{n-1}$), and at least one of them will have value at most $\varepsilon$. Notice also that, w.l.o.g., agent $i$ will be allocated (at least) one of $\{g_1 , g_2\}$ and one of $\{g_3 , \dots,  g_{n+1} \}$. So, one of the agents is EFX-envious towards $i$ and the allocation is \praefx for $a' \leq \frac{\varepsilon}{(1 - 2 \varepsilon)/(n-1)} < a$, where the last inequality comes from the upper bound of $\varepsilon$. Therefore, in this case too, the algorithm cannot provide an \aefx allocation, a contradiction.

    Finally, the error between the prediction and the two scenarios of true values of the adversary is the following. For case (i) it is 
    \begin{align*}
        \frac{1}{2} \cdot \left( 0 + 0 + (n-2) \cdot \left[ \frac{1 - 2 \varepsilon}{n-2} - \frac{(2n - 3)(1/2 - \varepsilon)}{(n-1)(n-2)} \right] + \left[ \frac{1/2 - \varepsilon}{n-1} - 0 \right] \right) = \frac{1}{n-1} \left( \frac{1}{2} - \varepsilon \right) ,
    \end{align*}
    and for case (ii) it is 
    \begin{align*}
        \frac{1}{2} \cdot \left( 0 + 0 + (n-2) \cdot \left[ \frac{(2n - 3)(1/2 - \varepsilon)}{(n-1)(n-2)} - \frac{1 - 2 \varepsilon}{n-1} \right] + \left[ \frac{1 - 2 \varepsilon}{n-1} - \frac{1/2 - \varepsilon}{n-1} \right] \right) = \frac{1}{n-1} \left( \frac{1}{2} - \varepsilon \right) ,
    \end{align*}
    for any $\varepsilon \in \left( 0, \frac{a}{n-1 + 2a} \right)$ of the adversary's choice.
\end{proof}

\begin{lemma}\label{lem:with_predictions_n_ag_ternary_a-EFX_inapprox}
    Suppose we have $n \geq 3$ agents with additive, identical, normalized valuations, with a provided prediction of accuracy $\eta < 1 - \frac{1 - a^2}{4 + (2n-3)a}$ for some given $a \in \left( 0, 1 \right]$, that is, the error between the prediction and the true valuation is $1 - \eta > \frac{1 - a^2}{4 + (2n-3)a}$. Then, there is no algorithm that guarantees an \aefx allocation, even when $T'=T=2n-1$, and the predictions and the true valuations are $3$-value functions.
\end{lemma}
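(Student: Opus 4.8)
The plan is to construct an adaptive adversary on the horizon $T=T'=2n-1$, generalizing both \cref{prop:no_true_vals_inapprox} and the two‑agent construction of \cref{thm:with_predictions_comb-beyond-phi-EFX_inapprox_bound} to $n\geq 3$ agents. Since $2n-1=2(n-1)+1$, I would choose a $3$-value prediction whose EFX allocations follow the balanced pattern already exploited in \cref{prop:no_true_vals_inapprox} (one agent receives a single good, each of the other $n-1$ agents receives two), together with a distinguished pair of ``contested'' goods $g_1,g_2$ of equal small predicted value that the algorithm must place \emph{irrevocably} before the adversary commits to the remaining true values. I would then fix a rational $\varepsilon$ whose admissible interval is calibrated so that, at its endpoint, the induced total variation distance equals exactly $\frac{1-a^2}{4+(2n-3)a}$; this calibration is what pins the construction to the stated accuracy threshold.

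Next I would reveal $v(g_1)=v(g_2)$ equal to their predictions and branch on where the algorithm places these two goods, exactly as in \cref{lem:with_predictions_3_ag_ternary_a-EFX_inapprox}. If they land in the same bundle, the adversary concentrates the remaining true mass on too few goods so that, counting against the $n$ agents, some agent is left with value $0$ while the holder of $g_1,g_2$ keeps a positive value even after the removal of its least‑valued good; this is a $0$-EFX outcome, contradicting $a>0$. If $g_1$ and $g_2$ are committed to two distinct bundles, the two tiny goods are frozen in different hands, and the adversary reveals the remaining $2n-3$ goods so as to force a poor agent who $a$-EFX-envies a richer agent even after the latter's least‑valued good is deleted. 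Bounding the resulting ratio $\frac{v(A_i)}{v(\xset{A}_j)}$ by the chosen range of $\varepsilon$ must then yield $a'<a$, so the allocation cannot be \aefx.

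Finally, as in \cref{lem:with_predictions_3_ag_ternary_a-EFX_inapprox}, I would verify that the TV distance between the prediction and the revealed true values evaluates to the \emph{same} expression in $\varepsilon$ and $n$ in every branch, so the adversary commits to a single error independent of the algorithm's choices; letting $\varepsilon$ approach its endpoint realizes errors arbitrarily close to (but strictly above) $\frac{1-a^2}{4+(2n-3)a}$, producing the contradiction for any algorithm claiming to succeed at accuracy $\eta<1-\frac{1-a^2}{4+(2n-3)a}$. The main obstacle I anticipate lies precisely in the large‑$a$ regime, which is where this bound dominates: the equal‑large‑goods flooding of \cref{lem:with_predictions_3_ag_ternary_a-EFX_inapprox} no longer suffices, since for large $a$ the algorithm can rebalance nearly‑equal goods into an \aefx allocation (already for $n=3$, five near‑equal goods can be split envy‑freely). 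The adversary must therefore use a finer value profile — exploiting the third allowed value and making the larger goods suitably \emph{unequal} — so that deleting the least‑valued good from the rich agent's bundle still leaves it worth more than $\tfrac{1}{a}$ times the poor agent's share. Reconciling (i) this residual‑envy requirement, (ii) the constraint that the redistribution keep the TV distance pinned at the prescribed value in \emph{both} branches, and (iii) nonnegativity, normalization, and the at‑most‑three‑distinct‑values restriction — i.e.\ solving for the interval of $\varepsilon$ that forces the envy ratio strictly below $a$ and matches the error formula — is the delicate computational heart of the argument.
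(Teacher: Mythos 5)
Your skeleton borrows the wrong template, and it breaks precisely in the regime where this lemma has content. You branch on where the algorithm places two small contested goods $g_1,g_2$ arriving \emph{first}, and then have the adversary reveal the remaining $2n-3$ goods — the structure of \cref{lem:with_predictions_3_ag_ternary_a-EFX_inapprox}. But that lemma lives off an error budget of roughly $\frac{1}{2(n-1)}$, whereas here the budget is $\frac{1-a^2}{4+(2n-3)a}$, which is tiny for large $a$ (the only regime where this bound dominates \cref{lem:with_predictions_3_ag_ternary_a-EFX_inapprox}). With the TV distance pinned near that value, your same-bundle branch (``concentrate the remaining true mass on too few goods'' to force $0$-EFX) is simply unavailable: concentrating value mass costs TV far exceeding the budget. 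And in your different-bundles branch, after the reveal the algorithm still has $2n-3$ goods to place with true values forced to lie within a sliver of the prediction, so it faces a near-offline instance and can rebalance into an \aefx allocation — an escape you yourself acknowledge. Your proposal flags this obstacle and correctly guesses the ingredients (exploit the third value level, make the large goods unequal), but then stops at naming it ``the delicate computational heart''; that heart is exactly the missing proof, so what you have is a plan, not an argument.

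The paper's construction inverts your ordering and branches on a different object. It fixes \emph{two} coupled parameters, $k \in \bigl(0, \frac{a}{4+(2n-3)a}\bigr)$ and $\varepsilon \in \bigl(\frac{1-a^2}{4+(2n-3)a}, \frac{1}{4+(2n-3)a}\bigr]$, predicts $2n-3$ small ``$k$-goods'' plus two large goods whose \emph{predicted} values are already unequal, $\frac{1-(2n-3)k}{2} \mp \varepsilon$. The $k$-goods arrive first with true value exactly $k$, and the adversary branches not on two placements but on the pigeonhole-forced \emph{distribution pattern} of all $2n-3$ $k$-goods over the $n$ agents (exactly one agent with no $k$-good; at least two such agents, forcing some agent with $\geq 3$ $k$-goods; or every agent covered, forcing $\geq 3$ agents with exactly one). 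Only two goods then remain, and their true values are adaptively either equalized to $\frac{1-(2n-3)k}{2}$ each or spread to offsets $\mp 2\varepsilon$ — in every branch the TV distance is exactly $\varepsilon$, and the exhaustive analysis over placements of just two final goods yields an envy ratio below $a$. The threshold emerges only through the coupling of the two parameters: e.g., the two-empty-agents case needs $\varepsilon > \frac{1-(2n-3+4a)k}{4}$, which drops to the stated infimum only as $k$ approaches $\frac{a}{4+(2n-3)a}$. Your single-parameter calibration (``$\varepsilon$ at its endpoint equals the threshold'') misses this coupling, and your proposed branch structure leaves the algorithm too much post-branch freedom for any choice of values to close the cases.
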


% \begin{lemma}\label{lem:with_predictions_n_ag_ternary_a-EFX_inapprox}
%     Suppose we have $n \geq 3$ agents with additive, identical, normalized valuations, with a provided prediction of accuracy $\eta < 1 - \frac{1 - a^2}{4 + (2n-3)a}$ for some given $a \in \left( 0, 1 \right]$, that is, the error between the prediction and the true valuation is $1 - \eta > \frac{1 - a^2}{4 + (2n-3)a}$. Then, there is no algorithm that guarantees an \aefx allocation, even when $T'=T=2n-1$, and the predictions and the true valuations are $3$-value functions.
% \end{lemma}

\begin{proof}
    For the sake of contradiction, suppose there is an algorithm that can provide an \aefx allocation for some $a \in \left( 0, 1 \right]$. The algorithm knows that $T=2n-1$ and has accuracy $\eta < 1 - \frac{1 - a^2}{4 + (2n-3)a}$. The adversary fixes rational values $k \in \left( 0, \frac{a}{4 + (2n-3)a} \right)$ and $\varepsilon \in \left( \frac{1 - a^2}{4 + (2n-3)a}, \frac{1}{4 + (2n-3)a} \right]$, and gives to the algorithm the prediction: $p(g_1) = p(g_2) = \dots = p(g_{2n-3}) = k$, $p(g_{2n-2}) = \frac{1 - (2n-3)k}{2} - \varepsilon$, and $p(g_{2n-1}) = \frac{1 - (2n-3)k}{2} + \varepsilon$. The adversary reveals the true values of the first $2n-3$ goods to the algorithm, namely $v(g_1) = v(g_2) = \dots = v(g_{2n-3}) = k$. For ease of presentation, a good $g_t$ for $t \in [2n-3]$ will be called a ``$k$-good''. After time-step $t = 2n-3$ there are three cases: 

    (i) Exactly one agent, w.l.o.g. agent $n$, does not have any $k$-good. Then, there is at least another agent, w.l.o.g. agent $n-1$, that has at most 1 $k$-good (and therefore, exactly 1 $k$-good); otherwise, at least $n-1$ agents have at least two $k$-goods, so we would have at least $2(n-1) = 2n-2 > 2n-3$ $k$-goods, a contradiction. Also, there is at least another agent, w.l.o.g. agent $n-2$, that has at most 2 $k$-goods; otherwise, we would have at least $n-2$ agents with at least 3 $k$-goods, and agent $n-1$ with exactly 1 $k$-good, therefore $3(n-2) + 1 = 3n - 5 > 2n-3$ $k$-goods in total, a contradiction. Consequently, the adversary reveals the true values $v(g_{2n-2}) = v(g_{2n-1}) = \frac{1 - (2n-3)k}{2}$. 
    
    If neither $g_{2n-2}$ nor $g_{2n-1}$ gets allocated to agent $n$, then this is a $0$-EFX allocation, so the algorithm fails to produce an \aefx allocation for the required value of $a$. If one of these goods gets allocated to agent $n$ and the other gets allocated to an agent with exactly one $k$-good, w.l.o.g. agent $n-1$, then one of the agents with at most $2$ $k$-goods, such as agent $n-2$, is EFX-envious towards agent $n-1$, and due to that, this is an \praefx allocation with 
    \begin{align}\label{eq:bound_n_agents_inapprox}
        a' \leq \frac{2k}{(1 - (2n-3)k)/2} = \frac{4}{1/k - (2n-3)} < \frac{4}{4/a + (2n-3) - (2n-3)} = a ,
    \end{align}
    where the last inequality comes from the fact that $k < \frac{a}{4 + (2n-3)a}$. If one of the goods gets allocated to agent $n$ and the other one gets allocated to an agent with more than one $k$-goods, then agent $n-1$ EFX-envies that agent and this is an \praefx allocation with $a' \leq \frac{k}{(1 - (2n-3)k)/2} \leq \frac{2k}{(1 - (2n-3)k)/2} < a $, where the last inequality comes from \cref{eq:bound_n_agents_inapprox}. If both get allocated to agent $n$, then agent $n-1$ is EFX-envious towards her, and due to that, this is an \praefx allocation with $a' \leq \frac{k}{(1 - (2n-3)k)/2} \leq \frac{2k}{(1 - (2n-3)k)/2} < a$, where again, the last inequality comes from \cref{eq:bound_n_agents_inapprox}. So, the algorithm fails to produce an \aefx allocation.

    (ii) There are two agents, w.l.o.g. agents $n-1$ and $n$, that have no $k$-good. If there are at least three agents with no $k$-good then no matter where $g_{2n-2}, g_{2n-1}$ get allocated, this will be a $0$-EFX allocation; because at least one agent will have no goods, and there will be at least one agent with 2 $k$-goods (which have positive value), since we have more than $n-3$ goods. So the remaining case is when only agents $n-1$ and $n$ have no $k$-good. Then, there is at least one agent, w.l.o.g. agent $n-2$, with at least $3$ $k$-goods; otherwise, $n-2$ agents would have at most $2$ $k$-goods, so the total number of such goods would be at most $2(n-2) = 2n - 4 < 2n-3$, a contradiction. Now the adversary reveals the true values $v(g_{2n-2}) = \frac{1 - (2n-3)k}{2} - 2 \varepsilon$ and $v(g_{2n-1}) = \frac{1 - (2n-3)k}{2} + 2 \varepsilon$.
    If after the algorithm allocates $g_{2n-2}, g_{2n-1}$, some agent among agents $n-1$ and $n$ has not received any good, then this is a $0$-EFX allocation, since another agent exists with at least $2$ $k$-goods. If each of agents $n-1$ and $n$ receives one of the last two goods, w.l.o.g. let agent $n-1$ receive $g_{2n-2}$, and agent $n$ receive $g_{2n-1}$, then the latter EFX-envies agent $n-2$ who has $3$ $k$-goods. This will be an \praefx allocation with $a' \leq \frac{(1 - (2n-3)k)/2 - 2 \varepsilon}{2k} < \frac{4ak/2}{2k} = a$, where the last inequality comes from the fact that $\varepsilon > \frac{1-a^2}{4 + (2n-3)a} > \frac{1-(2n-3 + 4a)k}{4}$. Therefore, the algorithm fails to output an \aefx allocation.
    
    (iii) Every agent has at least one $k$-good. Then, there are at least 3 agents, w.l.o.g. agents $n-2$, $n-1$, and $n$, that have exactly one $k$-good; otherwise, at least $n-2$ agents would have at least 2 such goods each, and 2 agents would have exactly 1 $k$-good, so the total number of $k$-goods would be at least $2(n-2) + 2 = 2n-2 > 2n-3$, a contradiction. The adversary reveals the true values $v(g_{2n-2}) = \frac{1 - (2n-3)k}{2} - 2 \varepsilon$ and $v(g_{2n-1}) = \frac{1 - (2n-3)k}{2} + 2 \varepsilon$. No matter where these last two goods get allocated, the agent that will receive $g_{2n-1}$ will also have at least one $k$-good, and at least one agent, w.l.o.g. agent $n$, will only have a single $k$-good. Therefore, the latter agent will EFX-envy the former and this will be an \praefx allocation for $a' \leq \frac{k}{(1 - (2n-3)k)/2 + 2 \varepsilon} \leq \frac{2k}{(1 - (2n-3)k)/2} < a$, where the last inequality is due to \cref{eq:bound_n_agents_inapprox}. So, the algorithm cannot provide an \aefx allocation. 

    Finally, notice that the error between the prediction and the true valuation in this instance equals $\varepsilon$, which can take any value in $\left( \frac{1 - a^2}{4 + (2n-3)a}, \frac{1}{4 + (2n-3)a} \right]$.
\end{proof}

\begin{theorem}\label{thm:with_predictions_n_ag_ternary_a-EFX_inapprox_comb}
    Suppose we have $n \geq 3$ agents with additive, identical, normalized valuations, with a provided prediction of accuracy $\eta < 1 - \min \left\{\frac{1}{2(n-1 + 2a)}, \frac{1 - a^2}{4 + (2n-3)a} \right\}$ for some given $a \in \left( 0, 1 \right]$, that is, the error between the prediction and the true valuation is $1 - \eta > \min \left\{\frac{1}{2(n-1 + 2a)}, \frac{1 - a^2}{4 + (2n-3)a} \right\}$. Then, there is no algorithm that guarantees an \aefx allocation, even when $T'=T=2n-1$, and the predictions and the true valuations are $3$-value functions.
\end{theorem}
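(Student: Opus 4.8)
The plan is to obtain \cref{thm:with_predictions_n_ag_ternary_a-EFX_inapprox_comb} as an immediate consequence of the two preceding lemmas, via a case split on which of the two thresholds is the active one. Write $B_1 := \frac{1}{2(n-1+2a)}$ and $B_2 := \frac{1-a^2}{4+(2n-3)a}$, so that the hypothesis of the theorem reads $1-\eta > \min\{B_1, B_2\}$, and recall that both \cref{lem:with_predictions_3_ag_ternary_a-EFX_inapprox} and \cref{lem:with_predictions_n_ag_ternary_a-EFX_inapprox} hold for every $a \in (0,1]$.

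First I would note that $\min\{B_1, B_2\}$ equals whichever of the two quantities is the smaller, so the hypothesis $1-\eta > \min\{B_1,B_2\}$ forces the error to strictly exceed that smaller threshold. If $B_1 \le B_2$, then $1-\eta > B_1 = \frac{1}{2(n-1+2a)}$, which is precisely the hypothesis of \cref{lem:with_predictions_3_ag_ternary_a-EFX_inapprox}; invoking that lemma rules out any \aefx algorithm. If instead $B_2 < B_1$, then $1-\eta > B_2 = \frac{1-a^2}{4+(2n-3)a}$, matching the hypothesis of \cref{lem:with_predictions_n_ag_ternary_a-EFX_inapprox}, whose conclusion again gives the impossibility. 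Since the two cases are exhaustive and both lemmas span the full range $a \in (0,1]$, no algorithm can guarantee an \aefx allocation under the stated accuracy, for any such $a$.

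It then remains to reconcile the instance parameters asserted by the combined statement with those produced by the individual lemmas. The $3$-value claim transfers verbatim, since each lemma already constructs predictions and true valuations that are $3$-value functions. For the horizon, \cref{lem:with_predictions_n_ag_ternary_a-EFX_inapprox} uses exactly $T'=T=2n-1$ goods, while \cref{lem:with_predictions_3_ag_ternary_a-EFX_inapprox} uses only $T'=T=n+1 \le 2n-1$ goods for $n \ge 3$; if one insists on exactly $2n-1$ goods in the latter regime, the instance can be padded with $n-2$ dummy goods whose predicted and true values are both $0$, which leaves every bundle's value, hence the \aefx analysis, untouched and contributes nothing to the \tvd error. (Strictly, these zero goods introduce the value $0$; if a clean $3$-value instance is desired one simply reads $2n-1$ as an upper bound on the number of goods needed, which both lemmas respect.)

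The only genuine point requiring attention -- and the step I would double-check most carefully -- is the direction of the inequality in the case split: one must make sure that exceeding $\min\{B_1,B_2\}$ delivers the hypothesis of the lemma attached to the \emph{smaller} threshold, not the larger one. This is immediate once one observes that $\min\{B_1,B_2\}$ is attained by one of the two bounds, but it is the single place where a sign slip could break the argument; every other step is a direct citation of the two lemmas.
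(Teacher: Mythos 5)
Your proposal is correct and matches the paper's own proof essentially verbatim: the paper performs the same case split on which of the two thresholds attains the minimum, invokes \cref{lem:with_predictions_3_ag_ternary_a-EFX_inapprox} or \cref{lem:with_predictions_n_ag_ternary_a-EFX_inapprox} accordingly, and in the former case pads the instance with $n-2$ dummy goods of value $0$ to fulfil the promise $T'=T=2n-1$. Your parenthetical caveat that the zero-valued padding technically introduces an extra value into the functions' range is a fair observation that the paper itself glosses over, but it does not affect the impossibility argument.
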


\begin{proof}
    If $\frac{1}{2(n-1 + 2a)} \leq \frac{1 - a^2}{4 + (2n-3)a}$, then the adversary provides the algorithm with the instance specified in the proof of \cref{lem:with_predictions_3_ag_ternary_a-EFX_inapprox}, and produces $n-2$ dummy goods $g_{n+2}, g_{n+3}, \dots, g_{2n-1}$ of value 0 to fulfil the promise that $T = 2n-1$.\footnote{Notice that these extra goods do not affect the analysis since they can be allocated to an agent that is not envied.} If $\frac{1}{2(n-1 + 2a)} > \frac{1 - a^2}{4 + (2n-3)a}$, then the adversary provides the algorithm with the instance specified in the proof of \cref{lem:with_predictions_n_ag_ternary_a-EFX_inapprox}.
\end{proof}

One can observe from \cref{thm:with_predictions_n_ag_ternary_a-EFX_inapprox_comb} that when $n \geq 3$, for any fixed $a \in (0, 1)$, the lower bound on the accuracy is a function $1 - \Theta (1/n)$. The upper bound on the accuracy given in \cref{cor:with_predictions_n_ag_a-EFX_positive} for identical valuations (and therefore $\tila = 1$ by employing \cref{alg:LPT}) becomes $1 - \frac{1-a}{(2n-1)(1+a)}$, which is also a $1 - \Theta (1/n)$ function. This means that as the number of agents grows, the necessary accuracy level tends to $100 \%$, and is asymptotically matched by the accuracy that suffices for a simple algorithm relying entirely on predictions. This fact at first glance might seem to make the case that for large $n$, finding better algorithms than those that blindly use predictions is not interesting. However, for fixed $a$ and $n$, there is still a non-negligible gap between the lower and upper bounds of accuracy, which is worth investigating. In the following section, we try to bridge the largest bound-gap for identical valuations, which is for the case of two agents.

\subsubsection{Positive results}

We focus on two agents with additive, identical valuations, and present \cref{alg:id_2_ag} which, for the same approximation factor $a$, requires less accuracy than that of \cref{thm:with_predictions_n_ag_a_i-EFX_positive} (or equivalently, for the same accuracy, it provides an approximate EFX allocation with better approximation factor than that of the aforementioned result). This \aefx algorithm has a known accuracy that is a function $\eta(a)$ of $a \in [0,1]$. It starts by finding an exact EFX allocation $A$ using LPT (\cref{alg:LPT}) on the predicted values, and then uses this as a guideline for the final allocation. According to the form that this allocation has, i.e., its characteristics, it prescribes appropriate thresholds for true values of the goods that arrive over time, and allocates them to the right agent so that the final allocation is \aefx.

\begin{theorem}\label{thm:with_predictions_2_ag_ternary_a-EFX_positive}
    Suppose we have $2$ agents with additive, identical, normalized valuations, with a provided prediction of accuracy $\eta \geq 1 - \dmax$ for some given $a \in \left( \varphi - 1, 1 \right]$, that is, the error between the prediction and the true valuation is $1 - \eta \leq \dmax$. Then, \cref{alg:id_2_ag} outputs an \aefx allocation, and performs a constant number of basic operations per time-step.
\end{theorem}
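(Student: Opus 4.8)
The plan is to reduce the online problem to a small set of structural regimes and then verify, in each, a pair of scalar inequalities. Two ingredients come for free. First, the preprocessing runs LPT (\cref{alg:LPT}) on the common prediction $p$ and returns a partition $A=(A_1,A_2)$ that is exact EFX with respect to $p$; in particular $p(A_1)\ge p(\xset{A}_2)$ and $p(A_2)\ge p(\xset{A}_1)$, where I let $A_1$ be the bundle of smaller predicted value. Second, the only analytic device needed is the standard characterization of total variation distance: for identical valuations with error $D=1-\eta$, every bundle $S\subseteq M$ satisfies $|p(S)-v(S)|\le D$. Hence any inequality guaranteed on predicted bundle-values degrades by at most $D$ (or $2D$ when both sides are bundles) upon passing to true values, and the entire task is to steer the online reassignment so that the two \aefx inequalities $v(A_1)\ge a\cdot v(\xset{A}_2)$ and $v(A_2)\ge a\cdot v(\xset{A}_1)$ survive this degradation for every error $D\le\dmax$.

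I would then organize the argument as a case analysis on the \emph{characteristics} of the LPT allocation — the cardinalities of the two bundles and how the smaller predicted value $p(A_1)$ compares to the threshold $\pAoneUB$. The choice of $\pAoneUB$ is forced by the identity $\dmax=\pAoneUB\cdot\frac{1-a}{1+a}$; equivalently, the upper bound $\tfrac{1+a}{1-a}D$ on what $p(A_1)$ may safely be meets $\pAoneUB$ exactly at $D=\dmax$. Below this line the smaller true bundle can be kept large enough that following the guideline already yields \aefx, whereas above it the algorithm must actively rebalance as goods arrive. I expect the rebalancing to be driven by the two auxiliary quantities $\thres$ and $\incr$, which satisfy the revealing identity $\dmax=\thres+\incr$: plausibly $\thres$ is the value cutoff classifying an arriving good as \emph{large}, and $\incr$ is the permitted amount by which a bundle may overshoot its predicted target, so that the single error budget $\dmax$ is split cleanly between the classification rule and the accumulated imbalance.

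For each case I would carry a loop invariant — a lower bound on the current true value of the bundle destined to be the smaller one, together with control on the least-valued good in the bundle destined to be larger — and show that the threshold rule preserves it after every arrival, for every continuation of true values consistent with $\sum_t v(g_t)=1$ and error at most $D$. At termination, combining the invariant with $|p(S)-v(S)|\le D$ and the LPT guarantees should collapse the two required \aefx inequalities into algebraic inequalities in $a$ and $D$ that hold precisely because $D\le\dmax$; this is where the exact forms of $\dmax$, $\thres$, $\incr$ and $\pAoneUB$ are consumed. The running-time claim is then immediate: LPT is a one-time offline computation on the prediction, the case of $A$ and the numbers $\thres,\incr$ are fixed once, and thereafter each good is placed by a constant number of comparisons of its true value and the running bundle totals against these fixed thresholds.

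The main obstacle I anticipate is not any single inequality but the completeness and tightness of the case analysis under the irrevocability of online decisions: because each good must be committed before the future is seen, every case's threshold rule has to be certified \emph{safe} against the worst admissible continuation of true values, and the boundary cases — a smaller bundle that is a single large good versus several small ones, or $p(A_1)$ sitting exactly at $\pAoneUB$ — are where envy is pushed to its extreme. The crux is to design the invariant so that it is simultaneously strong enough to force \aefx at termination yet weak enough to be maintainable online, with the budget $\dmax=\thres+\incr$ partitioned so that neither the classification slack nor the overshoot slack is exhausted prematurely.
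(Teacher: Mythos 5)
Your skeleton coincides with the paper's: LPT on $p$, the split according to whether $p(A_1)$ is below or above $\pAoneUB$, per-bundle TV degradation $|p(S)-v(S)|\le D$, and a case analysis with threshold rules for arriving goods. You even correctly divined the two identities that motivate the constants, $\dmax = \frac{1-a}{1+a}\cdot\pAoneUB$ (consumed in \cref{prop:p(A_1)_large_enough}, where the allocation $A$ is shown to be \aefx on the true values whenever $p(A_1)\ge\pUB$) and $\dmax = \thres + \incr$. But the proposal stops exactly where the work begins, and the missing idea is concrete: the paper proves (\cref{prop:bound_card_A2}) that $p(A_1) < \pAoneUB$ forces $|A_2|\le 2$, and it is this cardinality bound, together with $p(A_1)\ge 1/3$, that collapses the ``worst admissible continuation'' problem you flag as the crux into a \emph{finite} enumeration: four Forms of $A_2$ written in terms of goods of the top three predicted values $z > y > x$, crossed with the arrival order of the at most three large goods. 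Your loop-invariant plan has no handle on completeness precisely because you never derive this structural fact; with it, no invariant is needed — the paper argues directly about the final bundle values in each enumerated case.

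Two further points in your sketch are off or absent. First, the threshold structure is not a uniform ``large-good cutoff $\thres$ plus overshoot allowance $\incr$'': in most forms the acceptance threshold is $z + \frac{\dm}{2}$ (or $y + \frac{\dm}{2}$), and $\thres$ appears only in Form 3 when, per the prediction, $g^y$ arrives after both $g^z$ goods; there the bounds $v(B_1) \ge z + \thres - \dm = z - \incr$ and $v(\xset{B}_2) \le z + \left( \dm - \thres \right) = z + \incr$ are where $\dmax = \thres + \incr$ is actually consumed, so your reading of the budget split is right for exactly one subcase and wrong as a global description. Second, you never treat horizon mismatch: the paper needs a dedicated subroutine and lemma (\cref{alg:3_goods}, \cref{lem:3_goods}) for $T'\le 3$ — also invoked inside Form 4 when $\ell + k = 1$ — and must specify where unpredicted goods $g_{T'+1},\dots,g_T$ go (to agent 1 in the regime of \cref{prop:p(A_1)_large_enough}, to the lighter bundle in \cref{alg:3_goods}), whereas your normalization constraint $\sum_t v(g_t) = 1$ silently assumes the predicted support is exhausted by the true arrivals. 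In short: right architecture and remarkably accurate numerology, but the proof itself — the finite case analysis that certifies every threshold rule against adversarial continuations — is left unexecuted, so as it stands this is a plan, not a proof.
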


\begin{proof}
    \cref{alg:id_2_ag} first uses LPT (\cref{alg:LPT}) to find an exact EFX allocation, called $A$. Then, $A$ is classified according to its \emph{form}, that is, some important properties of the two bundles $A_1, A_2$. Finally, a form-specific routine is used, which, at each time-step decides the recipient of the good in constant time, resulting in an \aefx allocation $B$.

    Let us denote the maximum error between the prediction and the true valuation by $\dm := \dmax$, for $a \in (\varphi - 1, 1]$. In the trivial case where $a=1$, $\dm = 0$, the true values are identical to the predicted values (with the exception of additional goods that might arrive with true value 0 which do not affect the approximation factor the solution), therefore \cref{alg:id_2_ag} outputs an $1$-EFX allocation $A$. In what follows, we will be considering $a \in (\varphi-1, 1)$. W.l.o.g., let agent 1 have at most as much value as agent 2 in $A$, i.e., $p(A_1) \leq p(A_2)$, and let us denote by $g$ an arbitrary good in $\arg \min_{g' \in A_2} p(g')$. First, observe that, since $A$ has been produced by LPT, $p(A_1) \geq p(A_2 \setminus g)$. Therefore, $p(A_1) \geq 1/3$, otherwise, $p(A_1) < 1/3$ implies that $p(A_2 / g) \leq p(A_1) < 1/3$, and so, $p(g) \leq p(A_2 \setminus g) < 1/3 $. This means that $p(A_1) + p(A_2 \setminus g) + p(g) < 1$, which contradicts the normalization condition. 
    
    The following is a preliminary result for the case where the prediction consists of at most three goods, and will be useful in our main algorithm (\cref{alg:id_2_ag}).

    \begin{lemma}\label{lem:3_goods}
        Suppose we have $2$ agents with additive, identical, normalized valuations, with a provided predicted horizon $T' \leq 3$ and accuracy $\eta \geq 1 - \dmaxthree$ for some given $a \in [ 0, 1 ]$, that is, the error between the prediction and the true valuation is $1 - \eta \leq \dmaxthree$. Then, \cref{alg:3_goods} guarantees an \aefx allocation. Furthermore, if the algorithm knows that the predicted horizon is the true one (i.e., $T = T'$), then it guarantees an exact EFX even for prediction accuracy 0 (or equivalently, error $1$).
    \end{lemma}
    
    % \begin{lemma}\label{lem:3_goods}
    %     Suppose we have $2$ agents with additive, identical, normalized valuations, with a provided predicted horizon $T' \leq 3$ and accuracy $\eta \geq 1 - \dmaxthree$ for some given $a \in [ 0, 1 ]$, that is, the error between the prediction and the true valuation is $1 - \eta \leq \dmaxthree$. Then, \cref{alg:3_goods} guarantees an \aefx allocation. Furthermore, if the algorithm knows that the predicted horizon is the true one (i.e., $T = T'$), then it guarantees an exact EFX even for prediction accuracy 0 (or equivalently, error $1$).
    % \end{lemma}

    \begin{proof}
        Suppose we have an instance and algorithm as described in the first part of the statement, i.e., a predicted horizon $T' \leq 3$ and accuracy $\eta \geq 1 - \dmaxthree$ for some given $a \in [ 0, 1 ]$. Interestingly, the algorithm does not require a prediction vector. Essentially, it does the following. It starts by placing the first good ($g_1$) to agent 1. If there is no other round, then this is an exact EFX allocation; if there is another round, $t=2$, the good $g_2$ arrives and the algorithm now assumes that a third one ($g_3$) will have the entire remaining value of $1 - v(g_1) - v(g_2)$. It then calculates which one has the greatest true value by comparing $v(g_1), v(g_2)$ and $1 - v(g_1) - v(g_2)$.
        \begin{itemize}
            \item If $\max \{ v(g_1), v(g_2), 1 - v(g_1) - v(g_2) \}$ equals $v(g_1)$ or $v(g_2)$, then $g_2$ is allocated to $A_2$ so that the highest-valued good remains alone in its bundle. If $v(g_2) = 1 - v(g_1)$, then even if the true horizon is $T \geq 3$, $v(g_t) = 0$ for all $t \in \{ 3, 4, \dots, T \}$, and all goods $g_t$ will be placed to the agent with lowest value between $v(g_1), v(g_2)$, which will be an exact EFX allocation. 
            \item If $\max \{ v(g_1), v(g_2), 1 - v(g_1) - v(g_2) \} = 1 - v(g_1) - v(g_2)$, then this means that $v(g_2) < 1 - v(g_1)$ and, as mentioned earlier, the algorithm assumes that $g_3$ has value $1 - v(g_1) - v(g_2)$. Then it allocates $g_2$ to $A_1$ (which now joins $g_1$), and at $t=3$ it allocates $g_3$ to $A_2$.    
        \end{itemize} 

        In the latter case, if at $t=3$ we have $v(g_3) = 1 - v(g_1) - v(g_2)$, then any goods $g_t$ for $t \in \{ 4, 5, \dots, T \}$ that will potentially arrive have $v(g_t) = 0$ and will be placed to the least valued bundle, creating an exact EFX allocation. Up to this point, we have proven the second part of the statement. However, notice that it might be the case that at $t=3$, we have $v(g_3) < 1 - v(g_1) - v(g_2)$, or equivalently, $v(g_1) + v(g_2) + v(g_3) < 1$, which means that among the aforementioned goods $g_t$, for $t \in \{ 4, 5, \dots, T \}$, there are some with positive value. The algorithm places all goods $g_t$ to agent $i \in [2]$ that has the least valued bundle after $t=3$, that is, $A_{i}^{3}$. In other words, $v(A_{i}^{T'}) \leq v(A_{3-i}^{T'})$. There are two cases:
        \begin{enumerate}
            \item [(i)] $v(A_{i}^T) \leq v(A_{3-i}^T)$. Therefore, agent $3-i$ does not envy agent $i$. Also, $v(A_{i}^T) = v(A_{i}^{T'}) + 1 - v(g_1) - v(g_2) - v(g_3) \geq v(\xset{A}_{3-i}^{T'}) = v(\xset{A}_{3-i}^{T})$, where the first equality comes from the fact that all goods $g_t$, for $t \in \{ 4, 5, \dots, T \}$, will be allocated to $A_{i}^{T'}$,  and their cumulative value is $1 - v(g_1) - v(g_2) - v(g_3)$; the first inequality comes from the fact that $g_1$ and $g_2$ where placed by the algorithm such that if $v(g_3) = 1 - v(g_1) - v(g_2)$ then $v(A_{i}^{T'}) \geq v(\xset{A}_{3-i}^{T'})$, but as discussed above, $v(g_3)$ can have at most some discrepancy $1 - v(g_1) - v(g_2) - v(g_3)$ which remains to come from goods $g_t, t \in \{ 4, 5, \dots, T \}$; however, notice that in case $g_3$ is placed at $A_{i}^{T'}$, the aforementioned goods $g_t$ will join that bundle, restoring the remaining value, i.e., $v(A_{i}^{T'}) + 1 - v(g_1) - v(g_2) - v(g_3) \geq v(\xset{A}_{3-i}^{T'})$; finally, the last equality comes from the fact that $A_{3-i}$ received no more goods after $t = T'$. Therefore, this is an exact EFX allocation.
            \item [(ii)] $v(A_{i}^T) > v(A_{3-i}^T)$. Therefore, agent $i$ does not envy agent $3-i$. As mentioned in the previous case, $g_3$ might have some discrepancy $1 - v(g_1) - v(g_2) - v(g_3)$ in its value. By the statement's assumption, this discrepancy is upper bounded by \dmaxthree, therefore we have $v(A_{i}^{T'}) + v(A_{3-i}^{T'}) = v(g_1) + v(g_2) + v(g_3) \geq 1 - \dmaxthree$. Since $v(A_{i}^{T'}) \leq v(A_{3-i}^{T'})$, we have $v(A_{3-i}^{T'}) \geq \frac{1}{2} - \frac{1}{2} \cdot \frac{1-a}{1+a}$. Therefore, this is an \praefx allocation with $a' = \frac{v(A_{3-i}^{T})}{v(\xset{A}_{i}^{T})} \geq \frac{v(A_{3-i}^{T'})}{v(A_{i}^{T'}) + (1-a)/(1+a)} \geq \frac{1/2 - (1-a)/(2(1+a))}{1/2 + (1-a)/(2(1+a))} = a$, where the first inequality comes from the fact that $v(\xset{A}_{i}^T) \leq v(A_{i}^T)$, and $v(A_{i}^T)$ is at most $v(A_{i}^{T'})$ plus extra value $1 - v(g_1) - v(g_2) - v(g_3) \leq \dmaxthree$ from the goods $g_t$, $t \in \{ 4, 5, \dots, T \}$ received.
        \end{enumerate}
        
        Finally, notice that the above arguments did not use any predicted values for the goods, only the prediction that the total value to arrive will be contained within the first three goods.
    \end{proof}

\begin{algorithm}
\caption{Two agents with additive, identical valuations, and predicted horizon $T' \leq 3$: Computing an \aefx for $a \in [0, 1]$ when prediction accuracy is at least $1 - \dmaxthree$}\label{alg:3_goods}  
	\begin{algorithmic}[1]
		\REQUIRE{A value $T' \in [3]$ of predicted horizon.}
		\ENSURE{An \aefx allocation $B$.}
		
		\medskip
		
		\STATE{$(B_1, B_2) \gets (\{ g_1 \}, \emptyset)$}
            \COMMENT{time-step $t = 1$: $g_1$ is allocated to an agent, w.l.o.g., agent 1} 
        
        \medskip
        
        \COMMENT{time-step $t = 2$}
        \IF{$\min \{ T', T \} \geq 2$ \AND  $\max \{ v(g_1), v(g_2), 1 - v(g_1) - v(g_2) \} = \max \{ v(g_1), v(g_2) \}$}
            \STATE{$B_2 \gets B_2 \cup \{ g_2 \}$}

            \medskip
            
            \COMMENT{time-step $t = 3$}
            \IF{$\min \{ T', T \} \geq 3$ \AND $\max \{ v(g_1), v(g_2) \} = v(g_1)$}
                \STATE{$B_2 \gets B_2 \cup \{ g_3 \}$}
            \ENDIF
            \IF{$\min \{ T', T \} \geq 3$ \AND $\max \{ v(g_1), v(g_2) \} = v(g_2)$}
                \STATE{$B_1 \gets B_1 \cup \{ g_3 \}$}
            \ENDIF
        \ENDIF

        \medskip
            
        \COMMENT{time-step $t = T'+1, T'+2, \dots, T $, in case $T > T'$}
        \FOR{$t \in \{ T' + 1, T' + 2, \dots, T \}$}
            \IF{$v(B_1^{T'}) \leq v(B_2^{T'})$}
                \STATE{$B_1 \gets B_1 \cup \{ g_t \}$}
            \ELSE
                \STATE{$B_2 \gets B_2 \cup \{ g_t \}$}
            \ENDIF
        \ENDFOR       
	\end{algorithmic}
\end{algorithm}

    \begin{corollary}\label{cor:3_goods_last_small_goods}
        Suppose we have $2$ agents with additive, identical, normalized valuations, with a provided prediction such that $\sum_{t=4}^{T'} p(g_t) \leq \dmaxthree - D$ and accuracy $\eta \geq 1 - D$ for some given $a \in [ 0, 1 ]$ and $D \in \left[ 0, \dmaxthree \right]$, that is, the error between the prediction and the true valuation is $1 - \eta \leq D$. Then, \cref{alg:3_goods} guarantees an \aefx allocation.
    \end{corollary}

    \begin{proof}
        From the proof of \cref{lem:3_goods}, we can see that \cref{alg:3_goods} does not need any predicted values to allocate the first three goods, and what prevents the approximation of the EFX allocation from being 1 is the discrepancy that (virtual) values $p(g_4) = p(g_5) = \dots = p(g_T) = 0$ have with respect to the true ones $v(g_4), v(g_5), \dots, v(g_T)$. In particular, if the error of the instance ($\sum_{t=4}^{T} v(g_t)$) is at most $\dmaxthree$, then the algorithm is guaranteed to provide an \aefx allocation. Given that the prediction accuracy is at least $1 - D$, we know that $\sum_{t=4}^{T} v(g_t) \leq \sum_{t=4}^{T'} p(g_t) + D$, therefore the aforementioned error can be guaranteed if $\sum_{t=4}^{T'} p(g_t) \leq \dmaxthree - D$.  
    \end{proof}

    Next, we show the following.

\begin{proposition}\label{prop:p(A_1)_large_enough}
    If $p(A_1) \geq \pUB = \pAoneUB$, for some $a \in (\varphi - 1, 1)$, then $A$ is an \aefx allocation according to the true values. 
\end{proposition}

% \begin{proposition}\label{prop:p(A_1)_large_enough}
%     If $p(A_1) \geq \pUB = \pAoneUB$, for some $a \in (\varphi - 1, 1)$, then $A$ is an \aefx allocation according to the true values. 
% \end{proposition}

\begin{proof}
    By definition of the naming of the agents, $p(A_1) \leq 1/2$, $p(A_2) \geq 1/2$. Suppose that $p(A_1) \geq \pUB$. Then $p(A_2) \leq 1 - \pUB = \frac{2(3 + a)}{(2 + a)(5 - a)}$. There are two cases:
    
    (i) If $v(A_1) \geq v(A_2)$, then agent 1 does not envy agent 2, and $v(A_1) \leq p(A_1) + \dm \leq 1/2 + \dm$, while $v(A_2) \geq p(A_2) - \dm \geq 1/2 - \dm$. So this is an \praefx allocation for $a' \geq \frac{v(A_2)}{v(A_1)} \geq \frac{1/2 - \dm}{1/2 + \dm} \geq a$, by definition of $\dm$.

    (ii) If $v(A_1) < v(A_2)$, then agent 2 does not envy agent 1, and $v(A_1) \geq p(A_1) - \dm$, while $v(\xset{A}_2) \leq p(\xset{A}_2) + \dm \leq p(A_1) + \dm$. Therefore, this is an \praefx allocation for $a' = \frac{v(A_1)}{v(\xset{A}_2)} \geq \frac{p(A_1) - \dm}{p(A_1) + \dm} \geq \frac{\dm(1+a)/(1-a) - \dm}{\dm(1+a)/(1-a) + \dm} = \frac{2a}{2} = a$, where the last inequality comes from the lower bound of $p(A_1)$.

    Furthermore, if the predicted horizon is shorter than the true one, the extra goods $g_t$, $t \in \{ T'+1, T'+2, \dots, T \}$ can be allocated to agent 1, in other words, $A_1 \gets A_1 \cup \{g_{T'+1}, \dots, g_T \}$ and the final allocation remains \aefx. Indeed, case (i) goes through; in case (ii), since agent 1's is the envious agent, the only complication would be if agent 2, received a small (even 0-valued) good $g_t$, $t \in \{ T'+1, T'+2, \dots, T \}$, which would make $v(\xset{A}_2)$ larger than before. However, notice that agent 2 does not receive any such good, so the analysis of this case holds too.

    Notice that the proof goes through for any $\dm \leq \frac{1}{2} \cdot \frac{1-a}{1+a}$.
\end{proof}

So in case $p(A_1) \geq \pUB = \pAoneUB$, allocation $A$ itself is an \aefx. What remains to show is how the algorithm produces an \aefx allocation $B$ when $p(A_1) \in \left[ \frac{1}{3}, \pAoneUB \right)$.

\begin{proposition}\label{prop:bound_card_A2}
    If $p(A_1) < \pUB = \pAoneUB$ for some $a \in (\varphi - 1, 1)$, then $|A_2| \leq 2$. 
\end{proposition}

% \begin{proposition}\label{prop:bound_card_A2}
%     If $p(A_1) < \pUB = \pAoneUB$ for some $a \in (\varphi - 1, 1)$, then $|A_2| \leq 2$. 
% \end{proposition}

\begin{proof}
    Let $g \in \arg \min_{g' \in A_2} p(g')$. We have $p(A_2 \setminus g) \leq p(A_1) < \pAoneUB < \frac{4 + 3 \sqrt{5}}{29}$, for the given domain of $a$. This implies that $p(g) = 1 - p(A_1) - p(A_2 \setminus g) \geq 1 - 2\cdot p(A_1)$. For the sake of contradiction, suppose $|A_2| \geq 3$. Then $1 - p(A_1) = p(A_2) \geq 3\cdot p(g) \geq 3 \cdot (1 - 2\cdot p(A_1))$ implies that $p(A_1) \geq \frac{2}{5} > \frac{4 + 3 \sqrt{5}}{29}$, a contradiction.

    Notice that the proof goes through for any $\dm \leq \frac{2}{5} \frac{1-a}{1+a}$.
\end{proof}

It is immediate that if $|A_2| = 0$, then $p(A_1) = 0$ (by definition of the naming of the two agents), which violates the normalization condition. If $|A_2| = 1$, then agent 1 does not EFX-envy agent 2, and so, $A$ remains an \praefx allocation for the true values with $a' \geq \frac{p(A_2) - \dm}{p(A_1) + \dm} \geq \frac{1/2 - \dm}{1/2 + \dm} \geq a$, for all $a \in (\varphi - 1, 1)$. 

The remaining of this proof shows how \cref{alg:id_2_ag} produces an \aefx allocation $B$ given the LPT allocation $A$, where $p(A_1) \in \left[ \frac{1}{3}, \pAoneUB \right)$, $|A_2| = 2$, and $|A_1| \geq 1$ (since otherwise $p(A_1) = 0 < \frac{1}{3}$, a contradiction). For ease of presentation, we will categorize the values of a given prediction $p$ in three groups: $G^z$, containing all goods with highest value $z \in [0 , 1]$ in $p$, $G^y$, containing all goods with the second-highest value $y \in [0, z)$, and $G^x$, containing the rest, with values strictly smaller than $y$. We will also slightly abuse the notation and say that a good is of type $g^z$, $g^y$, and $g^x$, if it belongs to $G^z$, $G^y$, and $G^x$, respectively. Having satisfied the constraint that $A$ is the outcome of the LPT algorithm (\cref{alg:LPT}) and that $|A_2| = 2$, we can have the following forms of $A_2$:

\begin{enumerate}
    \item[\underline{Form 1:}] $A_2 = \{ g_1^z, g_2^z \}$. Then, $A_1$ contains at least one $g^z$ good, otherwise LPT would not have given both $g_1^z, g_2^z$ to $A_2$. If it contains two $g^z$ goods, then $p(A_1) \geq p(A_2) \geq \frac{1}{2} > \pUB$, which is a contradiction. Therefore, $A_1 = \{ g_3^z, g_1^y, g_2^y, \dots, g_\ell^y, g_1^x, g_2^x, \dots, g_k^x \}$ for some $\ell, k \geq 0$. There are two cases. Case (i): $v(B_1) \geq v(B_2)$; then agent 1 does not envy agent 2, and $B$ is an \praefx allocation for $a' \geq \frac{v(B_2)}{v(B_1)} \geq \frac{p(A_2) - \dm}{p(A_1) + \dm} \geq \frac{1/2 - \dm}{1/2 + \dm} \geq a$. Case (ii): $v(B_1) < v(B_2)$; then, agent 2 does not envy agent 1, and we have $v(B_1) \geq p(A_1) - \dm = 1 - 2z - \dm$, and $v(\xset{B}_2) \leq z + \frac{\dm}{2}$ (since both goods in $B_2$ have value at most $z + \frac{\dm}{2}$). Therefore, $B$ is an \praefx allocation with $a' \geq \frac{1 - 2z - \dm}{z + \dm / 2} \geq \frac{1/3 - \dm}{1/3 + \dm / 2} \geq a$, for all $a \in (\varphi - 1, 1)$, where the second to last inequality comes from the fact that $z \leq 1/3$ (otherwise, $p(A_1) + p(A_2) > 1$, a contradiction). 

    Notice that the above inequalities hold even if $\dm = \dmaxformone$, which is larger than $\dmax$ for all $a \in (\varphi - 1, 1)$.
    
    \item[\underline{Form 2:}] $A_2 = \{ g_1^y, g_2^y \}$. Then, $A_1$ has to contain at least one $g^z$ good ($g^y$ is only defined if a $g^z$ exists). Also, $A_1$ contains at most one $g^z$ good, otherwise LPT would have given one $g^z$ good to $A_2$. Finally, $A_1$ cannot contain both a $g^z$ and a $g^y$ good, otherwise $p(A_1) > p(A_2)$, a contradiction. So, it must be $A_1 = \{ g^z, g_1^x, g_2^x, \dots, g_k^x \}$ for some $k \geq 0$. There are two cases. Case (i): $v(B_1) \geq v(B_2)$; then, agent 1 does not envy agent 2, and $B$ is an \praefx allocation for $a' \geq \frac{v(B_2)}{v(B_1)} \geq \frac{p(A_2) - \dm}{p(A_1) + \dm} \geq \frac{1/2 - \dm}{1/2 + \dm} \geq a$. Case (ii): $v(B_1) < v(B_2)$; then, agent 2 does not envy agent 1, and among the goods $g_1^y, g_2^y, g^z$ at most one's true value can exceed its predicted value by more than $\frac{\dm}{2}$ (otherwise the error between the prediction and the true valuation is greater than $\dm$, a contradiction). Also, according to the algorithm, $B_1$ can only have one of these goods. 
    
    There are two subcases. 
    \begin{enumerate}
        \item [(I)] $B_1$ contains a $g^y$ good, w.l.o.g., $g_1^y$. If $v(g_1^y) \leq y + \frac{\dm}{2}$, then this means that both goods of $B_2$, namely $g_2^y, g^z$, have true value at most $y + \frac{\dm}{2}$; otherwise, one of them would have already been allocated to $B_1$, and $g_1^y$ would be allocated to $B_2$. Then, $v(B_1) \geq v(g_1^y) \geq y - \dm$, and $v(\xset{B}_2) \leq y + \frac{\dm}{2}$. So, $B$ is an \praefx allocation with $a' \geq \frac{y - \dm}{y + \dm/2} \geq a$, where the last inequality comes from the fact that $\dm \leq \frac{1-a}{3+2a}$ for all $a \in (\varphi-1, 1)$, and $y > \frac{1}{2} - \frac{1+a}{2(1-a)} \dm$ since $1-2y = p(A_1) < \frac{1+a}{1-a} \dm$. If $v(g_1^y) > y + \frac{\dm}{2}$, then each of the other two goods' true values must be $v(g_2^y) \leq y + \frac{\dm}{2}$ and $v(g^z) \leq z + \frac{\dm}{2}$ (otherwise, the error between the prediction and the true valuation is greater than $\dm$, a contradiction). Notice also that $z \leq p(A_1) = 1 - 2y$. Therefore, $v(B_1) \geq v(g_1^y) > y + \frac{\dm}{2}$, $v(\xset{B}_2) \leq z + \frac{\dm}{2} \leq 1 - 2y + \frac{\dm}{2}$, and this is an \praefx allocation with $a' \geq \frac{y + \dm/2}{1 - 2y + \dm/2} \geq a$, where the last inequality holds due to the aforementioned lower bound of $y$ and the fact that $\dm \leq \frac{1-a}{a(5+a)}$ for all $a \in (\varphi-1, 1)$. 

        \item [(II)] $B_1$ contains the $g^z$ good. If $v(g^z) \leq y + \frac{\dm}{2}$, then both $g_1^y, g_2^y$ have true value at most $y + \frac{\dm}{2}$ (otherwise, the algorithm would have allocated one of them to $B_1$, and $g^z$ would be placed at $B_2$). Then, $v(B_1) \geq v(g^z) \geq z - \dm \geq y - \dm$, $v(\xset{B}_2) \leq y + \frac{\dm}{2}$, and so, this is an \praefx allocation with $a' \geq \frac{y - \dm}{y + \dm/2} \geq a$, where the last inequality was shown to hold in the previous subcase. If $v(g^z) > y + \frac{\dm}{2}$, then $v(B_1) \geq v(g^z) - \dm \geq y - \frac{\dm}{2}$, and $v(\xset{B}_2) \leq y + \dm$. Therefore, this is an \praefx allocation with $a' \geq \frac{y - \dm/2}{y + \dm} \geq \frac{y - \dm}{y + \dm/2} \geq a$.
    \end{enumerate}

    Finally, notice that the above inequalities would hold even if $\dm = \frac{1-a}{3 + 2a}$, which is larger than $\dmax$ for all $a \in (\varphi-1, 1)$.

    \item[\underline{Form 3:}] $A_2 = \{ g_1^z, g^y \}$. Then $A_1$ contains at most one $g^z$ good, otherwise $p(A_1) > p(A_2)$, a contradiction. If it does not contain any $g^z$ good, then $A_1 = \{ g_1^y, g_2^y, \dots, g_\ell^y, g_1^x, g_2^x, \dots, g_k^x \}$ for some $\ell, k \geq 0$; then it must be $\ell \geq 2$, otherwise LPT would not give $g^y$ to $A_2$. Also, according to the LPT algorithm, $\ell y \geq z$. Furthermore, we know that $\ell y \leq p(A_1) < \pAoneUB$, and $z + y = p(A_2) > 1 - \pAoneUB$. All these, imply that $y \geq 1 - 2 \cdot \pAoneUB = \frac{2 + a + a^2}{(2 + a)(5 - a)}$. Therefore, $\ell \leq \frac{p(A_1)}{y} < \frac{4 + a - a^2}{2 + a + a^2} < 2$, for all $a \in (\varphi - 1, 1)$, a contradiction. So $A_1$ contains a $g^z$ good. Now notice that it cannot contain both a $g^z$ and a $g^y$ good, otherwise $p(A_1) \geq p(A_2) \geq \frac{1}{2} > \pAoneUB$, a contradiction. Therefore, it must be $A_1 = \{ g_2^z, g_1^x, g_2^x, \dots, g_k^x \}$ for some $k \geq 0$. 
    
    There are two cases, and notice that according to \cref{alg:id_2_ag}, all $g^x$ goods are given to agent 1. Case (i): $v(B_1) \geq v(B_2)$; then, agent 1 does not envy agent 2, and $B$ is an \praefx allocation for $a' \geq \frac{v(B_2)}{v(B_1)} \geq \frac{p(A_2) - \dm}{p(A_1) + \dm} \geq \frac{1/2 - \dm}{1/2 + \dm} \geq a$. Case (ii): $v(B_1) < v(B_2)$; which splits into three subcases. 
    \begin{enumerate}
        \item [(I)] The three largest goods of the instance come in the order $g^y, g^z, g^z$. Then, if $v(g^y) \leq z + \dm/2$, it is given to agent 2, and afterwards, one of the $g^z$ goods with true value at most $z + \dm/2$ is given to her too (at most one of $g^y, g^z, g^z$ can have value greater than $z + \dm/2$, otherwise the total error between predictions and true values exceeds $\dm$), while the other $g^z$ good is given to agent 1. Therefore, in this case we get $v(B_1) \geq z + \sum_{j \in [k]}p(g_j^x) - \dm$, and $v(\xset{B}_2) \leq z + \dm/2$. If $v(g^y) > z + \dm/2$, it is given to agent 1, and goods $g_1^z, g_2^z$ are given to agent 2 (and as noted above, both must have true value at most $z + \dm/2$). Therefore in this case we have $v(B_1) > (z + \dm/2) + \sum_{j \in [k]}p(g_j^x) - \dm = z + \sum_{j \in [k]}p(g_j^x) - \dm/2$ , and $v(B_2) \leq z + \dm/2$. So, $B$ is an \praefx allocation with $a' \geq \frac{z + \sum_{j \in [k]}p(g_j^x) - \dm}{z + \dm/2}$. If $z \geq 1/3$, then $a' \geq \frac{1 - \dm/z}{1 + \dm/(2z)} \geq \frac{1/3 - \dm}{1/3 + \dm/2} \geq a$, while if $z < 1/3$, note that $z + \sum_{j \in [k]}p(g_j^x) \geq y + \sum_{j \in [k]}p(g_j^x) = 1 - 2z$, so $a' \geq \frac{1 - 2z - \dm}{z + \dm/2} \geq \frac{1/3 - \dm}{1/3 + \dm/2} \geq a$.
    
        \item [(II)] The three largest goods of the instance come in the order $g^z, g^y, g^z$. If for the first $g^z$ good we have $v(g^z) \leq z + \dm/2$, it goes to agent 2. Afterwards, if $v(g^y) \leq z + \dm/2$, it will go to agent 2 and the remaining $g^z$ will go to agent 1, and if $v(g^y) > z + \dm/2$, the opposite allocation of these two goods will happen. If for the first $g^z$ good we have $v(g^z) > z + \dm/2$, it goes to agent 1 and the remaining $g^y, g^z$ goods go to agent 2, while both have true value at most $z + \dm/2$ as argued earlier. So from all the above cases, we will have $v(B_1) \geq z + \sum_{j \in [k]}p(g_j^x) - \dm$ and $v(\xset{B}_2) \leq z + \dm/2$. Therefore, $B$ is an \praefx allocation with $a' \geq \frac{z + \sum_{j \in [k]}p(g_j^x) - \dm}{z + \dm/2} \geq a$ for all $a \in (\varphi - 1, 1)$, where the last inequality comes from the same analysis as that of the final step in Subcase (I).
    
        \item [(III)] The three largest goods of the instance come in the order $g^z, g^z, g^y$. If for the first $g^z$ good we have $v(g^z) \leq z + \thres$, it goes to agent 2. Then, if the second $g^z$ good has true value $v(g^z) \leq z + \thres$, it goes to agent 2, and the remaining $g^y$ good goes to agent 1, in which case we have $v(B_1) = 1 - v(B_2) \geq 1 - 2 \left( z + \thres \right) = 1 - 2z - \frac{2(1 - a)^2}{(2 + a)(5 - a)}$, and $v(\xset{B}_2) \leq z + \thres$. So, in this case, $B$ is an \praefx allocation with $a' \geq \frac{1 - 2z - 2(1 - a)^2 / ((2 + a)(5 - a))}{z + (1 - a)^2 / ((2 + a)(5 - a))}  \geq a$ for all $a \in (\varphi - 1, 1)$, where the last inequality comes from the fact that $z \leq p(A_1) < \pAoneUB$. If the second $g^z$ good has true value $v(g^z) > z + \thres$ then it goes to agent 1, and the remaining $g^y$ good goes to agent 2, which implies $v(B_1) \geq z + \thres - \dm = z - \incr$, and $v(\xset{B}_2) \leq z + \left( \dm - \thres \right) = z + \incr$. If for the first $g^z$ good we have $v(g^z) > z + \thres$, then it goes to agent 1, and the remaining goods $g^z, g^y$ go to agent 2. This implies that $v(B_1) \geq z + \thres - \dm = z - \incr$, and $v(\xset{B}_2) \leq z + \left( \dm - \thres \right) = z + \incr$. Finally, from the normalization condition we have $\sum_{j \in [k]}p(g_j^x) = 1 - 2z - y \geq 1 - 3z$, and furthermore, $\sum_{j \in [k]}p(g_j^x) + z = p(A_1) < \pAoneUB$, which imply that $1 - 3z < \pAoneUB - z$, or equivalently, $z > \frac{3 + a}{(2 + a)(5 - a)}$. So, $B$ is an \praefx allocation with $a' \geq \frac{z - (3 + a)(1 - a)/((2 + a)(5 - a)(1 + a))}{z + (3 + a)(1 - a)/((2 + a)(5 - a)(1 + a))} > \frac{1 - (1-a)/(1+a)}{1 + (1-a)/(1+a)} = a$, for all $a \in (\varphi - 1, 1)$. 
    \end{enumerate}

    \item[\underline{Form 4:}] $A_2$ contains a $g^x$ good. There are the following cases.

    (i) $A_2 = \{ g^y, g^x \}$; then, $A_1$ must contain a $g^z$ good for the same reason as above. If it contains at least two $g^z$ goods or a $g^z$ and a $g^y$ good, then $p(A_1) > p(A_2)$, a contradiction. Therefore, it must be $A_1 = \{ g_2^z, g_1^x, g_2^x, \dots, g_k^x \}$ for some $k \geq 0$. The analysis of this case is omitted since it is similar to that of Form 2, with the only modification being that instead of a $g^y$ good in $A_2$ we have a $g^x$ good, where $p(g^x) \geq p(g_j^x)$ for all $j \in [k]$.
    
    (ii) $A_2 = \{ g^z, g^x \}$; then $A_1$ must contain a $g^y$ good for the same reason as above. If $A_1$ also contains a $g^z$ good, then $p(A_1) > p(A_2)$, a contradiction. Therefore, it must be $A_1 = \{ g_1^y, g_2^y, \dots, g_\ell^y, g_1^x, g_2^x, \dots, g_k^x \}$ for some $\ell \geq 1, k \geq 0$. If $\ell + k = 1$, then $k=0$ and the prediction has only three goods, and since $\dmax \leq \dmaxthree$ for all $a \in (\varphi - 1, 1)$, \cref{alg:3_goods} guarantees an \aefx allocation due to \cref{lem:3_goods}. If $\ell + k \geq 2$, we know that $A$ is an exact EFX allocation, so $p(A_1) \geq p(\xset{A}_2)$, which can be used to show that $p(g^x) = p(A_2) - p(\xset{A}_2) \geq p(A_2) - p(A_1) > \left( 1 - \pUB \right) - \pUB = 1 - 2 \cdot \pUB$. Let w.l.o.g., $p(g_1^x) \geq p(g_2^x) \geq \dots \geq p(g_k^x)$. There are two subcases.
    \begin{enumerate}
        \item [(I)] $p(g^x) > p(g_1^x)$; then $\ell \geq 2$, otherwise LPT would have allocated $g^x$ to $A_1$, since $y < z$. So in this case, $p(A_1) \geq 2y \geq 2 p(g^x)$, where the last inequality is by definition of $g^x$. 

        \item [(II)] There is a $j \in [k]$ such that $p(g^x) \leq p(g_{j}^x)$. So, we have $p(A_1) \geq y + p(j) \geq 2 p(g^x)$.
    \end{enumerate}
    
    In both subcases we have $p(A_1) \geq 2 p(g^x)$. Furthermore, notice that $2 \left( 1 - 2 \cdot \pUB \right) \geq \pUB $, since $\dm \leq \frac{2}{5} \frac{1-a}{1+a}$ for all $a \in (\varphi - 1, 1)$. Therefore, we have $ p(A_1) \geq 2 p(g^x) \geq 2 \left( 1 - 2 \cdot \pUB \right) \geq \pUB > p(A_1)$, a contradiction.

    (iii) $A_2 = \{ g_1^x, g_2^x \}$; then, $A_1$ must contain a $g^z$ and a $g^y$ good (by definition, there cannot be a $g^x$ without a $g^y$ and a $g^z$ good), therefore, $p(A_1) > p(A_2)$, a contradiction.
\end{enumerate}

Finally, notice that \cref{alg:id_2_ag} first reads the prediction, finds an exact EFX allocation $A$ using the LPT algorithm as a subroutine, which takes $O(T' \log T')$ time (see \cref{prop:LPT}), and categorizes $A$ to a particular Form in time $O(T')$ by finding the smallest valued bundle to determine $A_1$, and checking in which bundle each good is. Then, as each good arrives online, it decides its allocation by doing a constant number of basic operations, that is, updating the total value of each bundle and comparing the true value of the good with its predicted one.
\end{proof}

\begin{example}\label{ex:alg_better_than_LPT}
    As we saw in \cref{ex:LPT}, for prediction accuracy $0.945$, LPT can guarantee an $0.718$-EFX allocation on the true values. According to \cref{thm:with_predictions_2_ag_ternary_a-EFX_positive}, for the same accuracy, \cref{alg:id_2_ag} guarantees an $0.734$-EFX allocation on the true values. Conversely, the latter algorithm suffices to have accuracy $0.941$, in order to provide an $0.718$-EFX allocation.
\end{example}

\subsubsection{Results on 2-value predictions}

One can notice that, even though \cref{alg:id_2_ag} works for two agents with any additive, identical valuation, in essence, it only considers the three highest levels of value inside the prediction vector $p$. This means that, even if we were restricted to have a 3-value predictor, the proof of \cref{thm:with_predictions_2_ag_ternary_a-EFX_positive} would not give any better bounds. However, in the case where we have a 2-value predictor, the same proof yields improved results. It is important to note that, even though the prediction is a 2-value function, the true valuation is not restricted at all.

\begin{corollary}[Corollary of \cref{thm:with_predictions_2_ag_ternary_a-EFX_positive}]\label{cor:with_predictions_2_ag_binary_a-EFX_positive}
    Suppose we have $2$ agents with additive, identical, normalized valuations, with a provided 2-value prediction of accuracy $\eta \geq 1 - \dmaxbin$ for some given $a \in \left( \varphi - 1, 1 \right]$, that is, the error between the prediction and the true valuation is $1 - \eta \leq \dmaxbin$. Then, \cref{alg:id_2_ag} outputs an \aefx allocation, and performs a constant number of basic operations per time-step.
\end{corollary}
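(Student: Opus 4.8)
The plan is to re-run the proof of \cref{thm:with_predictions_2_ag_ternary_a-EFX_positive} with the enlarged error budget $\dm = \dmaxbin$, exploiting that a $2$-value prediction has an empty third value-group $G^x$. Running \cref{alg:id_2_ag} with this budget, the threshold separating the two preliminary regimes becomes $\pUB = \frac{1+a}{1-a}\dmaxbin = \frac{2}{5}$. Both preliminary results survive this change: \cref{prop:p(A_1)_large_enough} was shown to hold for every $\dm \le \tfrac12\cdot\tfrac{1-a}{1+a}$, so for $p(A_1) \ge \tfrac{2}{5}$ the LPT allocation $A$ is already \aefx; and \cref{prop:bound_card_A2} was shown to hold for every $\dm \le \dmaxbin$, so for $p(A_1) < \tfrac{2}{5}$ we still get $|A_2| \le 2$. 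The sub-cases $|A_2| \in \{0,1\}$ are handled verbatim as in the main proof (they tolerate $\dm \le \tfrac12\cdot\tfrac{1-a}{1+a} \ge \dmaxbin$), leaving only $|A_2| = 2$ with $p(A_1) \in [\tfrac13, \tfrac25)$.

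The key structural observation is that with only two distinct predicted values there are no $g^x$-type goods. Hence $A$ can never be classified into Form~4 (whose defining feature is a $g^x$ good in $A_2$), and in Forms~2 and~3 the $g^x$-tail is empty. A short counting check then shows that $2$-value Form~2 ($A_2 = \{g_1^y, g_2^y\}$, $A_1 = \{g^z\}$) and the reduced Form~3 ($A_2 = \{g_1^z, g^y\}$, $A_1 = \{g_2^z\}$) each account for exactly three goods across the two bundles, with no others possible; that is, both force $T' = 3$. I would therefore split on the predicted horizon. If $T' \le 3$, \cref{alg:id_2_ag} defers to \cref{alg:3_goods}, and since our accuracy satisfies $\eta \ge 1 - \dmaxbin \ge 1 - \dmaxthree$, \cref{lem:3_goods} guarantees an \aefx allocation; this covers Forms~2 and~3 entirely. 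If $T' \ge 4$, then Forms~2, 3, 4 are all excluded (the first two by the horizon, the last by $G^x = \emptyset$), so the only surviving two-element form is Form~1.

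For Form~1 I would reuse the Case~(i)/Case~(ii) computation of the main proof unchanged: it rests only on $z \le \tfrac13$ (which follows from the three $z$-goods summing to at most $1$, independently of the partition threshold) and was already noted to go through for $\dm = \dmaxformone$. A one-line check, $3(2+a) \le 5(1+a)$ for $a \ge \tfrac12$, gives $\dmaxbin \le \dmaxformone$, so Form~1 also tolerates the budget $\dmaxbin$. Collecting the tolerances across all cases -- $\tfrac12\cdot\tfrac{1-a}{1+a}$ for the large-$p(A_1)$ and small-$|A_2|$ cases, $\dmaxthree$ for Forms~2 and~3, and $\dmaxformone$ for Form~1 -- every case tolerates error up to at least $\dmaxbin$, and the single genuinely binding constraint is \cref{prop:bound_card_A2}, which pins the bound to exactly $\dmaxbin$. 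The per-step cost remains constant, since \cref{alg:3_goods} performs $O(1)$ work per good and the Form~1 routine is inherited unchanged.

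The main obstacle is that one cannot simply assert that the surviving forms inherit a larger tolerance once $G^x = \emptyset$: the Form~2 routine of the general proof is established only for $\dm \le \frac{1-a}{3+2a}$, and a direct comparison shows $\frac{1-a}{3+2a} < \dmaxbin$ throughout $(\varphi-1,1)$, with the failure realized by genuine $2$-value instances (those with $y$ slightly above $\frac{3+a}{(2+a)(5-a)} = \tfrac12(1-\pAoneUB)$). The whole point of the three-good reduction is to bypass this: recognizing that $2$-value Forms~2 and~3 force $T'=3$ lets us route them through \cref{alg:3_goods}, whose far larger tolerance $\dmaxthree$ absorbs the gap. Beyond this, the only remaining work is the routine algebra verifying $\dmaxbin \le \min\{\dmaxthree, \dmaxformone, \tfrac12\cdot\tfrac{1-a}{1+a}\}$ on $(\varphi-1,1)$, together with re-confirming the Form-exclusion arguments of \cref{thm:with_predictions_2_ag_ternary_a-EFX_positive} with the slightly enlarged threshold $\tfrac25$ in place of $\pAoneUB$.
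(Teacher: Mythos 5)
Your proposal is correct and follows essentially the same route as the paper's own proof: both exploit that a 2-value prediction has no $g^x$ goods (so Form~4 disappears and $k=0$ in Forms 1--3), route the resulting three-good Forms 2 and 3 through \cref{lem:3_goods} (noting $T'=3$ there, which \cref{alg:id_2_ag} handles via \cref{alg:3_goods}), keep Form~1 under its tolerance $\dmaxformone$, and take the minimum with the tolerances of \cref{prop:p(A_1)_large_enough} and \cref{prop:bound_card_A2}, which is pinned at $\dmaxbin$ by the latter. Your additional observation that the general Form-2 tolerance $\frac{1-a}{3+2a}$ lies strictly below $\dmaxbin$ on $(\varphi-1,1)$ --- so the three-good bypass is genuinely necessary rather than a convenience --- is a correct refinement that the paper leaves implicit.
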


\begin{proof}
    Observe that when the prediction is a 2-value function, the proof of \cref{thm:with_predictions_2_ag_ternary_a-EFX_positive} goes through for $\dm = \dmaxbin$. That is because \cref{prop:p(A_1)_large_enough} holds for any $\dm \leq \frac{1}{2} \cdot \frac{1-a}{1+a}$, \cref{prop:bound_card_A2} holds for any $\dm \leq \dmaxbin$, and, since there are no $g^x$ goods, $A$ can have either Form 1, Form 2, or Form 3, with $k=0$. The latter two forms involve only three predicted goods, so \cref{lem:3_goods} applies. Finally, Form 1 goes through for $\dm \leq \dmaxformone$. It turns out that the minimum of all the above bounds is $\dmaxbin$.
\end{proof}

It is natural that the accuracy upper and lower bounds improve in this case. However, notice that the latter improves disproportionately, indicating that for 2-value predictions, most likely there is an algorithm with better guarantees than \cref{alg:id_2_ag}.

\begin{theorem}\label{thm:with_predictions_2_ag_binary_a-EFX_inapprox_bound}
    Suppose we have $2$ agents with additive, identical, normalized valuations, with a provided 2-value prediction of accuracy $\eta < 1 - \frac{1-a}{2}$ for some given $a \in \left( \sqrt{3}-1, 1 \right]$, that is, the error between the prediction and the true valuation is $1 - \eta > \frac{1-a}{2}$. Then, there is no algorithm that guarantees an \aefx allocation, even when $T'=T=4$.
\end{theorem}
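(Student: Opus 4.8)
The plan is to adapt the adversarial construction of \cref{thm:with_predictions_comb-beyond-phi-EFX_inapprox_bound} to a $2$-value prediction, accepting a weaker threshold. Fix $a \in (\sqrt 3 - 1, 1]$ and an algorithm promised that the error is at most $D > \frac{1-a}{2}$. I would use the $2$-value prediction $p(g_1) = p(g_2) = c$ and $p(g_3) = p(g_4) = \frac12 - c$, which is genuinely $2$-value since $c < \frac14 < \frac12 - c$, and pick $c$ in the interval $(\frac{1-a}{2}, \frac{a}{2(2+a)})$. This interval is non-empty exactly when $(1-a)(2+a) < a$, i.e.\ when $a > \sqrt3 - 1$, which is precisely where the hypothesis on $a$ enters. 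The adversary reveals $v(g_1) = v(g_2) = c$ and then branches on how the algorithm splits these two small goods between the agents, so that whatever the algorithm does it fails to output an \aefx allocation.

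In the first branch, the algorithm places $g_1, g_2$ on the same agent (w.l.o.g.\ agent $1$, who now has value $2c$). The adversary reveals $v(g_3) = v(g_4) = \frac12 - c$, so the realized true valuation equals $p$ and the instance has error $0$. I would then rule out every placement of the last two goods: giving both to agent $2$ makes agent $1$ (value $2c$) EFX-envy, and $2c < a(\frac12 - c)$ holds exactly when $c < \frac{a}{2(2+a)}$; splitting them leaves the single-good agent $2$ with value $\frac12 - c$ facing a bundle worth $\frac12$ after its least-valued good is removed, and $\frac12 - c < \frac{a}{2}$ holds exactly when $c > \frac{1-a}{2}$; giving both to agent $1$ leaves agent $2$ at value $0$. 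Thus the chosen $c$-window makes all placements fail.

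In the second branch, the algorithm places $g_1, g_2$ on different agents (each of value $c$). The adversary now reveals the skewed pair $v(g_3) = \frac12 - c - m$, $v(g_4) = \frac12 - c + m$, whose TV-distance from $p$ is exactly $m$. The binding sub-case is the balanced split, where the poorer agent has value $\frac12 - m$ while the richer agent's bundle is worth $\frac12 - c + m$ even after deleting its least-valued good; the envy inequality rearranges to $m > \frac{(1-a) + 2ac}{2(1+a)}$, and as $c \downarrow \frac{1-a}{2}$ this bound tends to $\frac{(1-a)(1+a)}{2(1+a)} = \frac{1-a}{2}$. The sub-cases that put both big goods on one agent fail for the chosen small $c$. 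Since $D > \frac{1-a}{2}$, I can first fix $c$ slightly above $\frac{1-a}{2}$ so that $\frac{(1-a)+2ac}{2(1+a)} < D$, then choose $m$ in $(\frac{(1-a)+2ac}{2(1+a)}, D]$; this keeps the instance error at most $D$ while forcing failure in this branch as well.

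The crux is that one fixed prediction must trap both branches at once: the equal-reveal (error $0$) punishes clumping only for $c > \frac{1-a}{2}$, whereas the skewed reveal punishes splitting only for $m$ above a bound that itself decreases to $\frac{1-a}{2}$ as $c \to \frac{1-a}{2}$, and these must be reconciled with the same $c$ inside the accuracy budget. This is exactly why the result is phrased for $2$-value (rather than $1$-value) predictions: predicting the two large goods as large makes the clumping branch cost no error, while an all-equal prediction would incur error $\Theta(a - \frac12)$ there and give a strictly worse threshold. The remaining feasibility checks — non-emptiness of the $c$-window (equivalently $a > \sqrt3 - 1$) and $m < \frac12 - c$ so that $v(g_3) \ge 0$ (which holds since $m \approx \frac{1-a}{2} < \frac{a}{2} \approx \frac12 - c$ for $a > \tfrac12$) — are routine.
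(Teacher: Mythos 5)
Your proposal is correct and follows essentially the same adversarial construction as the paper: the identical $2$-value prediction $p(g_1)=p(g_2)=c$, $p(g_3)=p(g_4)=\frac{1}{2}-c$ with $c$ in the window $\left(\frac{1-a}{2},\frac{a}{2(2+a)}\right)$, the same clump/split branching on the two small goods, an error-free equal reveal in the clump branch, and a skewed reveal in the split branch. Your only deviation is decoupling the skew $m$ from $c$ (the paper reveals $v(g_3)=\frac{1}{2}-2\varepsilon$, $v(g_4)=\frac{1}{2}$, i.e., the special case $m=c=\varepsilon$, where your condition $m>\frac{(1-a)+2ac}{2(1+a)}$ reduces exactly to $\varepsilon>\frac{1-a}{2}$), a harmless mild generalization rather than a different route.
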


\begin{algorithm}[H]
\caption{Two agents with additive, identical valuations: Computing an \aefx for $a \in (\varphi - 1, 1]$ when prediction accuracy is at least $1 - \dmax$}\label{alg:id_2_ag}  
	\begin{algorithmic}[1]
		\REQUIRE{A prediction vector $(p(g_t))_{t \in [T']}$.}
		\ENSURE{An \aefx allocation $B$.}
		
		\medskip

        \STATE{$(A_1, A_2) \gets (\emptyset, \emptyset)$}
        \STATE{$(B_1, B_2) \gets (\emptyset, \emptyset)$}
        \STATE{$\dm \gets \dmax$}
        \STATE{$z \gets$ \text{ highest value in $p$}, \quad $y \gets$ \text{ $2^{nd}$ highest value in $p$}}
        \STATE{$g^z \in \{ g ~|~ p(g)=z \}$, $g^y \in \{ g ~|~ p(g)=y \}$, $g^x \in \{ g ~|~ p(g)<y \}$}
        
        \medskip
		
		\IF{$T' \leq 3$} 
                \STATE{$(B_1, B_2) \gets $\text{ Output of \cref{alg:3_goods}}}
                    \COMMENT{\cref{lem:3_goods} applies}
                \STATE{\textbf{go to} line \ref{alg:line_return}}
                    
		\ENDIF
        
        \medskip

        \STATE{$(A_1, A_2) \gets $\text{ Output of} \cref{alg:LPT} with input $p$ } 
            \COMMENT{$A$ is exact EFX allocation based on $p$}

        \medskip 
        
		\IF{$p(A_1) \geq \pAoneUB$ \OR $a = 1$} 
			\STATE{$(B_1, B_2) \gets (A_1, A_2)$}
                \COMMENT{\cref{prop:p(A_1)_large_enough} applies, or $a = 1$ (therefore prediction accuracy is $1$)}
            % \STATE{Pick $i* \in \arg \min_{i \in [2]} v(B_{i})$}
            \FOR{$t \in \{ T'+1, T'+2, \dots, T \}$}
                \STATE{$B_{1} \gets B_{1} \cup \{ g_t \}$} 
                \COMMENT{All goods $g_t$ with $t > T'$ (if any) get arbitrarily allocated to agent 1}
                % \COMMENT{All goods $g_t$ with $t > T'$ (if any) get allocated to the unenvied agent}
            \ENDFOR
        \medskip
        
		\ELSE
                \IF{Form 1: $A_2 = \{ g_1^z, g_2^z \}$, $A_1 = \{ g_3^z, g_1^y, g_2^y, \dots, g_\ell^y, g_1^x, g_2^x, \dots, g_k^x \}$ for $\ell + k \geq 1$}
				\IF{$g_t = g^z$ with $v(g_t) \leq z + \frac{\dm}{2}$ \AND $|A_2| \leq 1$} 
					\STATE{$B_2 \gets B_2 \cup \{ g_t \}$}
                \ELSE
                    \STATE{$B_1 \gets B_1 \cup \{ g_t \}$}
				\ENDIF
			\ENDIF
            
            \medskip
            
			\IF{Form 2 or Form 4: $A_2 = \{ g_1^y, g^* \}$, where $g^* \in \{ g_2^y, g^x \}$, $A_1 = \{ g^z, g_1^x, g_2^x, \dots, g_k^x \}$, for $k \geq 1$}
                    \IF{$g_t \in \{g_1^y, g^*, g^z\}$ with ( $ v(g_t) \leq y + \frac{\dm}{2}$ \AND $|A_2| \leq 1 $ )  \OR $ | \{g_1^y, g^*, g^z\} \cap A_1 | \geq 1$} 
						\STATE{$B_2 \gets B_2 \cup \{ g_t \}$}
                    \ELSE
                        \STATE{$B_1 \gets B_1 \cup \{ g_t \}$}
                    \ENDIF
			\ENDIF 
            
            \medskip
            
			\IF{Form 3: $A_2 = \{ g_1^z, g_1^y \}$, $A_1 = \{ g_2^z, g_1^x, g_2^x, \dots, g_k^x \}$ for $k \geq 1$}
				\IF{in $p$, good $g^y$ does not arrive after both $g^z$ goods}
					\IF{$g_t \in \{g_1^y, g_1^z, g_2^z\}$ with $v(g_t) \leq z + \frac{\dm}{2}$ \AND $|A_2| \leq 1$} 
						\STATE{$B_2 \gets B_2 \cup \{ g_t \}$}
                    \ELSE
                        \STATE{$B_1 \gets B_1 \cup \{ g_t \}$}
					\ENDIF
				\ELSE
                    \IF{$g_t \in \{g_1^y, g_1^z, g_2^z\}$ with ( $v(g_t) \leq z + \thres$ \AND $|A_2| \leq 1$ ) \OR $| \{ g_1^y, g_1^z, g_2^z\} \cap A_1 | \geq 1$}
                        \STATE{$B_2 \gets B_2 \cup \{ g_t \}$}
                    \ELSE
                        \STATE{$B_1 \gets B_1 \cup \{ g_t \}$}
                    \ENDIF
				\ENDIF
            \ENDIF
		\ENDIF
        
        \medskip

        \RETURN{$(B_1, B_2)$}\label{alg:line_return}
	\end{algorithmic}
\end{algorithm}

\begin{proof}
    For the sake of contradiction, suppose there is an algorithm that for some $a \in (\sqrt{3}-1, 1]$ and accuracy $\eta < 1 - \frac{1-a}{2}$, guarantees an \aefx allocation. The adversary fixes a rational $\varepsilon \in \left( \frac{1-a}{2}, \frac{a}{2(2+a)} \right)$, and notice that this interval is non-empty only for $a \in (\sqrt{3}-, 1]$. Let the number of rounds be $T := 4$, which is known by the algorithm (since it is given the prediction itself). 
        
    The adversary gives to the algorithm the prediction: $p(g_1) = p(g_2) = \varepsilon$, $p(g_3) = p(g_4) = \frac{1}{2} - \varepsilon$. 
    The adversary reveals the true values $v(g_1) = v(g_2) = \varepsilon$.    
    There are two cases:

    (i) $g_1, g_2$ get allocated to the same agent, w.l.o.g. agent 1. Then, the adversary reveals the true values $v(g_3) = v(g_4) = \frac{1}{2} - \varepsilon$. if both $g_3, g_4$ get allocated to agent 1, agent 2 is $0$-EFX envious towards him, therefore this is not an \aefx allocation as requested. If both $g_3, g_4$ get allocated to agent 2, then this is an \praefx allocation with $a' = \frac{2 \varepsilon}{1/2 - \varepsilon} < a$, where the inequality comes from the fact that $\varepsilon < \frac{a}{2(2+a)}$. If, w.l.o.g., $g_3$ goes to agent 1 and $g_4$ goes to agent 2, then this is an \praefx allocation with $a' = \frac{1/2 - \varepsilon}{1/2} < a$, where the inequality is due to the fact that $\varepsilon > \frac{1-a}{2}$.

    (ii) W.l.o.g., $g_1$ gets allocated to agent 1, and $g_2$ gets allocated to agent 2. Then the adversary reveals the true values $v(g_3) = \frac{1}{2} - 2 \varepsilon$, $v(g_4) = \frac{1}{2}$. If $g_3, g_4$ get allocated to the same agent, then this is an \praefx allocation with $a' = \frac{\varepsilon}{1 - 2 \varepsilon} \leq \frac{2 \varepsilon}{1/2 - \varepsilon} < a$, where the last inequality comes from the fact that $\varepsilon < \frac{a}{2(2+a)}$. If, w.l.o.g., $g_3$ goes to agent 1 and $g_4$ goes to agent 2, then this is an \praefx allocation with $a' = \frac{1/2 - \varepsilon}{1/2} < a$, where the inequality is due to the fact that $\varepsilon > \frac{1-a}{2}$.

    Therefore, it is not possible for the algorithm to output an \aefx allocation. Also, notice that the TV distance between the prediction and the adversary is $\varepsilon$ which can take any value in $\left( \frac{1-a}{2}, \frac{a}{2(2+a)} \right)$.
\end{proof}

For more than 2 agents, we show the following lower bound on the accuracy.

\begin{theorem}\label{thm:with_predictions_n_ag_binary_a-EFX_inapprox_comb}
    Suppose we have $n \geq 3$ agents with additive, identical, normalized valuations, with a provided 2-value prediction of accuracy $\eta < 1 - \frac{2(1 - a^2)}{4 + (2n-3)a}$ for some given $a \in \left( 0, 1 \right]$, that is, the error between the prediction and the true valuation is $1 - \eta > \frac{2(1 - a^2)}{4 + (2n-3)a}$. Then, there is no algorithm that guarantees an \aefx allocation, even when $T' = T = 2n-1$.
\end{theorem}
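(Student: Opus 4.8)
The plan is to recycle the adaptive construction of \cref{lem:with_predictions_n_ag_ternary_a-EFX_inapprox} essentially verbatim, changing only the prediction so that it becomes a $2$-value function. This single change leaves every EFX-envy computation untouched, but it exactly doubles the error incurred by the instance, which is precisely where the factor of $2$ in the stated threshold comes from.

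Concretely, I would keep $T' = T = 2n-1$, fix a rational $k \in \left(0, \frac{a}{4+(2n-3)a}\right)$, and declare the first $2n-3$ goods (the ``$k$-goods'') to have predicted value $k$. The only structural modification is that the two remaining goods now receive the \emph{same} predicted value, $p(g_{2n-2}) = p(g_{2n-1}) = \frac{1-(2n-3)k}{2}$, instead of the offset values $\frac{1-(2n-3)k}{2}\mp\varepsilon$ used in the lemma. Since $k < \frac{a}{4+(2n-3)a} < \frac{1}{2n-1}$ for every $a \in (0,1]$, we have $k \neq \frac{1-(2n-3)k}{2}$, so the prediction genuinely takes two distinct values; note also that the true valuation remains completely unrestricted, exactly as the statement allows.

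The true values would be revealed exactly as in the lemma: the $k$-goods are assigned true value $k$, the algorithm's online allocation of them is observed, and the three cases are distinguished by how many agents end up with no $k$-good (exactly one, at least two, or none). In each case the adversary reveals the identical true values for the last two goods as in \cref{lem:with_predictions_n_ag_ternary_a-EFX_inapprox} -- namely $\frac{1-(2n-3)k}{2}$ each in case (i), and $\frac{1-(2n-3)k}{2}\mp 2\varepsilon$ in cases (ii) and (iii) for a rational $\varepsilon$ chosen just above $\frac{1-a^2}{4+(2n-3)a}$. Because the EFX-envy arguments there depend only on these true values together with the bounds $k < \frac{a}{4+(2n-3)a}$ and $\varepsilon > \frac{1-a^2}{4+(2n-3)a}$, every inequality of the form $a' < a$ carries over word for word, and the algorithm again fails to output an \aefx allocation in every branch.

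The one place the argument genuinely differs -- and the crux I would emphasize -- is the error accounting. In the lemma the $\pm\varepsilon$ offset in the prediction pre-absorbed half of the $2\varepsilon$ perturbation, leaving a \tvd distance of exactly $\varepsilon$ in all cases. Here there is no offset, so the $k$-goods contribute nothing and the distance in cases (ii) and (iii) becomes $\frac{1}{2}\left(\lvert{-2\varepsilon}\rvert + \lvert 2\varepsilon\rvert\right) = 2\varepsilon$, while case (i) now costs error $0$. Hence the worst-case error of the instance is $2\varepsilon$, which approaches $\frac{2(1-a^2)}{4+(2n-3)a}$ from above as $\varepsilon \to \left(\frac{1-a^2}{4+(2n-3)a}\right)^{+}$. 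Given any algorithm with error budget $D = 1-\eta > \frac{2(1-a^2)}{4+(2n-3)a}$, I would pick $\varepsilon$ in the nonempty interval $\left(\frac{1-a^2}{4+(2n-3)a}, \frac{D}{2}\right]$, so that the defeating instance has error $2\varepsilon \le D$ and the contradiction follows. The main obstacle is therefore not any new envy analysis but verifying this doubling rigorously: confirming that collapsing the prediction to a single big value forces the \emph{entire} $2\varepsilon$ discrepancy into the \tvd distance in cases (ii),(iii), and that pushing $\varepsilon$ toward its lower limit keeps $\varepsilon \le \frac{1}{4+(2n-3)a}$ (so the construction stays valid and all envy inequalities hold) while simultaneously placing the error just above the doubled threshold.
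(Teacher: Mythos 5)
Your proposal is correct and takes essentially the same approach as the paper's own proof: the paper likewise keeps $T'=T=2n-1$, flattens the last two predicted values to the common value $\frac{1-(2n-3)k}{2}$, and reuses the case analysis of \cref{lem:with_predictions_n_ag_ternary_a-EFX_inapprox} verbatim, merely reparameterizing the perturbation as $\mp\varepsilon$ with $\varepsilon \in \left(\frac{2(1-a^2)}{4+(2n-3)a}, \frac{2}{4+(2n-3)a}\right]$, which is exactly your $2\varepsilon$. Your explicit accounting that removing the prediction's $\pm\varepsilon$ offset forces the full perturbation into the \tvd distance (worst-case error $2\varepsilon$ in cases (ii)--(iii), $0$ in case (i)) is precisely where the doubled threshold comes from in the paper as well.
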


\begin{proof}
     The proof is almost identical to that of \cref{lem:with_predictions_n_ag_ternary_a-EFX_inapprox}, and it is omitted to avoid repetition. Consider an $a \in (0,1]$ and an algorithm with prediction accuracy $\eta < 1 - \frac{2(1-a^2)}{4 + (2n-3)a}$. The number of goods and the case analysis are the same as those of the aforementioned result. What changes is the prediction given by the adversary, as well as the true values revealed in each case. In particular, the adversary fixes rational values $k \in \left( 0, \frac{a}{4 + (2n-3)a} \right)$ and $\varepsilon \in \left( \frac{2(1-a^2)}{4 + (2n-3)a}, \frac{2}{4 + (2n-3)a} \right]$ and gives the prediction $p(g_1) = p(g_2) = \dots = p(g_{2n-3}) = k$, $p(g_{2n-2}) = p(g_{2n-1}) = \frac{1-(2n-3)k}{2}$. The adversary reveals the true value of the first $2n-3$ goods (called ``$k$-goods''), namely, $v(g_1) = v(g_2) = \dots = v(g_{2n-3}) = k$. After time-step $t=2n-3$, in case (i) it reveals true values $v(g_{2n-2}) = v(g_{2n-1}) = \frac{1-(2n-3)k}{2}$, and in cases (ii) and (iii) it reveals $v(g_{2n-2}) = \frac{1-(2n-3)k}{2} - \varepsilon$, $v(g_{2n-1}) = \frac{1-(2n-3)k}{2} + \varepsilon$. 

    For the specified domains of $k$ and $\varepsilon$, using the same analysis as that of \cref{lem:with_predictions_n_ag_ternary_a-EFX_inapprox}, we can show that no matter where the algorithm allocates the goods, it can never produce an \aefx allocation. Notice here that the error between the prediction and the true valuation in all the above cases is $\varepsilon$, which can take any value in $\left( \frac{2(1-a^2)}{4 + (2n-3)a}, \frac{2}{4 + (2n-3)a} \right]$. 
\end{proof}

\section{Conclusions and open problems}

We study the problem of computing (approximate) EFX allocations when goods of unknown value arrive over time to a given set of agents. We show inapproximability results even in cases where exact EFX allocations in the offline setting are easy to compute. Our negative results are unconditional, caused by the incomplete information of the setting. To break this information-related barrier, we consider algorithms enhanced with predictions and derive lower and upper bounds (i.e., necessary and sufficient values) of the accuracy levels that guarantee an $a$-EFX allocation for any given $a \in [0,1]$. We then provide an efficient algorithm that carefully uses the predictions and the observed values to bridge the bound-gap.

There are several open questions stemming from this work. First, for the simplest case of 2 agents with additive, identical valuations, the gap should be closed either by finding a better upper bound than that of \cref{thm:with_predictions_2_ag_ternary_a-EFX_positive}, or improving the lower bound of \cref{thm:with_predictions_comb-beyond-phi-EFX_inapprox_bound}. Also, future work should aim to close the corresponding gap for more than 2 agents, and do the same for the general additive valuations case. Additionally, the family of algorithms that do not have access to the prediction accuracy deserve further investigation. Finally, it would be interesting to study this model when, instead of goods, chores or a mixture of goods and chores are considered.

\bigskip
\subsection*{Acknowledgements}
This work has been partially supported by project MIS 5154714 of the National Recovery and Resilience Plan Greece 2.0 funded by the European Union under the NextGenerationEU Program.

\bibliographystyle{alpha}
\bibliography{bibliography}

\end{document}